\def\extendedversion{1} %
\newcommand{\extver}[2]{%
  \ifx\extendedversion\undefined%
	#1%
  \else%
    #2%
  \fi%
}
\newcommand{\thmspace}{\vspace{4pt plus 2pt minus 2pt}}
\newcommand{\maketextstyle}{\textstyle} %
\def\expandafter\normalsize\expandafter{%
    \normalsize
    \setlength\abovedisplayskip{4pt}
    \setlength\belowdisplayskip{4pt}
    \setlength\abovedisplayshortskip{4pt}
    \setlength\belowdisplayshortskip{4pt}
}
\begin{document}

\begin{frontmatter}

\title{Analysis of Non-Square Nonlinear MIMO Systems using \\ Scaled Relative Graphs} 
\extver{\vspace{-5mm}}{\subtitle{\small (Extended version)}}

\author[tue]{Julius P.~J. Krebbekx}\ead{j.p.j.krebbekx@tue.nl},
\author[tue,hungary]{Roland T\'oth}\ead{r.toth@tue.nl},
\author[tue]{Amritam Das}\ead{am.das@tue.nl}

\address[tue]{Control Systems group, Department of Electrical Engineering,  Eindhoven University of Technology, The Netherlands}                                          
\address[hungary]{Systems and Control Lab, HUN-REN Institute for Computer Science and Control, Budapest, Hungary}

\begin{keyword} %
    Scaled Relative Graph, Linear Fractional Representation, Nonlinear systems, Linear/nonlinear models, Application of nonlinear analysis and design
\end{keyword}

\begin{abstract}                   
    \emph{Scaled Relative Graphs} (SRGs) provide a novel graphical frequency-domain method for the analysis of nonlinear systems. There have been recent efforts to generalize SRG analysis to \emph{Multiple-Input Multiple-Output} (MIMO) systems. However, these attempts yielded only results for square systems, due to the inherent Hilbert space structure of the SRG. In this paper, we develop an SRG analysis method that accommodates non-square operators. The key element is the embedding of operators to a space of operators acting on a common Hilbert space, while restricting the input space to the original input dimension, to avoid conservatism. We generalize SRG interconnection rules to restricted input spaces and develop stability theorems to guarantee causality, well-posedness and (incremental) $L_2$-gain bounds for the overall interconnection. We show utilization of the proposed theoretical concepts on the analysis of nonlinear systems in a \emph{Linear Fractional Representation} (LFR) form, which is a rather general class of systems, and the LFR is directly utilizable for control. Moreover, we provide formulas for the computation of MIMO SRGs of stable LTI operators and diagonal and non-square static nonlinear operators. Finally, we demonstrate the advantages of our embedding approach on several examples.
\end{abstract}

\end{frontmatter}

\section{Introduction}\label{sec:introduction} 

Graphical methods such as the Nyquist and Bode diagrams are foundational to control engineering, providing intuitive, frequency-domain criteria for stability and performance and enabling practical design tools such as loop shaping and autotuning. Gain and phase margins derived from Nyquist analysis are also foundational to modern robust control. However, these tools are restricted to \emph{Linear Time-Invariant} (LTI) systems, and extending them to multivariable nonlinear systems remains an open challenge.

The \emph{Scaled Relative Graph} (SRG)~\cite{ryuScaledRelativeGraphs2022} was recently proposed as a novel graphical framework for analyzing nonlinear \emph{Single-Input Single-Output} (SISO) systems~\cite{chaffeyGraphicalNonlinearSystem2023}. The SRG offers a non-approximative method yielding sufficient conditions for stability and upper bounds on the (incremental) $L_2$-gain, a key performance metric in practice. It is modular, allowing interconnections to be analyzed by composing the SRGs of subsystems. The SRG recovers classical results such as the small-gain theorem~\cite{chaffeyGraphicalNonlinearSystem2023} and generalizes the circle criterion~\cite{krebbekxScaledRelativeGraph2025}, due to its close relation to the Nyquist diagram. It also enables frequency-dependent gain bounds, forming the basis of a nonlinear Bode diagram and bandwidth definition~\cite{krebbekxNonlinearBandwidthBode2025a}, and accommodates for the LTI notions of phase lead/lag~\cite{eijndenPhaseScaledGraphs2025}. Apart from theoretical developments, SRGs have proven effective in applications such as reset control analysis~\cite{grootDissipativityFrameworkConstructing2025} and design~\cite{krebbekxResetControllerAnalysis2025}, and circuit modeling~\cite{quanScaledRelativeGraphs2024}, to name a few.

Recently, SRG tools have been used to study \emph{Multiple-Input Multiple-Output} (MIMO) systems~\cite{baron-pradaDecentralizedStabilityConditions2025,chenGraphicalDominanceAnalysis2025,chenSoftHardScaled2025,grootExploitingStructureMIMO2025}. However, due to the inherent Hilbert space structure of the SRG, all these approaches focus on interconnections of \emph{square}\footnote{Systems with the same amount of inputs as outputs.} operators only. When operators are not square, the Hilbert space of inputs is not the same as the space of outputs, and naively adding inputs/outputs to make the system square leads to conservatism as in standard squaring up approaches such as~\cite{misraComputationalAlgorithmSquaringup1992}. 

The key idea of this paper is to embed all operators in an interconnection to the same space of operators on a Hilbert space and carry out the SRG interconnections in a common Hilbert space. Conservatism due to the additional input/output dimensions is removed by restricting the space of input signals to the original input dimension. We show that such restriction can result in a passive plant or stable feedback loop, while naive additional input dimensions for the same plant can destroy passivity or fail to result in a stable feedback loop. To use SRG calculus on these embedded operators, we generalize the SRG calculus from~\cite{ryuScaledRelativeGraphs2022} to restricted input spaces\extver{.}{, both for incremental and non-incremental analysis.} Using these interconnection rules, we provide stability theorems for nonlinear systems that guarantee well-posedness, causality and incremental $L_2$-gain bounds, which is achieved by generalizing the (incremental) homotopy results from~\cite{chaffeyHomotopyTheoremIncremental2025} to the non-square MIMO setting. For stable LTI systems, which need not be square, we provide a computationally tractable algorithm to exactly compute its SRG, where the restricted input space for tall operators is respected. We also provide bounds for the SRG of several diagonal and non-square static nonlinear blocks. Throughout the paper, systems in \emph{Linear Fractional Representation} (LFR) form are considered\footnote{LFRs are representations composed of an LTI block feedback connected to a (diagonal) block of static nonlinearities.}, as they are highly useful in control design and they are often used to design non-square feedback systems.

\extver{
    We demonstrate our methods on two examples; a Lur'e system with non-square components and a SISO system with multiple nonlinearities. The first example demonstrates the reduced conservatism compared to naive squaring up due to our embedding approach, and the second example compares our methods with the approach proposed in~\cite{krebbekxScaledRelativeGraph2025}.
}{
    We demonstrate our methods on four examples; a Lur'e system with non-square components, a SISO system with multiple nonlinearities from~\cite{krebbekxScaledRelativeGraph2025}, a MIMO mass-spring-damper system with multiple nonlinear springs, and the example in~\cite{grootExploitingStructureMIMO2025}. The first example demonstrates the reduced conservatism compared to naive squaring due to our embedding approach, the second example compares our methods with the approach proposed in~\cite{krebbekxScaledRelativeGraph2025} and the last example compares our methods with the approach proposed in~\cite{grootExploitingStructureMIMO2025}.
}

The paper is structured as follows\footnote{The approaches developed in this work for the computation of \emph{SRG based analysis of non-square
MIMO nonlinear systems} are publicly available as a software toolbox in Julia: \href{https://github.com/Krebbekx/SrgTools.jl}{\texttt{github.com/Krebbekx/SrgTools.jl}}.}. In Section~\ref{sec:preliminaries}, we present the required preliminaries that lay out the mathematical setting. The SRG of MIMO operators is defined in Section~\ref{sec:mimo_srg_def}, where the embedding procedure is detailed. Next, we state and prove our stability theorems for system analysis in Section~\ref{sec:feedback_systems}, which includes a practical theorem for systems in LFR form. We provide formulas to bound the MIMO SRG of stable LTI operators and common static nonlinear operators in Section~\ref{sec:computing_mimo_srgs} and finally demonstrate our methods on \extver{two}{several} examples in Section~\ref{sec:examples} and present our conclusions in Section~\ref{sec:conclusion}.

\section{Preliminaries}\label{sec:preliminaries}

\subsection{Notation and Conventions}

Let $\N$ ($\N^+$) denote the (nonzero) natural numbers, and $\R, \C$ denote the real and complex number fields, respectively, with $\R_{\geq 0} = [0, \infty)$, $j$ being the imaginary unit, and $\bar{z}$ being the complex conjugate of $z \in \C$. Let $\C_\infty := \C \cup \{ \infty \}$ denote the extended complex plane. \extver{For sets $A, B \subseteq \C$, $A+B$ and $AB$ are the Minkowski sum and product.}{For sets $A, B \subseteq \C$, the sum and product sets are defined as $A+B:= \{ a+b \mid a\in A, b\in B\}$ and $AB:= \{ ab \mid a\in A, b\in B\}$, respectively.} A closed disk in $\C$ is denoted by $D_r(x) = \{ z \in \C \mid |z-x| \leq r \}$, while $D_{[\alpha, \beta]}$ is a complex disk in $\C$ centered on $\R$, intersecting $\R$ on $[\alpha, \beta]$. The radius of a set $\mathcal{C} \subseteq \C$ is defined as $\rmin(\mathcal{C}) := \inf_{r>0} : \mathcal{C} \subseteq D_r(0)$. The distance between two sets $\mathcal{C}_1,\mathcal{C}_2 \subseteq \C_\infty$ is defined as $\dist(\mathcal{C}_1,\mathcal{C}_2) := \inf_{z_1 \in \mathcal{C}_1, z_2 \in \mathcal{C}_2} |z_1-z_2|$, where $|\infty-\infty|:=0$. We denote a transfer function of a state-space realization $(A,B,C,D)$ as $G(s) =C(s I  - A)^{-1}B+D=  \scalebox{0.6}{$\left[\begin{array}{c|c} A&B\\ \hline C&D \end{array} \right]$}$ and $RH_\infty^{q \times p}$ as the space of proper and stable real rational $q \times p$ transfer matrices.

\subsection{Relations and Operators}

A relation $R : X \to Y$ is a possibly multi-valued map, defined by $Rx \subseteq Y$ for all $x \in X = : \dom(R)$, and the range is defined as $\ran(R) := \{ y\in Y \mid \exists x \in X : y \in Rx \} \subseteq Y$.\extver{}{ The graph of a relation $R$ is the set $\{ (x,y) \in X \times Y \mid x \in X, y \in Rx \}$. Given the sets $X,Y,Z$ and relations $R, S :X\to Y$ and $T : Y \to Z$, the inverse $R^{-1}$, sum $R+S$ and product $TR$ are defined as
\begin{subequations}\label{eq:relations}
\begin{align}
    R^{-1} &= \{ (y,x) \mid (x,y) \in R \}, \label{eq:relations-inverse}\\
    R+S &= \{ (x,y+z) \mid (x,y) \in R, (x,z) \in S \}, \label{eq:relations-sum}\\
    T R &= \{ (x,z) \mid \exists y:(x,y) \in R, \, (y,z)\in T \}. \label{eq:relations-product}
\end{align}
\end{subequations}} %
A single-valued relation is called an operator, where $Rx \in Y$ is understood. We denote the set of operators from $X$ to $Y$ as 
\begin{equation}\label{eq:set-of-operators}
    \mathcal{N}(X,Y):= \{ R : X \to Y \mid \forall x \in X, \, Rx \in Y \},
\end{equation}
where $\mathcal{N}(X,X) =: \mathcal{N}(X)$. %
The identity operator on a space $X$ is defined as $I_X x = x, \, \forall x \in X$. If $R \in \mathcal{N}(X,Y)$ is injective, then $R^{-1} \in \mathcal{N}(\ran(R),X)$ such that $R^{-1} R = I_X$ and $R R^{-1} = I_{\ran(R)}$. If $R$ is not injective, then $R^{-1}$ is multivalued.

For an operator $R: X \to Y$, where $(X,\norm{\cdot}_X)$ and $(Y,\norm{\cdot}_Y)$ are normed spaces, we define the induced incremental norm as (similar to the notation in~\cite{vanderschaftL2GainPassivityTechniques2017})
\begin{equation}\label{eq:incremental_induced_norm}
    \maketextstyle \Gamma(R):= \sup_{x_1,x_2 \in X} \frac{\norm{R x_1-R x_2}_Y}{\norm{x_1-x_2}_X} \in [0,\infty].
\end{equation}
We define the induced non-incremental norm as
\begin{equation}\label{eq:non_incremental_induced_norm}
    \maketextstyle \gamma(R):= \sup_{x \in X} \frac{\norm{R x}_Y}{\norm{x}_X} \in [0,\infty].
\end{equation}

\subsection{Signal Spaces}

Let $\mathcal{L}$ denote a Hilbert space, equipped with an inner product $\inner{\cdot}{\cdot} : \mathcal{L} \times \mathcal{L} \to \C$ and norm $\norm{x} := \sqrt{\inner{x}{x}}$. For $d \in \N^+$ the Hilbert spaces of interest are
\begin{equation}\label{eq:L2d-space}
    L_2^d := \{ f: \R_{\geq 0} \to \R^d \mid \norm{f} <\infty \},
\end{equation}
with inner product $\inner{f}{g}:= \int_\R f(t)g(t) d t$, and the superscript $d$ is dropped if $d=1$. For $\R^d$, the inner product is $x y = \sum^d_{i=1} x_i y_i$. \extver{}{Elements $f \in L_2^d$ are denoted as $f=(f_1 \dots f_d)^\top$, where $f_i \in L_2$ for all $i=1,\dots,d$. Furthermore, $0 \in L_2^d$ refers to the map $\R{\geq 0} \ni t \mapsto 0 \in \R^d$.} For $n \leq d$, we define the linear subspaces (which are Banach spaces)
\begin{equation}\label{eq:subspace-Un-zeros}
    \mathcal{U}^d_n := \{ f \in L_2^d(\mathbb{T}, \mathbb{F}) \mid f_i = 0 \text{ for } i>n \},
\end{equation} 
where the superscript $d$ is dropped if it is clear from the context.

For any $T \in \R_{\geq 0}$, define the truncation operator $P_T$ on signals $u: \R_{\geq 0} \to \R^d$ as $(P_T u)(t) = 0$ for all $t>T$, else $(P_T u)(t) = u(t)$.
The extension of $L_2^d$, see Ref.~\cite{desoerFeedbackSystemsInputoutput1975}, is defined as $\Lte^d := \{ u : \R_{\geq 0} \to \R^d \mid \norm{P_T u}_2 < \infty \text{ for all } T \in \R_{\geq 0} \}$ the superscript $d$ is dropped if $d=1$.\extver{}{ The extended space is the natural setting for modeling systems, as it includes periodic signals, which are otherwise excluded from $L_2$. However, extended spaces are not even normed spaces~\cite[Ch. 2.3]{willemsAnalysisFeedbackSystems1971}. Therefore, the Hilbert space $L_2^d(\mathbb{T}, \mathbb{F})$ is the adequate signal space for rigorous functional analytic system analysis.} %

\subsection{Systems and Stability}

An operator $R$ is said to be causal on $L_2^p$ ($\Lte^p$) if it satisfies $R : L_2^p \to L_2^q$ ($R : \Lte^p \to \Lte^q$) and $P_T (Ru) = P_T(R(P_Tu))$ for all $u \in L_2^p$ ($u \in \Lte^p$) and $T \in \R_{\geq 0}$, i.e., the output at time $t$ is independent of the signal at times greater than $t$. \extver{Unless otherwise specified, \emph{we will always assume causality on $L_2^p$ and $R(0)=0$.}}{Causal systems $R:L_2^p \to L_2^q$ are extended to $\Lte^p$ by defining $R : \Lte^p \to \Lte^q$ as $P_T R u := P_T R P_T u$, which is well-defined since $P_T u \in L_2^p$ for all $u \in \Lte^p$. If $R :L_2^p \to L_2^q$ and $R : \Lte^p \to \Lte^q$, then $R$ is causal on $L_2^p$ if and only if $R$ is causal on $\Lte^p$~\cite[Ch. 2.4]{willemsAnalysisFeedbackSystems1971}. If $R: \dom(R) \subsetneq L_2^p \to L_2^q$, then $R$ can only be extended to $\Lte^p$ if $\norm{P_T R u}_2 < \infty$ for all $u\in L_2^p$ and $T \in \R_{\geq 0}$, i.e., no finite escape time. Conversely, if $R: \Lte^p \to \Lte^q$, then it can be that $Ru \notin L_2^q$ for all $u \in L_2^p$ (e.g., consider $u(t) \mapsto \sin(t)$). We model physical systems as operators that take inputs in an extended signal space, i.e., $R: \Lte^p \to \Lte^q$, and we always assume causality and $R(0)=0$, unless otherwise specified.}

A system $R$ is said to be $L_2$-stable, if $R : L_2^p \to L_2^q$. We define the incremental $L_2$-gain as $\Gamma(R)$ from~\eqref{eq:incremental_induced_norm}. When $R:L_2^p \to L_2^q$ and $\Gamma(R) < \infty$, we call the system incrementally stable. Similarly, non-incremental gain and stability is defined in terms of~\eqref{eq:non_incremental_induced_norm}. \extver{}{ The general approach in this work is to show that $\Gamma(R) <\infty$ ($\gamma(R)< \infty$) on $\dom(R) \subseteq L_2^p$ and separately show that $\dom(R) = L_2^p$. The final step is to extend the domain to $\Lte^p$ by proving, or assuming, causality.}

Note that the incremental gain $\Gamma(R)$ is computed using signals in $L_2^p$ only. If $R : L_2^p \to L_2^q$ happens to be causal, this gain carries over to $\Lte^p$ in the sense that~\cite{krebbekxScaledRelativeGraph2025}
\begin{equation*}
    \maketextstyle \Gamma(R) = \sup_{u_1, u_2 \in \Lte^p} \sup_{T \in \R_{\geq 0}} \frac{\norm{P_T R u_1 - P_T R u_2}}{\norm{P_T u_1 - P_T u_2}}.
\end{equation*} 
and similarly for $\gamma(R)$.

\subsection{The Scaled Relative Graph}\label{sec:srg-definition}

Let $\mathcal{L}$ be a Hilbert space, and $R : \dom(R) \subseteq \mathcal{L} \to \mathcal{L}$ a relation. Define the angle between $u, y\in \mathcal{L}$ as 
\begin{equation}\label{eq:def_srg_angle}
    \maketextstyle \angle(u, y) := \cos^{-1} \frac{\mathrm{Re} \inner{u}{y}}{\norm{u} \norm{y}} \in [0, \pi].
\end{equation}
Given distinct $u_1, u_2 \in \mathcal{U} \subseteq \dom(R)$, we define the set
\begin{multline*}
    \maketextstyle z_R(u_1, u_2) := \: \{\infty \mid \text{if } Ru_1 \text{ or } Ru_2 \text{ is multi-valued }\} \; \cup \\ \maketextstyle \left\{ \frac{\norm{y_1-y_2}}{\norm{u_1-u_2}} e^{\pm j \angle(u_1-u_2, y_1-y_2)} \mid y_1 \in R u_1, y_2 \in R u_2 \right\} .
\end{multline*}
The SRG of $R$ over the set $\mathcal{U}$ is defined as
\begin{equation*}
    \maketextstyle \SRG_\mathcal{U} (R) := \bigcup_{u_1, u_2 \in \mathcal{U}, \; u_1 \neq u_2} z_R(u_1, u_2) \subseteq \C_\infty,
\end{equation*}
and we denote $\SRG(R) := \SRG_{\dom(R)}(R)$. Note that $0 \in \SRG_\mathcal{U}(R)$, if and only if there exist $u_1, u_2 \in \mathcal{U}$, $u_1 \neq u_2$, such that $R u_1 = R u_2$. 

\extver{}{One can also define the \emph{Scaled Graph} (SG) around some particular input. }The \extver{\emph{Scaled Graph} (SG)}{SG} of a relation $R$ with one input $u^\star \in \dom(R)$ fixed and the other in set $\mathcal{U}$ is defined as
\begin{equation}
    \SG_{\mathcal{U}, u^\star}(R) := \{ z_R(u, u^\star) \mid u \in \mathcal{U} \setminus u^\star \}.
\end{equation}
We use the shorthand $\SG_{\dom(R), u^\star}(R) = \SG_{u^\star}(R)$.\extver{}{ The SG around $u^\star=0$ is particularly interesting, because the radius of $\SG_{0}(R)$ gives a non-incremental $L_2$-gain bound for the operator $R$.} By definition of the SRG, the incremental gain of an operator $R$, defined in~\eqref{eq:incremental_induced_norm}, is equal to the radius of the SRG, i.e., $\Gamma(R) = \rmin(\SRG(R))$. Similarly, $\gamma(R) = \rmin(\SG_0(R))$ for the non-incremental gain in~\eqref{eq:non_incremental_induced_norm}.

The SRG interconnection rules in~\cite{ryuScaledRelativeGraphs2022} are restricted to full-domain operators and do not apply to restricted-domain SRGs. We therefore prove the following generalization.

\thmspace
\begin{proposition}\label{prop:srg_calculus}
    Let $0 \neq \alpha \in \R$, let $R,S, T : X \to X$ be relations on a Hilbert space $X$ and linear subspaces $\mathcal{U}, \mathcal{Y} \subseteq X$ such that $R(\mathcal{U}) \subseteq \mathcal{Y}$. Then, 
    \begin{enumerate}[label=\alph*.]
        \item\label{eq:srg_calculus_alpha} $\SRG_\mathcal{U}(\alpha R) = \SRG_\mathcal{U}(R \alpha) = \alpha \SRG_\mathcal{U}(R)$,
        \item\label{eq:srg_calculus_plus_one} $\SRG_\mathcal{U}(I_\mathcal{U} + R) = 1 + \SRG_\mathcal{U}(R)$, where $I_\mathcal{U}$ obeys $I_\mathcal{U} u=u$ for all $u \in \mathcal{U}$,
        \item\label{eq:srg_calculus_inverse} $\SRG_{R(\mathcal{U})}(R^{-1}) = (\SRG_\mathcal{U}(R))^{-1} =: \SRG_\mathcal{U}(R)^{-1}$ (where $0,\infty \in \SRG_\mathcal{U}(R)$ are allowed).
        \item\label{eq:srg_calculus_parallel} If at least one of $\SRG_\mathcal{U}(R), \SRG_\mathcal{U}(S)$ satisfies the chord property, then $\SRG_\mathcal{U}(R + S) \subseteq \SRG_\mathcal{U}(R) + \SRG_\mathcal{U}(S)$.
        \item\label{eq:srg_calculus_series} If at least one of $\SRG_\mathcal{U}(R), \SRG_\mathcal{Y}(T)$ satisfies an arc property, then $\SRG_\mathcal{U}(T R) \subseteq \SRG_\mathcal{Y}(T) \SRG_\mathcal{U}(R)$.
    \end{enumerate}
    See \extver{\cite{ryuScaledRelativeGraphs2022} for the chord and arc property definitions}{Definitions~\ref{def:chord_property} and \ref{def:arc_property} for the chord and arc property}. The SRGs above may contain $0, \infty$. If any of the SRGs above are $\emptyset, \{ 0 \}$ or $\{ \infty \}$, extra care is required, see Ref.~\cite{ryuScaledRelativeGraphs2022}. 
\end{proposition}
\thmspace

\extver{
    \begin{proof}
        We prove 1.b as it does not follow trivially from~\cite{ryuScaledRelativeGraphs2022} because it utilizes a projection on the subspace $\mathcal{U}$. From \cite{ryuScaledRelativeGraphs2022} we use $\mathrm{Re} \, z_R(u_1,u_2)= \frac{\inner{Ru_1-Ru_2}{u_1-u_2}}{\norm{u_1-u_2}^2}$ and $\mathrm{Im} \, z_R(u_1,u_2) = \pm \frac{\norm{\pi_{(u_1-u_2)^\perp}(Ru_1-Ru_2)}}{\norm{u_1-u_2}}$, where $\pi_{x^\perp}$ is the projection on the subspace orthogonal to $x$. Since for all $u_1,u_2 \in \mathcal{U}$ it holds that $\inner{(I_\mathcal{U}+R)u_1-(I_\mathcal{U}+R)u_2}{u_1-u_2} = \norm{u_1-u_2}^2 + \inner{Ru_1-Ru_2}{u_1-u_2}$ and $\pi_{(u_1-u_2)^\perp}((I_\mathcal{U}+R)u_1-(I_\mathcal{U}+R)u_2) = \pi_{(u_1-u_2)^\perp}(Ru_1-Ru_2)$, from which it follows that $\SRG_\mathcal{U}(I_\mathcal{U}+R) = 1 + \SRG_\mathcal{U}(R)$.

        The rest of the proof mimics~\cite[Section 4]{ryuScaledRelativeGraphs2022} with the additional condition that the subspaces $\mathcal{U}, \mathcal{Y}$ are appropriately used. The complete proof can be found in the Appendix of the extended version, see~\cite{krebbekxMIMOPaperExtended2026}.
    \end{proof}
}{
    The proof can be found in the Appendix.
}

\subsection{Nonlinear Systems in LFR Form}\label{sec:mimo_lfr_analysis}

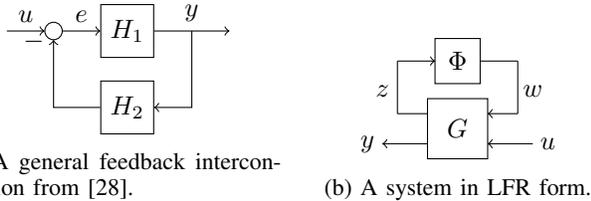
\begin{figure}[t]
    \centering

    \begin{subfigure}[b]{0.49\linewidth}
        \centering

        \tikzstyle{block} = [draw, rectangle, 
        minimum height=1.5em, minimum width=2em]
        \tikzstyle{sum} = [draw, circle, scale=.7, node distance={0.5cm and 0.5cm}]
        \tikzstyle{input} = [coordinate]
        \tikzstyle{output} = [coordinate]
        \tikzstyle{pinstyle} = [pin edge={to-,thin,black}]
        
        \begin{tikzpicture}[auto, node distance = {0.2cm and 0.5cm}]
            \node [input, name=input] {};
            \node [sum, right = of input] (sum) {$ $};
            \node [block, right = of sum] (lti) {$H_1$};
            \node [coordinate, right = of lti] (z_intersection) {};
            \node [output, right = of z_intersection] (output) {}; %
            \node [block, below = of lti] (static_nl) {$H_2$};
        
            \draw [->] (input) -- node {$u$} (sum);
            \draw [->] (sum) -- node {$e$} (lti);
            \draw [->] (lti) -- node [name=z] {$y$} (output);
            \draw [->] (z) |- (static_nl);
            \draw [->] (static_nl) -| node[pos=0.99] {$-$} (sum);
        \end{tikzpicture}
        
        \caption{A general feedback interconnection from~\cite{desoerFeedbackSystemsInputoutput1975}.}
        \label{fig:small-gain-setup}
    \end{subfigure}
    \hfill
    \begin{subfigure}[b]{0.49\linewidth}
        \centering
        \begin{tikzpicture}
            \draw[black] (-.4,-.4) rectangle (.4,.4);
            \node at (0,0) {$G$};

            \draw[black] (-.3,-.3+.9) rectangle (.3,.3+.9);
            \node at (0,.9) {$\Phi$};

            \draw[<-] (-1,-.2) -- (-.4,-.2);
            \node at (-1.2,-.2) {$y$};

            \draw[<-] (.4,-.2) -- (1,-.2);
            \node at (1.2,-.2) {$u$};

            \draw[-] (-.8,.2) -- (-.4,.2);
            \draw[<-] (.4,.2) -- (.8,.2) ;

            \draw[-] (-.8,.2) -- (-.8,.9);
            \draw[-] (.8,.2) -- (.8,.9);

            \draw[-] (.8,.9) -- (.3,.9);
            \draw[->] (-.8,.9) -- (-.3,.9);

            \node at (-1,.5) {$z$};
            \node at (1,.5) {$w$};
        \end{tikzpicture}
        \caption{A system in LFR form.}
        \label{fig:lfr_general}
    \end{subfigure}
    \caption{Two canonical feedback interconnections.}
    \label{fig:feedback_interconnections}
    \vspace{-0.25cm}
\end{figure}

Existing analysis of nonlinear feedback systems from the input/output perspective often considers the system depicted in Fig.~\ref{fig:small-gain-setup}. This interconnection, where $H_1$ and $H_2$ are nonlinear operators, is the canonical system in~\cite{willemsAnalysisFeedbackSystems1971,desoerFeedbackSystemsInputoutput1975} and is used to develop all classical results: the small gain and passivity theorems, and in the case that $H_1$ is LTI, the circle and Popov criteria. Even though the interconnection in Fig.~\ref{fig:small-gain-setup} covers many feedback systems, there are many situations that require a more sophisticated modeling approach.

Instead, we will consider systems in LFR form, as shown in Fig.~\ref{fig:lfr_general}, where $G$ contains all LTI dynamics and $\Phi$ all static or dynamic nonlinearities. The LFR is used thoroughly in robust control~\cite{zhouRobustOptimalControl1996} for uncertain LTI systems, in nonlinear system identification~\extver{\cite{schoukensIdentificationBlockorientedNonlinear2017}}{\cite{paduartIdentificationNonlinearSystems2010,vanbeylenNonlinearLFRBlockOriented2013,schoukensIdentificationBlockorientedNonlinear2017}}, in machine learning~\cite{elghaouiImplicitDeepLearning2021}, and for the analysis of recurrent neural networks~\cite{yinStabilityAnalysisUsing2022}. The reason for this is that a broad class of systems can be written in LFR form.\extver{}{ For example, the system in Fig.~\ref{fig:small-gain-setup} can always be represented in LFR form, and it is even true that for most dynamic nonlinearities, by appropriate loop transformation, all dynamics can be gathered in the LTI block $G$, while $\Phi$ becomes static diagonal. We emphasize, however, that the applicability of the framework developed in this paper reaches beyond the feedback systems in Fig.~\ref{fig:feedback_interconnections}.} A system $R : \Lte^p \to \Lte^q$ is said to be in LFR form if it is decomposable to the interconnection in Fig.~\ref{fig:lfr_general} in terms of an LTI operator $G \in RH_\infty^{(q+n_\mathrm{z}) \times (p+n_\mathrm{w})}$ partitioned as
\begin{equation}\label{eq:lfr_lti_part}
    G = \scalebox{0.75}{$\mat{G_\mathrm{zw} & G_\mathrm{zu} \\ G_\mathrm{yw} & G_\mathrm{yu}} $},
\end{equation}
whith $G_\mathrm{zw} \in RH_\infty^{n_\mathrm{z} \times n_\mathrm{w}}$, $G_\mathrm{zu} \in RH_\infty^{n_\mathrm{z} \times p}$, $G_\mathrm{yw} \in RH_\infty^{q \times n_\mathrm{w}}$ and $G_\mathrm{yu} \in RH_\infty^{q \times p}$, and a nonlinear operator $\Phi : \Lte^{n_\mathrm{z}} \to \Lte^{n_\mathrm{w}}$. The LTI block $G$ describes the dynamic signals evolution from which the static or dynamic nonlinearities, collected in $\Phi$, are lifted out in terms of a feedback connection.\extver{}{ Note that $\Phi$ may also be used to represent uncertain nonlinear effects. In the robust control framework~\cite{zhouRobustOptimalControl1996}, one collects all linear uncertainties in $\Phi$, commonly denoted as $\Delta$ instead.} 

The closed-loop operator $y=Ru$ based on Fig.~\ref{fig:lfr_general} can be written as
\begin{subequations}\label{eqs:lfr_closed_loop}
\begin{align}
    R &= G_\mathrm{yw} \Phi (I - G_\mathrm{zw} \Phi)^{-1} G_\mathrm{zu} + G_\mathrm{yu} \label{eq:lfr_closed_loop_operator} \\ &= G_\mathrm{yw}(\Phi^{-1}-G_\mathrm{zw})^{-1} G_\mathrm{zu}+G_\mathrm{yu}. \label{eq:lfr_closed_loop}
\end{align}
\end{subequations}
Equality of~\eqref{eq:lfr_closed_loop_operator} and~\eqref{eq:lfr_closed_loop} is understood in terms of \emph{relations}, which is proven in \extver{the extended version~\cite{krebbekxMIMOPaperExtended2026}}{Appendix~\ref{app:proof_lfr_relational_equivalence}}. We note that~\eqref{eq:lfr_closed_loop_operator} is the adequate \emph{operator} expression for the closed loop, where well-posedness is determined by invertibility of $I - G_\mathrm{zw} \Phi$. In contrast,~\eqref{eq:lfr_closed_loop} is always valid as a \emph{relation}, even if $\Phi$ is not invertible, and is preferred over~\eqref{eq:lfr_closed_loop_operator} for SRG computations as it involves less operator multiplications.

\textbf{Problem statement:} The main objective in this paper is to analyze the stability, well-posedness and $L_2$-gain of the closed-loop system in~\eqref{eq:lfr_closed_loop}. The strategy to analyze~\eqref{eq:lfr_closed_loop} using SRG methods would be to replace each individual operator with its SRG, and resolve the sum, product and inverse relations using Proposition~\ref{prop:srg_calculus}. However, the dimensions $p,q,n_\mathrm{w},n_\mathrm{z}$ may take any value in $\mathbb{N}_+$, so the operators in~\eqref{eq:lfr_closed_loop} can be non-square. Therefore, the LFR exemplifies the need for an SRG framework that can deal with non-square MIMO operators.

\section{The SRG for MIMO Operators}\label{sec:mimo_srg_def}

At the core of the SRG definition lies a Hilbert space that contains both the domain and range. For an operator $R : L_2^p \to L_2^q$, this definition is problematic when $p \neq q$, i.e., the operator is not square. As a first step towards a general framework for SRG analysis of multivariable feedback systems, we will develop the required mathematical tools to \emph{embed} operators $R \in \mathcal{N}(L_2^p, L_2^q)$ into $\mathcal{N}(L_2^n)$ for some $n \geq p,q$. Using these embeddings, we will define the SRG for a general MIMO operator, called the MIMO SRG. This embedding procedure is nontrivial since the original domain of the operator must be preserved to avoid conservatism.

\subsection{Embedding of MIMO Operators}\label{sec:mimo_srg_technicalities}

Throughout this section, we consider $p,q \in \N_+$ and $R: L_2^p \to L_2^q$, and define $n:=\max \{p,q \}$. One can be in one of three cases: $p>q$ (wide), $p<q$ (tall) and $p=q$ (square). For each case, we will consider the appropriate embedding.

\subsubsection{Wide operators} When $p>q$, we append outputs that map to zero by defining the linear embedding operator 
\begin{equation}\label{eq:injection-operator}
\begin{aligned}
    \iota_{p \leftarrow q} & : L_2^q \to L_2^p,  \\[-2.5ex]
    (f_1,\dots,f_q)^\top & \mapsto (f_1,\dots,f_q,\overbrace{0\dots,0}^{p-q})^\top.
\end{aligned}
\end{equation}
This way, $\iota_{p \leftarrow q}$ embeds $L_2^p$ into $L_2^q$ and $R \mapsto \iota_{p \leftarrow q} R$ embeds $\mathcal{N}(L_2^p,L_2^q)$ into $\mathcal{N}(L_2^p)$. 

\thmspace
\begin{lemma}\label{lemma:injection-isomorphism}
    The embedding operator $\iota_{p \leftarrow q} : L_2^q \to \mathcal{U}_q^p$ is an isometric isomorphism.
\end{lemma}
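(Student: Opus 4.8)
The plan is to verify the three defining properties of an isometric isomorphism for $\iota_{p \leftarrow q} : L_2^q \to \mathcal{U}_q^p$: that the map is linear, that it is a bijection onto $\mathcal{U}_q^p$, and that it preserves the inner product (hence the norm). Linearity is immediate from the componentwise definition in~\eqref{eq:injection-operator}, since appending $p-q$ zero components commutes with pointwise addition and scalar multiplication.

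Next I would establish that $\iota_{p \leftarrow q}$ is a bijection onto $\mathcal{U}_q^p$. By definition~\eqref{eq:subspace-Un-zeros}, $\mathcal{U}_q^p = \{ f \in L_2^p \mid f_i = 0 \text{ for } i > q \}$, and the image of $\iota_{p \leftarrow q}$ is exactly the set of tuples of the form $(f_1, \dots, f_q, 0, \dots, 0)^\top$ with each $f_i \in L_2$, which is precisely $\mathcal{U}_q^p$; so the map is surjective onto its claimed codomain. Injectivity follows because if $\iota_{p \leftarrow q} f = \iota_{p \leftarrow q} g$, then comparing the first $q$ components gives $f_i = g_i$ for $i = 1, \dots, q$, i.e. $f = g$. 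Alternatively, one may exhibit the two-sided inverse explicitly as the truncation-to-first-$q$-components map $\mathcal{U}_q^p \to L_2^q$, which is well-defined on $\mathcal{U}_q^p$ and visibly inverts $\iota_{p \leftarrow q}$.

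For the isometry, I would compute directly from the inner product on $L_2^d$ given below~\eqref{eq:L2d-space}: for $f, g \in L_2^q$,
\begin{equation*}
    \inner{\iota_{p \leftarrow q} f}{\iota_{p \leftarrow q} g} = \int_{\mathbb{T}} \sum_{i=1}^{p} \overline{(\iota_{p \leftarrow q} f)_i(t)} (\iota_{p \leftarrow q} g)_i(t)\, dt = \int_{\mathbb{T}} \sum_{i=1}^{q} \overline{f_i(t)} g_i(t)\, dt = \inner{f}{g},
\end{equation*}
since the appended components $i > q$ contribute zero. Taking $g = f$ gives $\norm{\iota_{p \leftarrow q} f} = \norm{f}$, so the map is norm-preserving; in particular this re-proves injectivity and shows $\iota_{p \leftarrow q} f \in L_2^p$ whenever $f \in L_2^q$, confirming the map is well-defined into $L_2^p$ (and into $\mathcal{U}_q^p$).

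None of the steps presents a genuine obstacle; this is a bookkeeping lemma. The only point requiring a modicum of care is making the codomain restriction precise — one must note that the natural codomain of $\iota_{p \leftarrow q}$ is $L_2^p$, but its range is the proper closed subspace $\mathcal{U}_q^p$, and it is as a map onto $\mathcal{U}_q^p$ (not onto all of $L_2^p$) that it is an isomorphism. I would state this explicitly to avoid the trivial objection that $\iota_{p \leftarrow q}$ is not surjective as a map into $L_2^p$ when $p > q$.
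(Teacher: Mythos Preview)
Your proof is correct and follows essentially the same approach as the paper: both verify the isometry by a direct componentwise computation showing that the appended zero components contribute nothing to the norm (the paper computes $\norm{\iota_{p \leftarrow q} u}^2 = \sum_{i=1}^p \int |u_i|^2 = \sum_{i=1}^q \int |u_i|^2 = \norm{u}^2$). Your version is in fact more complete than the paper's, which only checks norm preservation and leaves linearity and bijectivity onto $\mathcal{U}_q^p$ implicit; you verify these explicitly and also upgrade the norm identity to the inner-product identity, which is a mild strengthening but not a different idea.
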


\begin{proof}%
    For all $u \in L_2^q$ we have $\norm{u}^2=\sum_{i=1}^q \int_{\R_{\geq 0}} |u_i(t)|^2 d t$. By definition $\norm{\iota_{p \leftarrow q} u}^2 = \sum_{i=1}^p \int_{\R_{\geq 0}} |u_i(t)|^2 d t=\sum_{i=1}^q \int_{\R_{\geq 0}} |u_i(t)|^2 d t=\norm{u}^2$, therefore $\norm{\iota_{p \leftarrow q} u} = \norm{u}$.
\end{proof}

For a wide operator $R$ and $\mathcal{U} \subseteq L_2^p$, we \emph{define} the SRG as
\begin{equation}
    \SRG_\mathcal{U}(R) := \SRG_\mathcal{U}(\iota_{p \leftarrow q} R).
\end{equation}
It is important to note that $\frac{\norm{Ru}}{\norm{u}} = \frac{\norm{\iota_{p \leftarrow q} R u}}{\norm{u}}$ for all $u \in L_2^p$, which guarantees the important relations $\Gamma(R) = \rmin(\SRG(R))$ and $\gamma(R) = \rmin(\SG_0(R))$. 

\subsubsection{Tall operators} When $p<q$, we add extra inputs that are ignored in the output. We define the linear projection operator 
\begin{equation}\label{eq:projection-operator}
\begin{aligned}
    \pi_{p \leftarrow q} & : L_2^q \to L_2^p, \\
    (f_1,\dots,f_p,r_1,\dots,r_{q-p})^\top & \mapsto (f_1,\dots,f_p)^\top.
\end{aligned}
\end{equation}
This way, $\pi_{p \leftarrow q}$ projects $L_2^q$ onto $L_2^p$ and $R \mapsto R \pi_{p \leftarrow q}$ embeds $\mathcal{N}(L_2^p,L_2^q)$ into $\mathcal{N}(L_2^q)$. 

\thmspace
\begin{lemma}\label{lemma:projection-isomorphism}
    The projection operator $\pi_{p \leftarrow q} : \mathcal{U}_p^q \to L_2^p$ is an isometric isomorphism with inverse $\pi_{p \leftarrow q}^{-1} = \iota_{q \leftarrow p}$
\end{lemma}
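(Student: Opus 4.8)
The plan is to verify the three defining properties of an isometric isomorphism for $\pi_{p \leftarrow q}$ restricted to $\mathcal{U}_p^q$, and to identify its inverse with $\iota_{q \leftarrow p}$. Recall that $\mathcal{U}_p^q = \{ f \in L_2^q \mid f_i = 0 \text{ for } i > p \}$, so an element of $\mathcal{U}_p^q$ has the form $(f_1, \dots, f_p, 0, \dots, 0)^\top$. First I would observe that $\pi_{p \leftarrow q}$ is linear, which is immediate from the componentwise definition in Eq.~\eqref{eq:projection-operator}. Next I would check surjectivity onto $L_2^p$: given any $g = (g_1, \dots, g_p)^\top \in L_2^p$, the element $(g_1, \dots, g_p, 0, \dots, 0)^\top$ lies in $\mathcal{U}_p^q$ and is mapped by $\pi_{p \leftarrow q}$ to $g$. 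For injectivity on $\mathcal{U}_p^q$: if $f, f' \in \mathcal{U}_p^q$ satisfy $\pi_{p \leftarrow q} f = \pi_{p \leftarrow q} f'$, then $f_i = f'_i$ for $i \leq p$, while $f_i = f'_i = 0$ for $i > p$ by membership in $\mathcal{U}_p^q$, hence $f = f'$.

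For the isometry property, I would compute directly: for $f = (f_1, \dots, f_p, 0, \dots, 0)^\top \in \mathcal{U}_p^q$,
\begin{equation*}
    \norm{\pi_{p \leftarrow q} f}^2 = \sum_{i=1}^p \norm{f_i}^2 = \sum_{i=1}^q \norm{f_i}^2 = \norm{f}^2,
\end{equation*}
where the middle equality uses that the components with index $> p$ vanish. This shows $\norm{\pi_{p \leftarrow q} f} = \norm{f}$ for all $f \in \mathcal{U}_p^q$. Finally, to identify the inverse, I would note that $\iota_{q \leftarrow p} : L_2^p \to L_2^q$ sends $(g_1, \dots, g_p)^\top$ to $(g_1, \dots, g_p, 0, \dots, 0)^\top \in \mathcal{U}_p^q$, so $\iota_{q \leftarrow p}$ maps into $\mathcal{U}_p^q$; then $\pi_{p \leftarrow q} \iota_{q \leftarrow p} = I_{L_2^p}$ and $\iota_{q \leftarrow p} \pi_{p \leftarrow q} = I_{\mathcal{U}_p^q}$ follow by direct composition of the two componentwise maps. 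Together these establish that $\pi_{p \leftarrow q}|_{\mathcal{U}_p^q}$ is a bijective linear isometry with the claimed inverse.

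I do not anticipate a genuine obstacle here; the statement is essentially a bookkeeping exercise with indices, directly parallel to Lemma~\ref{lemma:injection-isomorphism}. The only point requiring a little care is to be explicit that the domain restriction to $\mathcal{U}_p^q$ is exactly what makes $\pi_{p \leftarrow q}$ injective and norm-preserving — on all of $L_2^q$ it is neither — and correspondingly that the range of $\iota_{q \leftarrow p}$ is precisely $\mathcal{U}_p^q$ (this is the content of Lemma~\ref{lemma:injection-isomorphism} with the roles of $p$ and $q$ swapped), so that the two maps are genuinely mutually inverse bijections between $\mathcal{U}_p^q$ and $L_2^p$. One could even shorten the argument by invoking Lemma~\ref{lemma:injection-isomorphism} applied to $\iota_{q \leftarrow p}$ and then simply noting that $\pi_{p \leftarrow q}$ agrees with $\iota_{q \leftarrow p}^{-1}$ on $\mathcal{U}_p^q$, so that it inherits the isometric isomorphism property; I would present the proof in whichever of these two equivalent forms is more concise given what the Appendix already contains.
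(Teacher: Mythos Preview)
Your proposal is correct. The paper's proof takes exactly the shortcut you describe in your final paragraph: it invokes Lemma~\ref{lemma:injection-isomorphism} (with the roles of $p$ and $q$ swapped) to conclude that $\iota_{q \leftarrow p} : L_2^p \to \mathcal{U}_p^q$ is an isometric isomorphism, and then verifies $\pi_{p \leftarrow q} \iota_{q \leftarrow p} = I_{L_2^p}$ and $\iota_{q \leftarrow p} \pi_{p \leftarrow q} = I_{\mathcal{U}_p^q}$ to identify $\pi_{p \leftarrow q}$ as the inverse; the isometry property then comes for free. Your primary write-up is the more explicit direct verification, which is equally valid but slightly longer; since you already anticipated the paper's route, either presentation is fine.
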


\begin{proof}%
    By Lemma~\ref{lemma:injection-isomorphism}, we know that $\iota_{q \leftarrow p} : L_2^p \to \mathcal{U}_p^q$ is an isomorphism. Since for all $u \in L_2^p$ and $\tilde{u} \in \mathcal{U}_p^q$ it holds that $\pi_{p \leftarrow q} \iota_{q \leftarrow p} u=u$ and $\iota_{q \leftarrow p} \pi_{p \leftarrow q} \tilde{u} = \tilde{u}$, we have shown that $\iota_{q \leftarrow p}^{-1} = \pi_{p \leftarrow q}$, hence $\pi_{p \leftarrow q}^{-1} = \iota_{q \leftarrow p}$. 
\end{proof}

The extra inputs $r_1,\dots,r_{q-p}$ should not belong to the description of $R$. Therefore, we restrict the inputs of $R \pi_{p \leftarrow q}$ to the space $\mathcal{U}_p^q$ in~\eqref{eq:subspace-Un-zeros}, resulting in $\{ r_i = 0\}_{i=1}^{q-p}$. For a tall operator $R$ and $\mathcal{U} \subseteq L_2^p$, we \emph{define} the SRG as
\begin{equation}
    \SRG_\mathcal{U}(R) := \SRG_{\iota_{q \leftarrow p}(\mathcal{U})}(R \pi_{p \leftarrow q}).
\end{equation}
Note that $u = \pi_{p \leftarrow q} \iota_{q \leftarrow p} u$, so $\iota_{q \leftarrow p}(\mathcal{U})$ precisely contains all inputs to $R$. Therefore, we can conclude $\frac{\norm{Ru}}{\norm{u}} = \frac{\norm{Ru}}{\norm{\iota_{q \leftarrow p} u}}$ for all $u \in L_2^p$, which guarantees the important relations $\Gamma(R) = \rmin(\SRG(R))$ and $\gamma(R) = \rmin(\SG_0(R))$. 

\subsubsection{Square operators} Now consider $R : L_2^n \to L_2^n$ and choose any $\tilde{n} > n$. We embed $R$ into $\mathcal{N}(L_2^{\tilde{n}})$ using the map $R \mapsto \iota_{\tilde{n} \leftarrow n} R \pi_{n \leftarrow \tilde{n}}$. Using the appropriate input space, this embedding can be done without introducing conservatism. 

\thmspace
\begin{lemma}\label{lemma:embedding-square}
    Let $R : L_2^n \to L_2^n$ and choose any $\tilde{n} > n$, then
    \begin{equation}
        \SRG_\mathcal{U}(R) = \SRG_{\iota_{\tilde{n} \leftarrow n}(\mathcal{U})}(\iota_{\tilde{n} \leftarrow n} R \pi_{n \leftarrow \tilde{n}}).
    \end{equation}
\end{lemma}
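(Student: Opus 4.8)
The plan is to unfold both sides of the claimed identity to their definitions in terms of the ratio-and-angle sets $z_{(\cdot)}(u_1,u_2)$, and then to exhibit a norm- and inner-product-preserving correspondence between the pairs of signals that enter each side. Write $J := \iota_{\tilde n \leftarrow n}$ and $P := \pi_{n \leftarrow \tilde n}$ for brevity. By Lemma~\ref{lemma:projection-isomorphism} (applied with the roles of $p,q$ played by $n,\tilde n$) we have $P J = I_{L_2^n}$, and $J$ maps $L_2^n$ isometrically onto $\mathcal{U}_n^{\tilde n} \subseteq L_2^{\tilde n}$ by Lemma~\ref{lemma:injection-isomorphism}. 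First I would observe that the embedded operator $\tilde R := J R P$ restricted to the input set $J(\mathcal{U}) \subseteq \mathcal{U}_n^{\tilde n}$ acts as $\tilde R (J u) = J R P J u = J R u$ for every $u \in \mathcal{U}$; in particular every input that contributes to $\SRG_{J(\mathcal{U})}(\tilde R)$ is of the form $Ju$ with $u \in \mathcal{U}$, so the index set of the right-hand union is in bijection (via $J$) with the pairs $u_1 \neq u_2$ in $\mathcal{U}$.

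Next I would compare the sets $z_R(u_1,u_2)$ and $z_{\tilde R}(Ju_1, Ju_2)$ for a fixed pair $u_1 \neq u_2$ in $\mathcal{U}$. Since $R$ is an operator (single-valued), neither relation is multi-valued, so the $\{\infty\}$ clause is absent on both sides. For the finite part, the numerator ratios agree because
\begin{equation*}
    \frac{\norm{\tilde R (Ju_1) - \tilde R (Ju_2)}}{\norm{Ju_1 - Ju_2}} = \frac{\norm{J(Ru_1 - Ru_2)}}{\norm{J(u_1-u_2)}} = \frac{\norm{Ru_1 - Ru_2}}{\norm{u_1 - u_2}},
\end{equation*}
using linearity of $J$ and its isometry property from Lemma~\ref{lemma:injection-isomorphism}. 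Likewise the angle is preserved: $\angle(Ju_1 - Ju_2,\, \tilde R(Ju_1) - \tilde R(Ju_2)) = \angle(J(u_1-u_2),\, J(Ru_1-Ru_2)) = \angle(u_1-u_2,\, Ru_1-Ru_2)$, because $J$ preserves both $\norm{\cdot}$ and $\mathrm{Re}\inner{\cdot}{\cdot}$ (an isometry between real/complex Hilbert spaces preserves the inner product by polarization), and the angle in Eq.~\eqref{eq:def_srg_angle} depends only on those quantities. Hence $z_R(u_1,u_2) = z_{\tilde R}(Ju_1, Ju_2)$ as subsets of $\C_\infty$.

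Finally I would take the union over $u_1 \neq u_2$ in $\mathcal{U}$ on the left and over $J u_1 \neq J u_2$ on the right; since $J$ is injective the condition $u_1 \neq u_2$ is equivalent to $Ju_1 \neq Ju_2$, and by the first paragraph these are exactly the pairs indexing the right-hand union. Therefore the two unions coincide, which is the claimed equality. The only slightly delicate point — the place I would be most careful — is the bookkeeping that the restricted input set $J(\mathcal{U})$ on the right really does capture \emph{all and only} the inputs of $\mathcal{U}$ after embedding: one must use $PJ = I$ so that no information is lost through $P$, and one must use that $J(\mathcal{U}) \subseteq \mathcal{U}_n^{\tilde n}$ so that the extra $\tilde n - n$ coordinates are genuinely zero and contribute nothing spurious; everything else is a routine transfer of the isometry through the definition of the SRG.
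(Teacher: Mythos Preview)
Your proof is correct and follows essentially the same approach as the paper's: both arguments use that $\iota_{\tilde n \leftarrow n}$ is an isometric isomorphism onto $\mathcal{U}_n^{\tilde n}$ with left inverse $\pi_{n\leftarrow\tilde n}$, and then verify that norms and inner products (hence the gain ratio and the angle) are preserved so that $z_R(u_1,u_2)=z_{\tilde R}(Ju_1,Ju_2)$ for every pair. If anything, your write-up is a bit more explicit about single-valuedness (ruling out the $\infty$ clause) and about the bijection of index pairs via injectivity of $J$, which the paper leaves implicit.
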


\begin{proof}%
    Denote $\iota:=\iota_{\tilde{n} \leftarrow n}$ and $\pi:= \pi_{n \leftarrow \tilde{n}}$ for brevity. By Lemma~\ref{lemma:injection-isomorphism} and \ref{lemma:projection-isomorphism}, we know that $\iota$ is an isometric isomorphism with isometric isomorphic inverse $\pi$. Therefore, $u_1, u_2 \in \mathcal{U} \iff \tilde{u}_1,\tilde{u}_2 \in \iota(\mathcal{U})$ where $\tilde{u}_1=\iota(u), \tilde{u}_2=\iota(u_2)$ satisfying $\norm{u_1-u_2} =\norm{\tilde{u}_1-\tilde{u}_2}$ and
    \begin{multline*}
        \norm{\iota R \pi \tilde{u}_1 - \iota R \pi \tilde{u}_2} = \norm{\iota R u_1 - \iota R \pi u_2} = \norm{R u_1 - R \pi u_2}, \\
        \inner{\iota R \pi \tilde{u}_1 - \iota R \pi \tilde{u}_2}{\tilde{u}_1 -\tilde{u}_2} = \inner{\iota R u_1 - \iota R u_2}{\tilde{u}_1 -\tilde{u}_2} \\ = \inner{R u_1 - R u_2}{u_1 -u_2}.  
    \end{multline*}
    This shows that $z_R \subseteq \SRG_\mathcal{U}(R) \iff z_R \subseteq \SRG_{\iota(\mathcal{U})}(\iota R \pi)$, where $z = z_R(u_1,u_2) = z_R(\tilde{u}_1, \tilde{u}_2)$, proving $\SRG_\mathcal{U}(R) = \SRG_{\iota(\mathcal{U})}(\iota R \pi)$. 
\end{proof}

\subsection{Definition of the MIMO SRG}\label{sec:mimo_srg_def_subsection}

We are now ready to define the SRG for MIMO operators. Note that the MIMO SG can be defined similarly.

\thmspace
\begin{definition}\label{def:mimo_srg}
    The SRG of an operator $R: L_2^p \to L_2^q$ over $\mathcal{U} \subseteq L_2^p$ is defined as
    \begin{equation}
        \SRG_\mathcal{U}(R) := \SRG_{\iota_{n \leftarrow p}(\mathcal{U})}(\iota_{n \leftarrow q} R \pi_{p \leftarrow n}),
    \end{equation}
    where $n \geq \max\{ p, q \}$. 
\end{definition}
\thmspace

Note that by Lemma~\ref{lemma:embedding-square}, the choice of $n \geq \max\{ p, q \}$ in Definition~\ref{def:mimo_srg} does not affect the MIMO SRG. A key feature of Definition~\ref{def:mimo_srg} is that the embedding of the domain $\iota_{n \leftarrow p}(\mathcal{U}) = \mathcal{U}^n_p \subsetneq L_2^n$ \emph{does not} artificially introduce extra inputs to the embedded operator $\iota_{n \leftarrow q} R \pi_{p \leftarrow n}$ when $p<n$, which are otherwise not present in the description of $R$. This feature embeds operators in the same space of square operators without introducing \emph{any} conservatism in the embedding step, and can make a huge difference when analyzing the $L_2$-gain of feedback systems.

The idea of adding/removing inputs/outputs of a non-square system to make it square is not new. Previously, such concepts were use to construct square plants that are guaranteed to have minimum-phase~\cite{kouvaritakisGeometricApproachAnalysis1976,misraComputationalAlgorithmSquaringup1992}. %
However, the MIMO SRG in Definition~\ref{def:mimo_srg} has been introduced for computing SRGs and due to the different objective it is significantly different from~\cite{kouvaritakisGeometricApproachAnalysis1976,misraComputationalAlgorithmSquaringup1992}. We consider only \emph{adding} inputs/outputs, not removing any, rendering~\cite{kouvaritakisGeometricApproachAnalysis1976} completely different. Our procedure works for LTI and nonlinear operators alike, and makes no assumption aside $L_2$-stability, in contrast to~\cite{misraComputationalAlgorithmSquaringup1992}, which works for LTI systems only and assumes certain rank conditions on the state-space matrices. Furthermore, Definition~\ref{def:mimo_srg} always adds zero outputs to a wide system and when adding inputs to a tall system, or when $n > \max\{ p, q \}$ is taken, our method does not add extra inputs, but instead computes the SRG for the restricted space of inputs that only contains the original amount of inputs. This is a significant difference with~\cite{misraComputationalAlgorithmSquaringup1992}, which uses nonzero artificial extra inputs and outputs to make the plant minimum-phase.

Similarly, allowing nonzero artificial inputs or outputs, for example to enforce minimum phase or passivity like in~\cite{fradkovPassificationNonsquareLinear2003}, is also problematic. As our aim is to analyze feedback systems like those in Fig.~\ref{fig:small-gain-setup}, when extra non-zero outputs are added to, e.g., $H_1$, the squared-up output contains more signals. This increases the gain bound, but more importantly it hinders the restriction of the SRG of $H_2$ to the space of original input dimension, leading to increased conservatism in SRG analysis.

It is important to note that the ordering of extra zeros in the input/output in Definition~\ref{def:mimo_srg} may be permuted arbitrarily. This is a particular degree of freedom in the corresponding SRG analysis.

\section{Analysis of Non-Square Feedback Systems}\label{sec:feedback_systems}

We now have the tools to describe MIMO operators using SRGs and study their interconnections. However, an SRG bound in itself is not enough to characterize the stability, well-posedness and incremental $L_2$-gain performance of a feedback system, as explained in~\cite{krebbekxScaledRelativeGraph2025}.

Hence, we need practical system analysis tools for the systems in Fig.~\ref{fig:feedback_interconnections} that that guarantee stability, well-posedness and incremental $L_2$-gain performance, which are developed in this section. We first focus on the analysis of the troublesome feedback part of Fig.~\ref{fig:small-gain-setup} and then extend this result to the LFR in Fig.~\ref{fig:lfr_general}.

\subsection{Simple Feedback Systems}\label{sec:simple_feedback_systems}

Consider the feedback interconnection in Fig.~\ref{fig:small-gain-setup}, where $H_1 : L_2^p \to L_2^q$, $H_2 : L_2^q \to L_2^p$, and the closed-loop relation reads $y = (H_1^{-1} + H_2)^{-1} u$. As in \cite{chaffeyHomotopyTheoremIncremental2025}, we abbreviate such a relation as $(H_1^{-1}+H_2)^{-1} =: [H_1,H_2]$. For causal $H_1$ and $H_2$, we call the interconnection $[H_1,H_2]$ \emph{well-posed} if $I + H_2 H_1 $ has a causal inverse on $\Lte^p$~\cite{megretskiSystemAnalysisIntegral1997}.

Since~\cite[Thm. 2]{chaffeyHomotopyTheoremIncremental2025} considers square $H_1,H_2$ on $L_2^n$ and~\cite[Lemma 5]{chaffeyHomotopyTheoremIncremental2025} assumes square operators with full-domain SRGs, these results does not apply to Fig.~\ref{fig:small-gain-setup} for non-square operators and their restricted-domain SRGs. The following theorem provides the necessary generalization using the MIMO SRG from Definition~\ref{def:mimo_srg}.

\thmspace
\begin{theorem}\label{thm:incremental-mimo-srg}
    Consider the causal systems $H_1 : L_2^p \to L_2^q$ and $H_2 : L_2^q \to L_2^p$, where at least one of $\SRG(H_1)$, $\SRG(H_2)$ satisfies the chord property. If for all $\tau \in [0,1]$
    \begin{subequations}
    \begin{align}
        &\rmin(\SRG(H_1)) < \infty \text{ and } \rmin(\SRG(H_2)) < \infty, \label{eq:finite_srg_radii}\\
        &\dist(\SRG(H_1)^{-1},- \tau \SRG(H_2)) \geq r_\tau \geq r >0, \label{eq:finite_srg_separation}
    \end{align}
    \end{subequations}
    then $I + H_2 H_1$ is invertible, and the interconnection is incrementally stable with $\Gamma([H_1,H_2]) \leq 1/r_1$.  Moreover, if $H_1$ and $H_2$ are causal, then $[H_1, H_2]$ is well-posed. 
\end{theorem}
\thmspace

\extver{}{
    Using Proposition~\ref{prop:srg_calculus} instead of~\cite{ryuScaledRelativeGraphs2022}, the proof, given in Appendix~\ref{sec:incremental-stability-thms}. The non-incremental version of Theorem~\ref{thm:incremental-mimo-srg} is given in Appendix~\ref{sec:non-incremental-stability-thms}.
}

\extver{

\begin{proof}%
    For $n \geq \max \{p, q\}$, let $\SRG(H_1):=\SRG_{\mathcal{U}_p^n}(\tilde{H}_1)$ and $\SRG(H_2):=\SRG_{\mathcal{U}_q^n}(\tilde{H}_2)$, where $\tilde{H}_1 = \iota_{n \leftarrow q} H_1 \pi_{p \leftarrow n}$ and $\tilde{H}_2 = \iota_{n \leftarrow p} H_1 \pi_{q \leftarrow n}$. Therefore, $\tilde{H}_1 : \mathcal{U}_p^n \to \mathcal{U}_q^n$ and $\tilde{H}_2 : \mathcal{U}_q^n \to \mathcal{U}_p^n$ and~\eqref{eq:finite_srg_radii} implies $\Gamma(\tilde{H}_1)<\infty$, $\Gamma(\tilde{H}_2) < \infty$.
    \extver{
        From~\eqref{eq:finite_srg_separation} and Proposition~\ref{prop:srg_calculus} we know that $\rmin((\SRG(H_1)^{-1} + \tau \SRG(H_2))^{-1}) \leq 1/r$, and so $\Gamma([\tilde{H}_1,\tau \tilde{H}_2]) \leq 1/r_\tau \leq 1/r$. Using a generalization of~\cite[Theorem 2]{chaffeyHomotopyTheoremIncremental2025} to operators on arbitrary Banach spaces\footnote{See Appendix of~\cite{krebbekxMIMOPaperExtended2026} for the derivation.}, we can conclude that $I+ \tau \tilde{H}_2 \tilde{H_1}$ is invertible on $\mathcal{U}_p^n$ for all $\tau \in [0,1]$ and therefore $\tilde{H_1} (I+ \tilde{H}_2 \tilde{H_1})^{-1} = [\tilde{H}_1, \tilde{H_2}] : \mathcal{U}_p^n \to \mathcal{U}_q^n$ with $\Gamma([\tilde{H}_1, \tilde{H_2}]) \leq 1/r_1$. 
    }{
        From~\eqref{eq:finite_srg_separation} and Proposition~\ref{prop:srg_calculus} we know that $\rmin((\SRG(H_1)^{-1} + \tau \SRG(H_2))^{-1}) \leq 1/r$, and so $\Gamma([\tilde{H}_1,\tau \tilde{H}_2]) \leq 1/r_\tau \leq 1/r$. By Lemma~\ref{lemma:banach-space-incremental-homotopy}, we can conclude that $I+ \tau \tilde{H}_2 \tilde{H_1}$ is invertible on $\mathcal{U}_p^n$ for all $\tau \in [0,1]$ and therefore $\tilde{H_1} (I+ \tilde{H}_2 \tilde{H_1})^{-1} = [\tilde{H}_1, \tilde{H_2}] : \mathcal{U}_p^n \to \mathcal{U}_q^n$ with $\Gamma([\tilde{H}_1, \tilde{H_2}]) \leq 1/r_1$. 
    } 
    
    The final step is to transfer the result to $[H_1,H_2]$. We use the fact that $\pi_{p \leftarrow n} : \mathcal{U}_p^n \to L_2^p$ is an isometric isomorphism with inverse $\iota_{n \leftarrow p}$ (by Lemma~\ref{lemma:injection-isomorphism} and \ref{lemma:projection-isomorphism}). For $u, e \in L_2^p, y \in L_2^q$, define $\tilde{u} = \iota_{n \leftarrow p} u \in \mathcal{U}_p^n, \tilde{e} = \iota_{n \leftarrow p} e \in \mathcal{U}_p^n$ and $\tilde{y} = \iota_{n \leftarrow q} y \in \mathcal{U}_q^n$. Since $\iota$ is a bijection, we have \extver{$u = e + H_2 H_1 e  \iff \tilde{u} = \tilde{e} + \tilde{H}_2 \tilde{H}_1 \tilde{e}$ and $y = H_1 e  \iff \tilde{y} = \tilde{H}_1 \tilde{e}$,}{
    \begin{align*}
        u = e + H_2 H_1 e & \iff \tilde{u} = \tilde{e} + \tilde{H}_2 \tilde{H}_1 \tilde{e}, \\ 
        y = H_1 e & \iff \tilde{y} = \tilde{H}_1 \tilde{e},
    \end{align*}}
    hence $I + H_2 H_1$ is invertible on $L_2^p$ if and only if $I+\tilde{H}_2 \tilde{H}_2$ is invertible on $\mathcal{U}_p^n$. Moreover, since $\pi$ is an isomorphism, one has $\norm{u}=\norm{\tilde{u}}, \norm{e}=\norm{\tilde{e}}$ and $\norm{y}=\norm{\tilde{y}}$, resulting in $\Gamma([H_1,H_2]) = \Gamma([\tilde{H}_1, \tilde{H}_2])$. 

    Note that $H_1$ and $H_2$ are causal if and only if $\tilde{H}_1$ and $\tilde{H}_2$ are causal. Causality of $(I+\tilde{H}_2 \tilde{H}_1)^{-1}$ (and hence $(I + H_2 H_1)^{-1}$) follows from the Banach fixed point theorem on the time axis $[0,T]$, see~\cite{desoerFeedbackSystemsInputoutput1975}. Finally, well-posedness of $[H_1, H_2] = H_1 (I+H_2 H_1)^{-1}$ follows from the causality of $H_1$ and causal invertibility of $I+H_2 H_1$.
\end{proof}

}{}

\subsection{Analysis of Systems in LFR Form}\label{sec:lfr_feedback_systems}

While the interconnection in Fig.~\ref{fig:small-gain-setup} often captures the essential parts of analyzing a feedback system, which are stability and well-posedness, not all systems can be represented in this form. Instead, the LFR form described in Section~\ref{sec:mimo_lfr_analysis} can describe a broader class of nonlinear systems.

To analyze an LFR using SRGs, one first has to compute $n = \max\{ p,q,n_\mathrm{w},n_\mathrm{z} \}$ and embed all operators in~\eqref{eq:lfr_closed_loop} in $\mathcal{N}(L_2^n)$ using Definition~\ref{def:mimo_srg}. Note that for the LFR in~\eqref{eq:lfr_closed_loop}, the input/output dimensions match by definition. Therefore, we can use Proposition~\ref{prop:srg_calculus} and Lemma~\ref{lemma:improved_chord_arc_completions} to conclude
\begin{subequations}\label{eq:lfr_interconnect_steps}
\begin{equation}\label{eq:lfr_interconnect_steps_1}
    \SRG(R) \subseteq \overline{\SRG(G_\mathrm{yw}(\Phi^{-1}-G_\mathrm{zw})^{-1} G_\mathrm{zu}) + \SRG(G_\mathrm{yu})}, \vspace*{-0.6em}
\end{equation}
\begin{multline}\label{eq:lfr_interconnect_steps_2}
    \SRG(G_\mathrm{yw}(\Phi^{-1}-G_\mathrm{zw})^{-1} G_\mathrm{zu}) \subseteq \\ \overline{\SRG(G_\mathrm{yw}(\Phi^{-1}-G_\mathrm{zw})^{-1} )  \SRG(G_\mathrm{zu})},
\end{multline}
\begin{multline}\label{eq:lfr_interconnect_steps_3}
    \SRG(\SRG(G_\mathrm{yw}(\Phi^{-1}-G_\mathrm{zw})^{-1} )) \subseteq \\ \overline{\SRG(G_\mathrm{yw})  \SRG((\Phi^{-1}-G_\mathrm{zw})^{-1} )},
\end{multline}
\begin{equation}\label{eq:lfr_interconnect_steps_4}
    \SRG((\Phi^{-1}-G_\mathrm{zw})^{-1} ) = \SRG(\Phi^{-1}-G_\mathrm{zw} )^{-1},
\end{equation}
\begin{equation}\label{eq:lfr_interconnect_steps_5}
    \SRG(\Phi^{-1}-G_\mathrm{zw} ) \subseteq \overline{\SRG(\Phi)^{-1} - \SRG(G_\mathrm{zw})},
\end{equation}
\end{subequations}
where the overline indicates the improved chord/arc completion from Lemma~\ref{lemma:improved_chord_arc_completions}. Here we used Proposition~\ref{prop:srg_calculus}.\ref{eq:srg_calculus_parallel} in~\eqref{eq:lfr_interconnect_steps_1} and~\eqref{eq:lfr_interconnect_steps_5}, Proposition~\ref{prop:srg_calculus}.\ref{eq:srg_calculus_series} in~\eqref{eq:lfr_interconnect_steps_2} and \eqref{eq:lfr_interconnect_steps_3} and Proposition~\ref{prop:srg_calculus}.\ref{eq:srg_calculus_inverse} in~\eqref{eq:lfr_interconnect_steps_4}. We have obtained a bound in~\eqref{eq:lfr_interconnect_steps} for the SRG of $R$ in~\eqref{eq:lfr_closed_loop}, however, we cannot yet conclude stability from this bound alone.

From~\eqref{eq:lfr_closed_loop} we can see that the stability of the closed loop depends on the stability of $(\Phi^{-1}-G_\mathrm{zw})^{-1}$, i.e.
\begin{equation}\label{eq:lfr_loop}
    [\Phi, -G_\mathrm{zw}] : \dom([\Phi, -G_\mathrm{zw}]) \subseteq L_2^{n_\mathrm{z}} \to L_2^{n_\mathrm{w}},
\end{equation}
which is precisely what we can analyze using Theorem~\ref{thm:incremental-mimo-srg}. We call an LFR system \emph{well-posed} if~\eqref{eq:lfr_loop} is well-posed. 

The incremental gain $\Gamma(R)$ of an LFR system is obtained by replacing the operators in~\eqref{eq:lfr_closed_loop} with their SRG, and computing the radius of the resulting set. 

\thmspace
\begin{theorem}\label{thm:srg-lfr-system}
    Consider the system $R$ given by the LFR in~\eqref{eq:lfr_closed_loop}, where $G \in RH_\infty^{(q+n_\mathrm{z}) \times (p+n_\mathrm{w})}$ and $\Phi: L_2^{n_\mathrm{z}} \to L_2^{n_\mathrm{w}}$ are causal and satisfy $\Gamma(G)< \infty$ and $\Gamma(\Phi)< \infty$. If there exists a $\hat{\Gamma} < \infty$ such that $\forall \, \tau \in [0,1]$
    \begin{subequations}\label{eq:lfr_srg_bound}
    \begin{equation}
        \rmin(\mathcal{G}_{R,\tau}) \leq \hat{\Gamma}, \quad \mathcal{G}_{R} := \mathcal{G}_{R,1},
    \end{equation}
    \begin{multline}
        \mathcal{G}_{R,\tau}:= \overline{\SRG(G_\mathrm{yu}) + \SRG(G_\mathrm{yw})} \\ \overline{ \times (\SRG(\Phi)^{-1}- \tau \SRG(G_\mathrm{zw}))^{-1} \SRG(G_\mathrm{zu})},
    \end{multline}
    \end{subequations}
    then $R : \Lte^p \to \Lte^q$ is well-posed and incrementally stable with $\Gamma(R) \leq \rmin(\mathcal{G}_{R}) \leq \hat{\Gamma}$, where the overline indicates the improved chord/arc completion (see Lemma~\ref{lemma:improved_chord_arc_completions}). %
\end{theorem}

\begin{proof}%
    Stability of~\ref{eq:lfr_closed_loop} depends only on the stability of $[\Phi, -G_\mathrm{zw}]$, since $G : L_2^{p + n_\mathrm{w}} \to L_2^{q + n_\mathrm{z}}$ is assumed to have finite incremental gain. Since $G_\mathrm{zw}$ is part of $G$, it follows that $\Gamma(G_\mathrm{zw}) \leq \Gamma(G)$. If $G_\mathrm{yw},G_\mathrm{zu}$ are not identically zero, then both $z_1 \in \SRG(G_\mathrm{yw})$ and $z_2 \in G_\mathrm{zu}$ where $z_1,z_2 \in \C \setminus \{ 0 \}$. Therefore, the assumption implies that for all $\tau \in [0,1]$
    \begin{equation}\label{eq:srg-distance-ineq}
        \dist(\SRG(\Psi)^{-1}, \tau \SRG(G_\mathrm{zw})) \geq \tilde{r} \geq |z_1| |z_2| / \hat{\Gamma} >0,
    \end{equation}
    hence by Theorem~\ref{thm:incremental-mimo-srg}, $[\Phi, -G_\mathrm{zw}] : L_2^{n_\mathrm{z}} \to L_2^{n_\mathrm{w}}$ is well-posed with $\Gamma([\Phi, -G_\mathrm{zw}]) \leq 1/r$, where $r$ is the largest value of $\tilde{r}$ such that~\eqref{eq:srg-distance-ineq} holds.

    Now $R = G_\mathrm{yw} [\Phi,-G_\mathrm{zw}] G_\mathrm{zu}+G_\mathrm{yu}$ is simply a series and parallel interconnection of operators. Since $[\Phi, -G_\mathrm{zw}] : L_2^{n_\mathrm{z}} \to L_2^{n_\mathrm{w}}$ is well-posed, we can conclude that $R : L_2^p \to L_2^q$ is a well-posed LFR. From the MIMO SRG definition and Proposition~\ref{prop:srg_calculus} it follows that
    \begin{multline*}
        \SRG(R) \subseteq \SRG(G_\mathrm{yu}) + \\\SRG(G_\mathrm{yw})(\SRG(\Phi)^{-1}- \SRG(G_\mathrm{zw}))^{-1} \SRG(G_\mathrm{zu}),
    \end{multline*}
    and therefore $\Gamma(R) \leq \rmin(\mathcal{G}_R^1)$. If one of $G_\mathrm{yw},G_\mathrm{zu}$ is zero, then $R=G_\mathrm{yu}$ is well-posed, causal and the gain bound follows from the definition of the SRG.
\end{proof}

\begin{remark}\label{remark:lfr-prop-non-incremental}
    Theorem~\ref{thm:srg-lfr-system} can be restated in the non-incremental setting if $\Phi(0)=0$ and assuming that $[\Phi, -\tau G_\mathrm{zw}]$ is well-posed for all $\tau \in [0,1]$.
\end{remark}

\extver{}{
\begin{remark}
    It is not necessary to use the improved chord/arc completions. It is sufficient when the chord (arc) property is satisfied for each sum (product) in~\eqref{eq:lfr_srg_bound}. However, this may lead to a larger value of $\hat{\Gamma}$, i.e., more conservatism. 
\end{remark}
}

\extver{}{
Note that $\Phi$ in Theorem~\ref{thm:srg-lfr-system} may be any nonlinear operator with a finite SRG bound, not just a diagonal static nonlinear block. Examples of dynamic nonlinearities are time-varying static nonlinearities such as $x \mapsto \sin(t) \sin(x)$, and hybrid systems. For the latter, SG bounds can be found in~\cite{grootDissipativityFrameworkConstructing2025}.

In this section, we have developed stability theorems for the feedback interconnections in Fig.~\ref{fig:feedback_interconnections}. We note, however, that the methods developed in this paper can be used to analyze \emph{any} interconnection of MIMO systems by using the modular interconnection rules from Proposition~\ref{prop:srg_calculus}, and the tools from this section to analyze the feedback loops. 
}

\section{Computing the MIMO SRG}\label{sec:computing_mimo_srgs}

In order to use Theorem~\ref{thm:srg-lfr-system}, one requires expressions for the SRG of the operators $G$ and $\Phi$ in the LFR. For this purpose, we develop formulas to compute an SRG bound of common MIMO operators: all stable LTI operators, the class of nonlinear operators which are diagonal and static, and some non-square nonlinear operators.

\subsection{LTI Operators}

The first part of a system in LFR form, see~\eqref{eq:lfr_closed_loop}, is an LTI operator. We will show how to exactly compute the SRG of a MIMO LTI operator, as defined in Definition~\ref{def:mimo_srg}. Let $G \in RH_\infty^{q \times p}$ be the transfer function $G(s)$ that corresponds to a stable causal LTI operator $G:L_2^p \to L_2^q$.

\thmspace
\begin{theorem}\label{thm:LTI_SRG_bound}
    Let $G \in RH_\infty^{q \times p}$ and $\Upsilon, \Lambda \subseteq \R$, then
    \begin{equation}\label{eq:LTI_SRG_bound}
        \maketextstyle \SRG(G) \subseteq \Bigl( \bigcap_{\alpha \in \Upsilon} D_{\upsilon_\alpha}(\alpha) \Bigl) \setminus \Bigl( \bigcup_{\alpha \in \Lambda} D_{\ell_\alpha}(\alpha) \Bigl) =: \mathcal{G}^G_{\Upsilon, \Lambda }.
    \end{equation}
    If $\Upsilon, \Lambda = \R$, then~\eqref{eq:LTI_SRG_bound} is an equality.
\end{theorem}

\begin{proof}
    Recall that the $H_\infty$ norm has the property that $\norm{G}_\infty = \sup_{u \in L_2^p} \frac{\norm{G u}}{\norm{u}} = \sup_{\omega \in \R} \overline{\sigma}(G(j \omega)) =: \overline{\sigma}(G)$, hence $\SRG(G) \subseteq D_{\overline{\sigma}(G)}(0)$. Now define $n = \max\{ p,q \}$ and
    \begin{equation}\label{eq:G_alpha_def}
        G_\alpha = \scalebox{0.75}{$ \mat{G \\ 0_{(n-q) \times p}} - \mat{\alpha I \\ 0_{(n-p) \times p}} $},
    \end{equation}
    where $I \in \R^{p \times p}$ is the identity, $\alpha \in \R$ and $G_\alpha : L_2^p \to L_2^n$. Observe that $G_\alpha \pi = \iota G \pi - \scalebox{0.6}{$\mat{\alpha I & 0 \\ 0 & 0}$} = \iota G \pi - \alpha I_{\mathcal{U}_p^n}$, where $\pi = \pi_{p \leftarrow n}$ and $\iota = \iota_{n \leftarrow p}$. Since the last $n-p$ inputs are zero in $\mathcal{U}_p^n$, we know from the SRG definition that
    \begin{equation}\label{eq:srg_subspace_alpha_bound}
        \SRG_{\mathcal{U}_p^n}(G_\alpha \pi_{p \leftarrow n}) \subseteq D_{\upsilon_\alpha}(0), \quad \upsilon_\alpha := \overline{\sigma}(G_\alpha).
    \end{equation}
    Using Proposition~\ref{prop:srg_calculus}.\ref{eq:srg_calculus_alpha} and \ref{prop:srg_calculus}.\ref{eq:srg_calculus_plus_one} on~\eqref{eq:srg_subspace_alpha_bound} we can conclude $\SRG_{\mathcal{U}_p^n}(\iota G \pi ) \subseteq D_{\upsilon_\alpha}(\alpha)$, hence $\SRG(G) \subseteq D_{\upsilon_\alpha}(\alpha)$ by Definition~\ref{def:mimo_srg}. For some set $\Upsilon \subseteq \R$, we have the SRG bound
    \begin{equation}\label{eq:LTI_upper_bound}
        \maketextstyle \SRG(G) \subseteq \bigcap_{\alpha \in \Upsilon} D_{\upsilon_\alpha}(\alpha).
    \end{equation}

    Let $\underline{\sigma}(G) := \inf_{\omega \in \R} \underline{\sigma}(G(j \omega))$ be the smallest singular value of $G(s)$ on the imaginary axis. By Parseval's theorem, one has $\norm{G u} \geq \underline{\sigma}(G) \norm{u}$ for all $u \in L_2^p$, and therefore $\SRG(G) \subseteq \C_\infty \setminus D_{\underline{\sigma}(G)}(0)$. Denote the lower bound $\ell_\alpha:=\underline{\sigma}(G_\alpha)$, where $G_\alpha$ is taken from~\eqref{eq:G_alpha_def}. By the same reasoning used to derive the upper bound in~\eqref{eq:LTI_upper_bound}, we have the following SRG lower bound
    \begin{equation}\label{eq:LTI_lower_bound}
        \maketextstyle \SRG(G) \subseteq \C_\infty \setminus \bigcup_{\alpha \in \Lambda} D_{\ell_\alpha}(\alpha)
    \end{equation}
    where $\Lambda \subseteq \R$. The bound~\eqref{eq:LTI_SRG_bound} follows directly from~\eqref{eq:LTI_upper_bound} and \eqref{eq:LTI_lower_bound}. Exactness for $\Upsilon, \Lambda = \R$ follows from~\cite[proof of Thm. 1]{patesScaledRelativeGraph2021}, where it is shown that the SRG of a linear operator is h-convex, i.e., of the form of~\eqref{eq:LTI_SRG_bound}, and the fact that $\upsilon_\alpha$ and $\ell_\alpha$ are the largest and smallest values, respectively, such that~\eqref{eq:LTI_SRG_bound} holds. The proof in~\cite{patesScaledRelativeGraph2021} is easily adapted for the case where the SRG of a linear operator is computed on a closed linear subspace of the domain, which is used for the tall case $p<q$.
\end{proof}

\extver{}{
The result in Theorem~\ref{thm:LTI_SRG_bound} has an intuitive interpretation in terms of disks in the complex plane. For each $\alpha \in \Upsilon$, using $\overline{\sigma}(G_\alpha)$, we compute the disk $D_{\upsilon_\alpha}(\alpha)$ centered at $\alpha$ that contains $\SRG(G)$. Similarly, for each $\alpha \in \Lambda$, using $\underline{\sigma}(G_\alpha)$ we compute the $D_{\ell_\alpha}(\alpha)$ centered at $\alpha$ which does not contain $\SRG(G)$. In~\eqref{eq:LTI_SRG_bound}, we then intersect all disks that contain $\SRG(G)$, and remove all disks that do not contain $\SRG(G)$. 

If $|\Upsilon| = n_\upsilon, |\Lambda| = n_\ell$, then~\eqref{eq:LTI_SRG_bound} amounts to $n_\upsilon+n_\ell$ amount of $H_\infty$ norm computations. To represent $\mathcal{G}^G_{\Upsilon, \Lambda }$, which consists of the union/intersection of $n_\upsilon + n_\ell$ circles, each requiring a radius and center on $\R$ to be uniquely represented, one needs to store only $2(n_\upsilon + n_\ell)$ real numbers. 
}

By abuse of notation, we will use $\SRG(G) = \mathcal{G}^G_{\Upsilon, \Lambda }$. In practice, one takes $\Upsilon = \Lambda$, i.e., the circles for maximum and minimum gain are computed at the same base points for computational efficiency.\extver{}{ Code for computing the bound in~\eqref{eq:LTI_SRG_bound} is available at \href{https://github.com/Krebbekx/SrgTools.jl}{\texttt{github.com/Krebbekx/SrgTools.jl}}.}

The idea of bounding the SRG of linear operators, based on shifting circles on the real axis, has been proposed in~\cite{patesScaledRelativeGraph2021}, where a method for this purpose, restricted to the case when the operator is square, was given via a factorization approach. The same circle bounding approach is used for square LTI systems in~\cite{grootDissipativityFrameworkConstructing2025} in a dissipativity context. Recently, in~\cite{nautaComputableCharacterisationsScaled2025,krebbekxComputingHardScaled2025}, it has been shown that the SRG of square LTI systems is \emph{exactly} of the form~\eqref{eq:LTI_SRG_bound}. The novel aspect of Theorem~\ref{thm:LTI_SRG_bound} is that it computes the SRG of \emph{tall} systems while respecting the original input dimension. All previously mentioned methods would add nonzero artificial inputs to square-up the system, potentially leading to conservatism.

\extver{}{
\begin{example}\label{example:lti_tfs}
    Consider the transfer functions
    \begin{align*}
        G_1(s) &= \mat{\frac{s}{s+1} & \frac{s^2}{s^2+s+1} & \frac{1}{2s+1}}, \\ G_2(s) &= \mat{\frac{s^2}{s^2+s+1} & \frac{1}{2s+1} \\ \frac{s+1}{(s+3)(s^2+s+1)} & \frac{s+3}{s+1} \\ \frac{s^2-1}{(s+3)(s+2)} & \frac{s}{s+2}}.
    \end{align*}
    Their SRGs, computed with Theorem~\ref{thm:LTI_SRG_bound} are shown in Fig.~\ref{fig:example_lti}. 
\end{example}

\begin{figure}[h]
    \centering
    \begin{subfigure}[b]{0.49\linewidth}
        \centering
        \includegraphics[width=\linewidth]{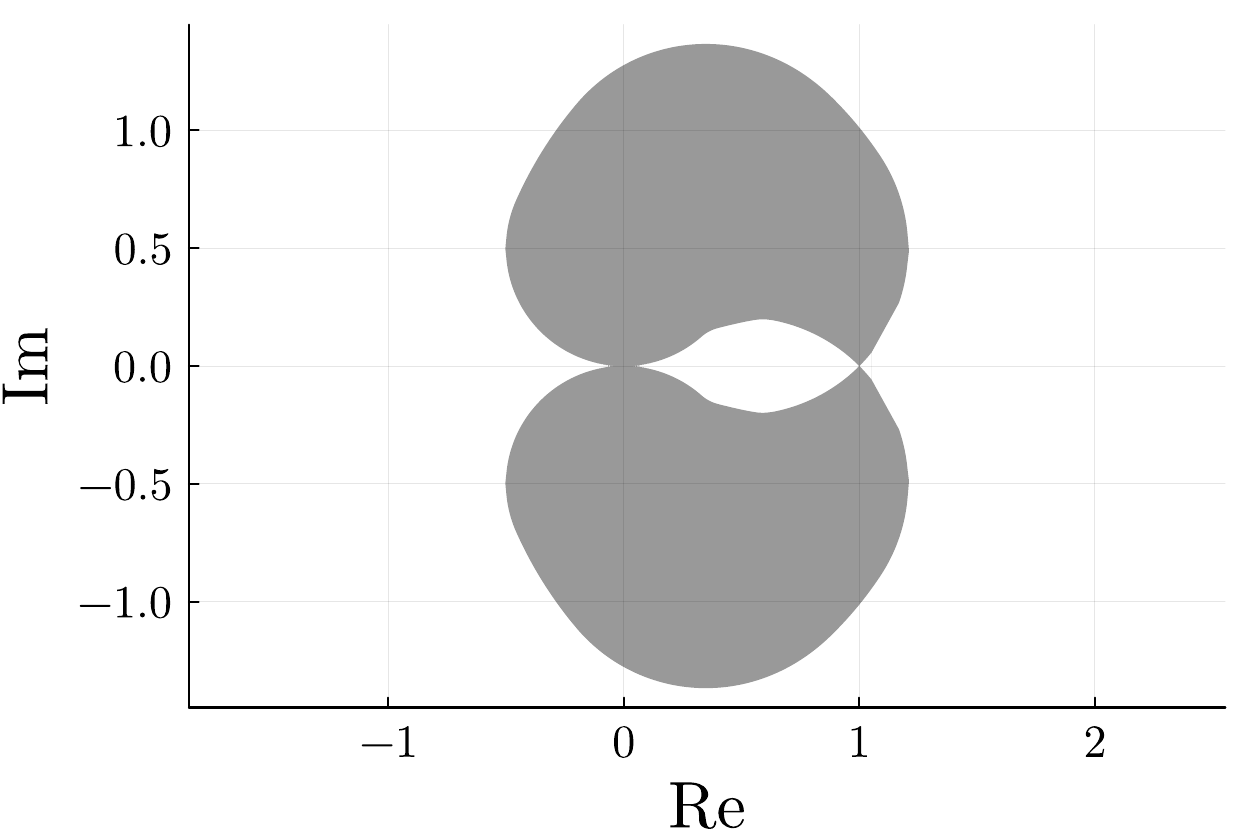}
        \caption{$G_1(s)$.}
        \label{fig:example_lti_1}
    \end{subfigure}
    \hfill
    \begin{subfigure}[b]{0.49\linewidth}
        \centering
        \includegraphics[width=\linewidth]{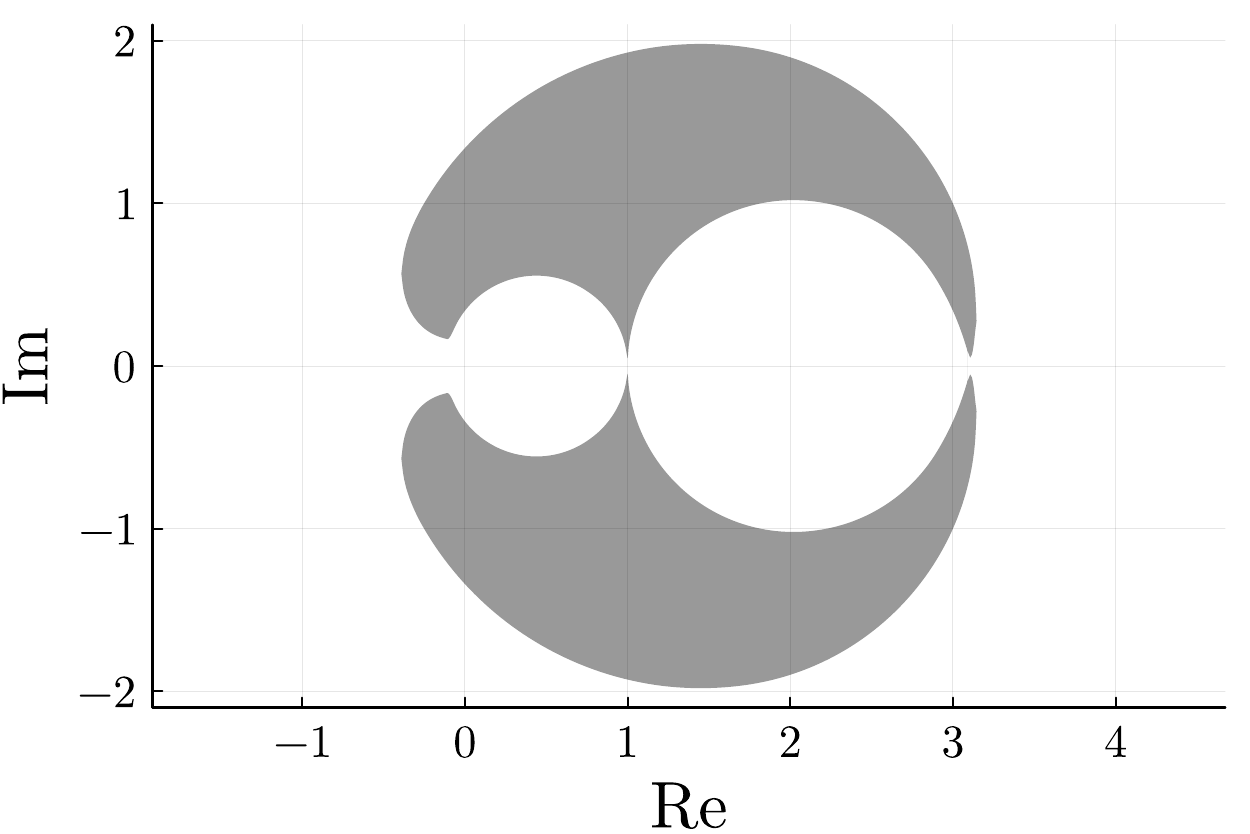}
        \caption{$G_2(s)$.}
        \label{fig:example_lti_2}
    \end{subfigure}
    \caption{SRGs of the MIMO transfer functions in Example~\ref{example:lti_tfs}.}
    \label{fig:example_lti}
\end{figure}
} %

Note that~\eqref{eq:G_alpha_def} assumes that extra zero inputs and outputs are padded at the end. If a different permutation of inputs and outputs is desired,~\eqref{eq:G_alpha_def} and hence Theorem~\ref{thm:LTI_SRG_bound} must be changed accordingly.

Theorem~\ref{thm:LTI_SRG_bound} also provides an effective algorithm to compute the SRG of a \emph{matrix}, which can be viewed as a static transfer matrix. This approach can be used to generalize~\cite{baron-pradaDecentralizedStabilityConditions2025} to the case of non-square LTI systems.

\subsection{Static Nonlinear Operators}

The second component of a system in LFR form is a nonlinear operator. Consider a square operator $\Phi: L_2^d \to L_2^d$ that is defined by the diagonal nonlinear map
\begin{equation}\label{eq:phi_diag_nl}
    (x_1,\dots,x_d)^\top \mapsto (\phi_1 (x_1), \dots, \phi_d (x_d))^\top,
\end{equation}
where $\phi_i : \R \to \R$ is an incrementally sector bounded static nonlinear map defined by  
\begin{equation*}
    \mu_i |x-y|^2 \leq (x-y)(\phi_i(x)-\phi_i(y)) \leq \lambda_i |x-y|^2, \; \forall x,y\in \R,
\end{equation*}
where $\mu_i,\lambda_i \in \R$, which is denoted as $\partial \phi_i \in [\mu_i, \lambda_i]$. Similarly, non-incremental sector bounded operators, denoted as $\phi_i \in [\mu_i, \lambda_i]$, obey $\mu_i |x|^2  \leq x \phi_i(x) \leq \lambda_i |x|^2 , \; \forall x\in \R$. 

The following lemma generalizes~\cite[Prop. 9 and 10]{chaffeyGraphicalNonlinearSystem2023} from SISO static nonlinearities to diagonal static nonlinearities.

\thmspace
\begin{lemma}\label{lemma:diagonal-static-nl}
    Let $\Phi:L_2^d \to L_2^d$ be defined by~\eqref{eq:phi_diag_nl}, then 
    \begin{align*}
        \partial \phi_i \in  [\mu_i,\lambda_i], \forall i=1,\dots, d &\implies \SRG(\Phi) \subseteq D_{[\mu, \lambda]}, \\
        \phi_i \in  [\mu_i,\lambda_i], \forall i=1,\dots, d &\implies \SG_0(\Phi) \subseteq D_{[\mu, \lambda]}, %
    \end{align*}
    where $\mu = \min_i \mu_i$ and $\lambda = \max_i \lambda_i$.
\end{lemma}
\thmspace

\begin{proof}%
    Let $u_1,u_2 \in \R^d$ and define $\Delta u = u_1-u_2$, $\Delta y = y_1-y_2$ where $y_{1,i} = \phi_i(u_{1,i}) - \mu u_{1,i}$ and $y_{2,i} = \phi_i(u_{2,i}) - \mu u_{2,i}$. By \cite[Proposition 9]{chaffeyGraphicalNonlinearSystem2023}, we have 
    \begin{equation}\label{eq:output_strict_i}
        \maketextstyle \Delta u_i \Delta y_i \geq \frac{1}{\lambda - \mu} \Delta y_i^2, \quad \forall i=1,\dots,d.
    \end{equation}
    By summing~\eqref{eq:output_strict_i} and integrating over $\R_{\geq 0}$ one obtains
    \begin{equation}\label{eq:output-strict}
        \maketextstyle \inner{x-y}{\Psi(x)-\Psi(y)} \geq \frac{1}{\lambda-\mu} \norm{\Psi(x)-\Psi(y)}^2,
    \end{equation}
    where $x,y\in L_2^d$, we substituted $u_1 = x(t), u_2=y(t)$, and $\Psi := \Phi - \mu I$. By \cite[Proposition 2]{ryuScaledRelativeGraphs2022}, we conclude $\SRG(\Psi) \subseteq D_{\frac{\lambda-\mu}{2}}(\frac{\lambda-\mu}{2}) = D_{[\mu, \lambda]}$, and hence by Proposition~\ref{prop:srg_calculus}.\ref{eq:srg_calculus_alpha} and \ref{prop:srg_calculus}.\ref{eq:srg_calculus_plus_one} we conclude $\SRG(\Psi) \subseteq D_{[\mu, \lambda]}$. Similarly, $\SG_0(\Phi) \subseteq D_{[\mu, \lambda]}$ follows from $y=0$ above.
\end{proof}

\begin{remark}\label{remark:diagonal-nl-loop-transformation}
    Conservatism of SRG calculations can be reduced by applying loop transformations such that $\mu=\mu_i, \lambda=\lambda_i$ for all $i=1,\dots, d$. 
\end{remark}

Next, we compute the MIMO SRG and SG of a class of tall and wide nonlinear operators, respectively. 

\thmspace
\begin{proposition}\label{prop:tall_nl_srg}
    For $a \in \R$, consider the tall nonlinearity $\Psi_a(x) = \mat{ax + \psi_1(x)\\ \psi_2(x)}$,
    where $\psi_1,\psi_2 : L_2 \to L_2$ obey $\sqrt{\Gamma(\psi_1)^2 + \Gamma(\psi_2)^2} =: \gamma < \infty$,
    then the SRG in Definition~\ref{def:mimo_srg} obeys $\SRG(\Psi_a) \subseteq D_{[a-\gamma,a+\gamma]}$.
\end{proposition}

\begin{proof}
    Note that for $a=0$, $\norm{\Psi_0(x_1)-\Psi_0(x_2)}^2=\scalebox{0.6}{$\norm{\mat{ \psi_1(x_1)- \psi_1(x_2)\\ \psi_2(x_1)-\psi_2(x_2)}}^2$} \leq \Gamma(\psi_1)^2 + \Gamma(\psi_2)^2 \norm{x_1-x_2}^2$, hence $\SRG(\Psi_0) \subseteq D_{[-\gamma, \gamma]}$ by \cite[Prop. 1]{chaffeyGraphicalNonlinearSystem2023}. Since $(x, y) \mapsto (ax ,0)$ acts as $a$ times identity on $\iota_{2 \leftarrow 1}(L_2)$, we can conclude via Proposition~\ref{prop:srg_calculus} that $\SRG(\Psi_a) \subseteq a + D_{[-\gamma, \gamma]}$.
\end{proof} 

\thmspace
\begin{proposition}\label{prop:wide_nl_sg}
    For $a>0$, consider the nonlinearity
    \begin{equation*}
        \maketextstyle \Phi_a(x, y) = \frac{\mathrm{sat}(x)}{a+ y^2}, \quad \mathrm{sat}(x) = 
    \begin{cases}
        x  &\text{ if } |x| \leq 1, \\ 
        x/|x|  &\text{ else}, 
    \end{cases} 
    \end{equation*}
    then $\SG_0(\Phi_a) \subseteq D_{[0,1/a]}$.
\end{proposition}

\begin{proof}
    When squaring $\Phi_a$ up, we get the map $w = (x,y) \mapsto (\frac{\mathrm{sat}(x)}{a+ y^2},0)= z$, for which the inner product on $L_2^2$ obeys $ \langle (x,y), (\frac{\mathrm{sat}(x)}{a+ y^2},0) \rangle = \int_0^\infty \frac{x(t) \mathrm{sat}(x(t))}{a+y(t)^2} \dd t \geq a \int_0^\infty \frac{\mathrm{sat}(x(t))^2}{(a+y(t)^2)^2} \dd t =  \langle (\frac{\mathrm{sat}(x)}{a+ y^2},0),(\frac{\mathrm{sat}(x)}{a+ y^2},0) \rangle$, i.e., $\inner{w}{z} \geq a \norm{z}^2$, which implies~\cite[Prop. 1]{chaffeyGraphicalNonlinearSystem2023} $\SG_0(\Phi_a) \subseteq D_{[0,1/a]}$.
\end{proof}

Even though these examples are two-dimensional, the technique to compute the S(R)G can be extended to higher input/output dimensions.

\section{Examples}\label{sec:examples}

With the following examples\extver{\footnote{For each example, the Julia code used to generate all figures and other results is available at \href{https://github.com/Krebbekx/SrgTools.jl}{\texttt{github.com/Krebbekx/SrgTools.jl}}.}}{} we demonstrate how the system analysis results from Section~\ref{sec:feedback_systems} can be used to analyze the stability and $L_2$-gain performance of example systems. We use the methods from Section~\ref{sec:computing_mimo_srgs} to obtain SRG bounds for the operators that are involved.

\extver{}{For each example, the Julia code used to generate all figures and other results is available at \href{https://github.com/Krebbekx/SrgTools.jl}{\texttt{github.com/Krebbekx/SrgTools.jl}}.}

\subsection{Non-Square Lur'e System}\label{sec:example_nonsquare_lure}

\begin{figure}[tb]
    \centering
    \includegraphics[width=0.7\linewidth]{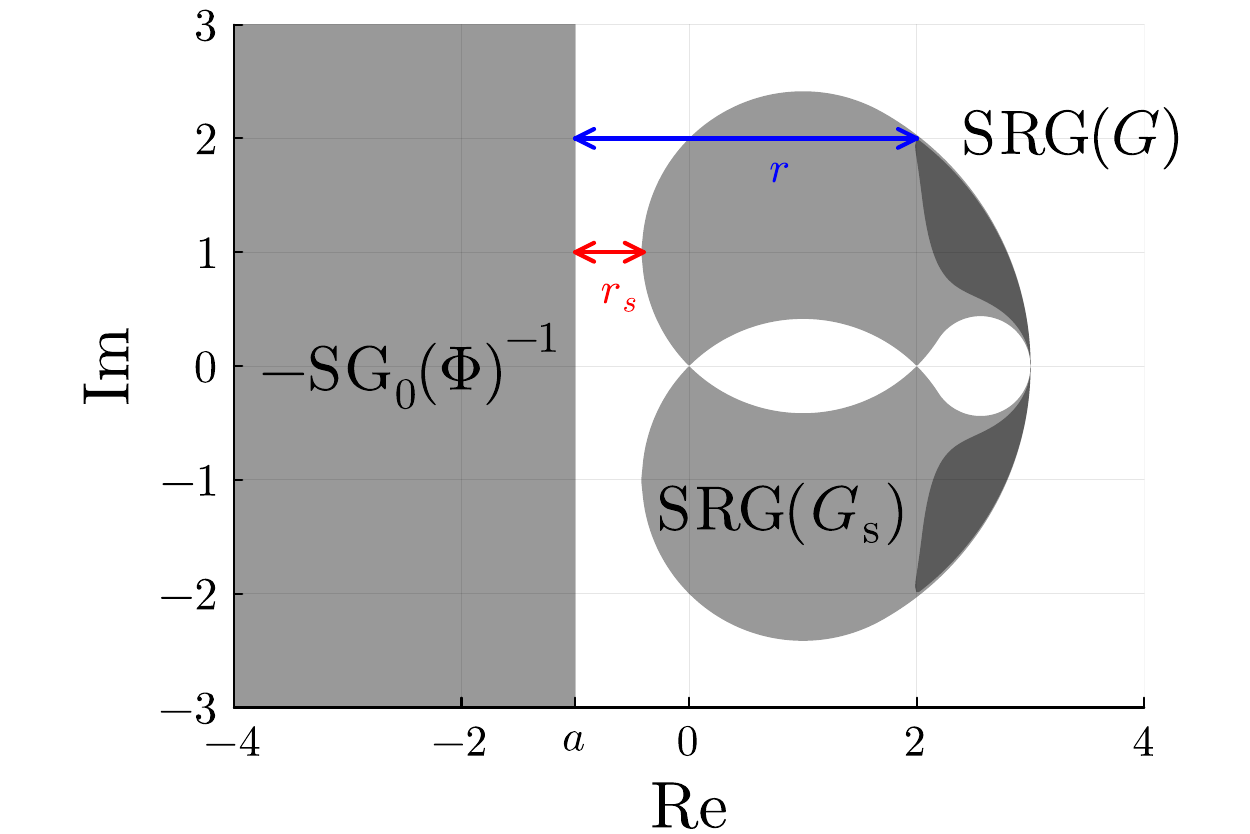}
    \vspace*{-0.8em}
    \caption{SRG analysis of the example in Section~\ref{sec:example_nonsquare_lure}.}
    \label{fig:example_nonsquare_lure}
\end{figure}

Consider the feedback system $R = [\Phi_a, G]$, where
\begin{equation}\label{eq:example_nonsquare}
    G(s) = \mat{ \frac{s+2}{s/3+1} \\ \frac{s+2}{(s+1)^2} }, \quad \Phi_a(x,y) = \frac{\mathrm{sat}(x)}{a+y^2},
\end{equation}
\noindent for $a>0$ and its \enquote{squared} version $R_\mathrm{s} = [\Phi_{a,\mathrm{s}}, G_\mathrm{s}]$ where \extver{$G_\mathrm{s}(s) = \big(G(s) \ 0_{2 \times 1} \big) $ and $ \Phi_{a,\mathrm{s}}(x,y) =\big(\Phi_a(x,y) \ 0 \big)^\top$.}{
\begin{equation}\label{eq:example_nonsquare_squared}
    G_\mathrm{s}(s) = \mat{ \frac{s+2}{s/3+1} & 0 \\ \frac{s+2}{(s+1)^2} & 0}, \quad \Phi_{a,\mathrm{s}}(x,y) = \mat{\frac{\mathrm{sat}(x)}{a+y^2} \\ 0}.
\end{equation}}

Note that by Proposition~\ref{prop:wide_nl_sg}, we know $\SG_0(\Phi_a) = \SG_0(\Phi_{a,\mathrm{s}}) \subseteq D_{[0,1/a]}$. We compute the SRG of both $G$ and $G_\mathrm{s}$ using Theorem~\ref{thm:LTI_SRG_bound}, and plot it together with the SG of the nonlinearity in Fig.~\ref{fig:example_nonsquare_lure}. We emphasize that $\SRG(G) = \SRG_{\iota_{2 \leftarrow 1}(L_2)} (G_\mathrm{s})$, whereas $\SRG(G_\mathrm{s}) = \SRG_{L_2^2}(G_\mathrm{s})$, i.e., $\SRG(G)$ is the SRG of $G_\mathrm{s}$ restricted to the original inputs of $G$, whereas $\SRG(G_\mathrm{s})$ includes the artificial inputs from the squaring up procedure as in~\cite{misraComputationalAlgorithmSquaringup1992}. 

Using this example, we can demonstrate two clear advantages of Definition~\ref{def:mimo_srg} over the naive squaring up method (i.e.,~\cite{misraComputationalAlgorithmSquaringup1992} with nonzero inputs and outputs). 

First, from Fig.~\ref{fig:example_nonsquare_lure} and \extver{the non-incremental version of Theorem~\ref{thm:incremental-mimo-srg}}{Theorem~\ref{thm:non-incremental-mimo-srg}} that $\gamma(R) \leq  r^{-1} = (a+2)^{-1}$ and $\gamma(R_\mathrm{s}) \leq r_\mathrm{s}^{-1} = (a-0.42)^{-1}$. At $a=0.5$, for example, this would amount to $\gamma(R) \leq 0.4$, whereas $\gamma(R_\mathrm{s}) \leq 12.5$, which demonstrates that using Definition~\ref{def:mimo_srg} can greatly reduce conservatism for $L_2$-gain computations. Moreover, when $a < 0.42$ is taken, SRG analysis using $\SRG(G_\mathrm{s})$ cannot even conclude stability, while the system is proven to be stable for all $a>0$ using $\SRG(G)$. 

Second, it is apparent from Fig.~\ref{fig:example_nonsquare_lure} that $G$ is a passive LTI system (since $\mathrm{Re}(\SRG(G)) \geq 0$~\cite{chaffeyGraphicalNonlinearSystem2023}). Including the artificial inputs in $G_\mathrm{s}$ clearly destroys the passivity property, showing the benefit of Definition~\ref{def:mimo_srg}.

\subsection{SISO System with Multiple Nonlinearities}\label{sec:example_1_nl_network}

\begin{figure}[tb]
    \centering
    \tikzstyle{block} = [draw, rectangle, 
    minimum height=2em, minimum width=2em]
    \tikzstyle{sum} = [draw, circle, scale=0.7, node distance={0.5cm and 0.5cm}]
    \tikzstyle{input} = [coordinate]
    \tikzstyle{output} = [coordinate]
    \tikzstyle{pinstyle} = [pin edge={to-,thin,black}]
    
    \begin{tikzpicture}[auto, node distance = {0.1cm and 0.5cm}]
        \node [input, name=input] {};
        \node [sum, right = of input] (sum) {$\Sigma$};
        \node [block, right = of sum] (controller) {$K(s)$};
        \node [block, right = of controller] (saturation) {$\phi_1$};
        \node [sum, right = of saturation] (sigma) {$\Sigma$};
        \node [block, right = of sigma] (lti) {$P(s)$};
        \node [coordinate, right = of lti] (z_intersection) {};
        \node [output, right = of z_intersection] (output) {}; %
        \node [block, below = of lti] (static_nl) {$\phi_2$};
        \node [coordinate, right = of static_nl] (phi_intersection) {};

        \draw [->] (input) -- node {$u$} (sum);
        \draw [->] (sum) -- node {$e$} (controller);
        \draw [->] (controller) -- node {$u$} (saturation);
        \draw [->] (saturation) -- node {$\hat{u}$} (sigma);
        \draw [->] (sigma) -- node {$u'$} (lti);
        \draw [->] (lti) -- node [name=z] {$y$} (output);
        \draw [->] (z) |- (static_nl);
        \draw [->] (static_nl) -| node[pos=0.99] {$-$} (sigma);
        \node [coordinate, below = of static_nl] (tmp1) {$H(s)$};
        \draw [->] (z) |- (tmp1)-| node[pos=0.99] {$-$} (sum);
    
    \end{tikzpicture}
    \caption{Block diagram of a controlled Lur'e plant.}
    \label{fig:controlled_lure_sat}
\end{figure}

\extver{}{
\begin{figure*}[t]
    \centering
     \begin{subfigure}[b]{0.24\linewidth}
         \centering
         \includegraphics[width=\linewidth]{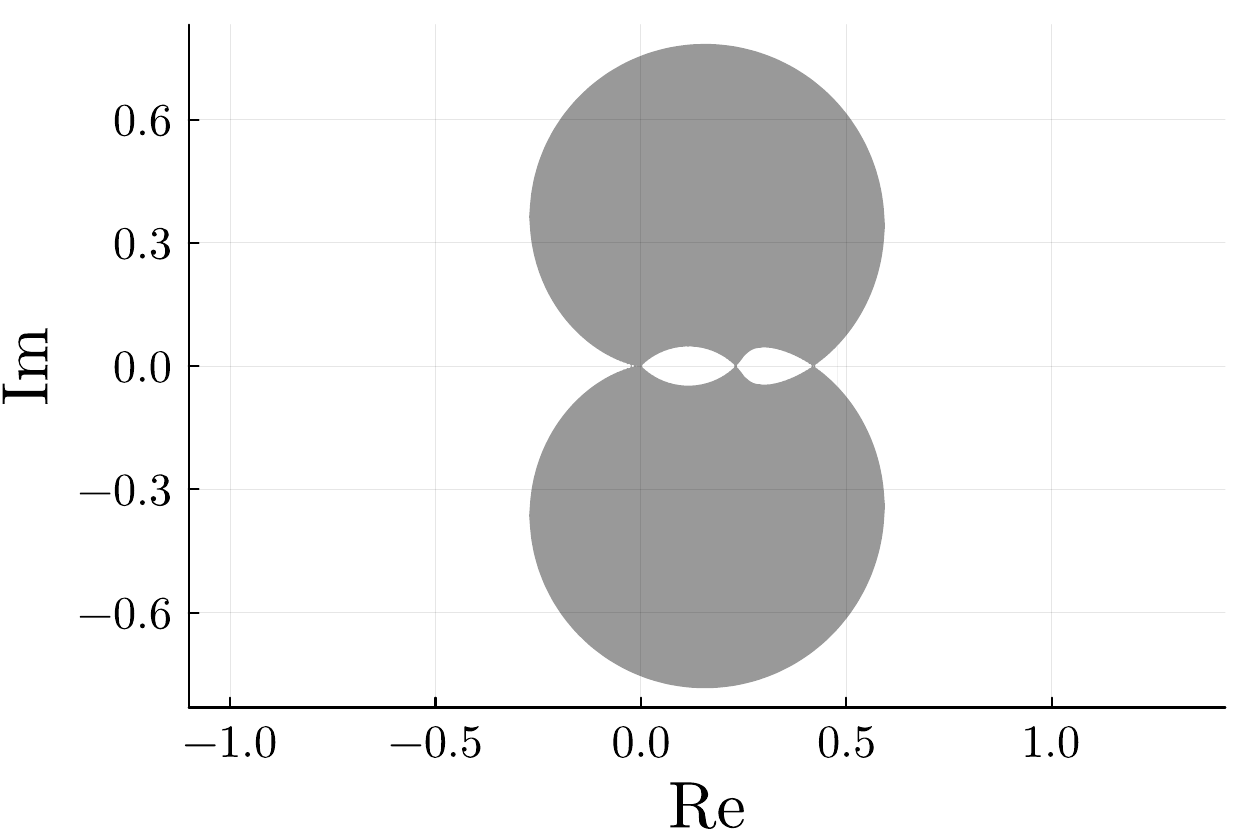}
         \caption{$\SRG(G_\mathrm{yw})$}
         \label{fig:example_1_k23_srgs_Gyw}
     \end{subfigure}
     \hfill
     \begin{subfigure}[b]{0.24\linewidth}
         \centering
         \includegraphics[width=\linewidth]{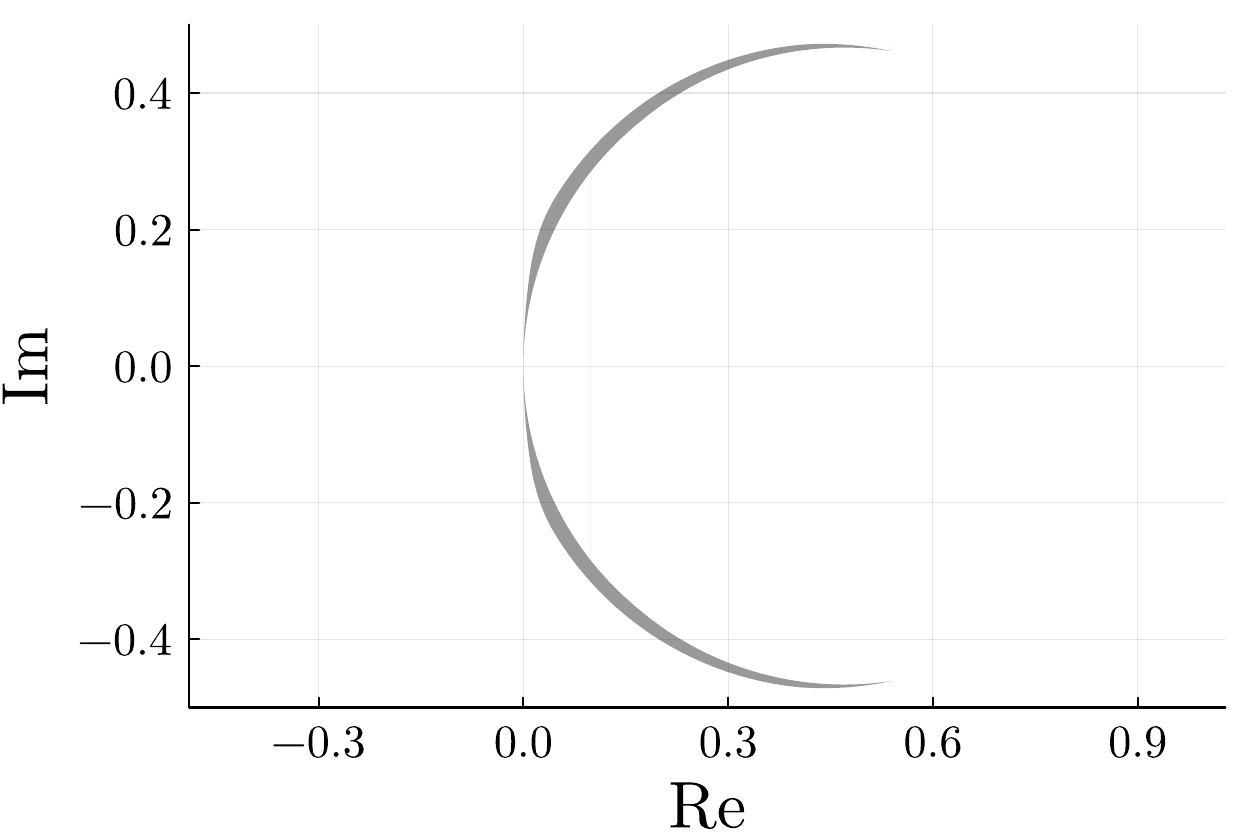}
         \caption{$\SRG(G_\mathrm{zu})$}
         \label{fig:example_1_k23_srgs_Gzu}
     \end{subfigure}
     \hfill
     \begin{subfigure}[b]{0.24\linewidth}
         \centering
         \includegraphics[width=\linewidth]{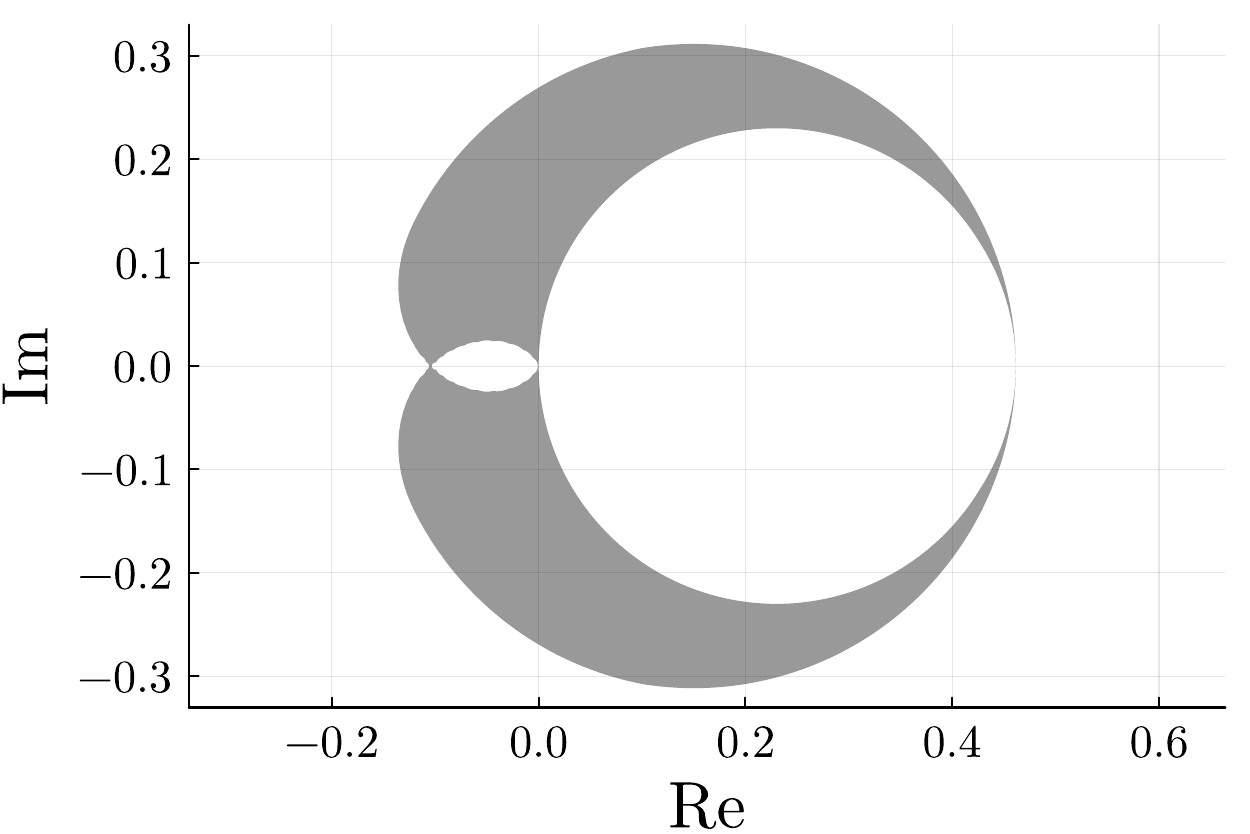}
         \caption{$\SRG(G_\mathrm{yu})$}
         \label{fig:example_1_k23_srgs_Gyu}
     \end{subfigure}
     \hfill
     \begin{subfigure}[b]{0.24\linewidth}
         \centering
         \includegraphics[width=\linewidth]{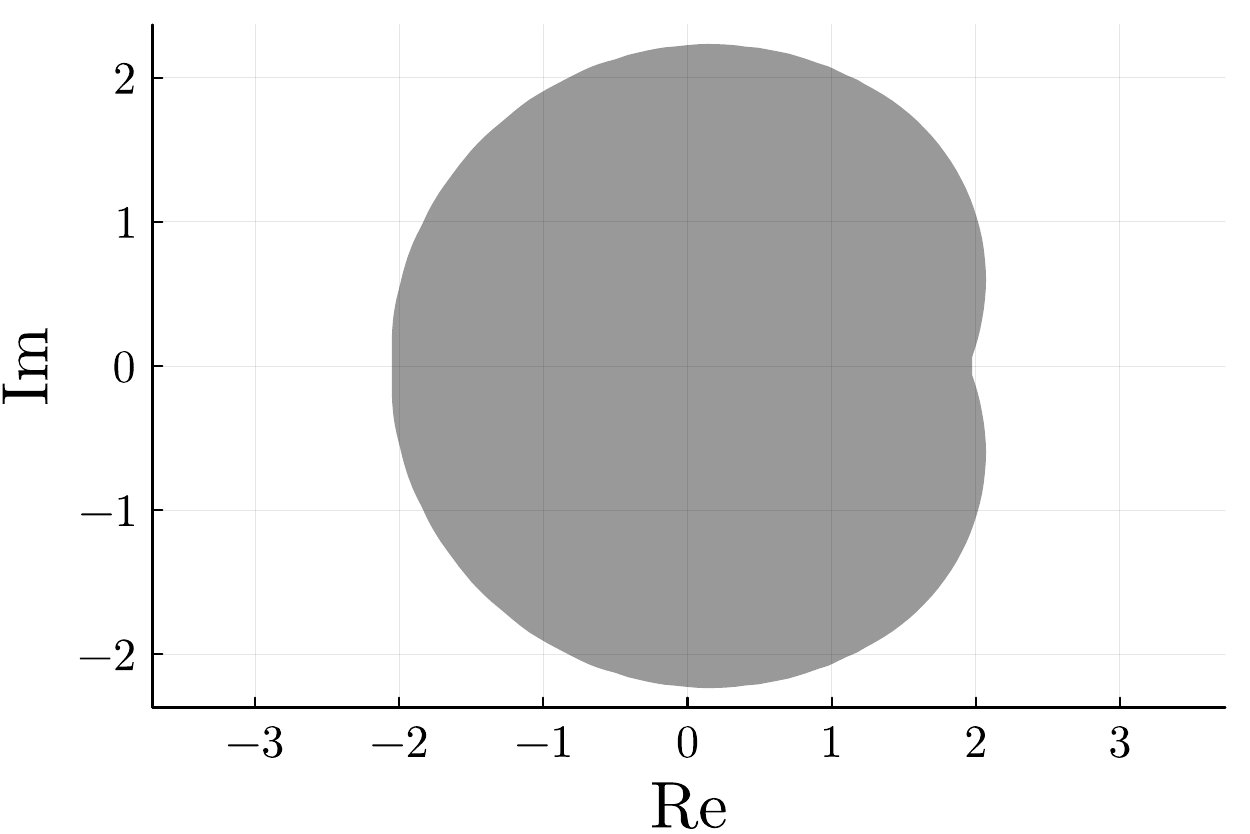}
         \caption{$\mathcal{G}_R$}
         \label{fig:example_1_k23_srgs_full}
     \end{subfigure}
    \caption{SRG analysis of the example in Section~\ref{sec:example_1_nl_network}.}
    \label{fig:example_1_k23_srgs}
\end{figure*}
} %

\begin{figure}[tb]
    \centering
    \begin{subfigure}[b]{0.49\linewidth}
        \centering
        \includegraphics[width=\linewidth]{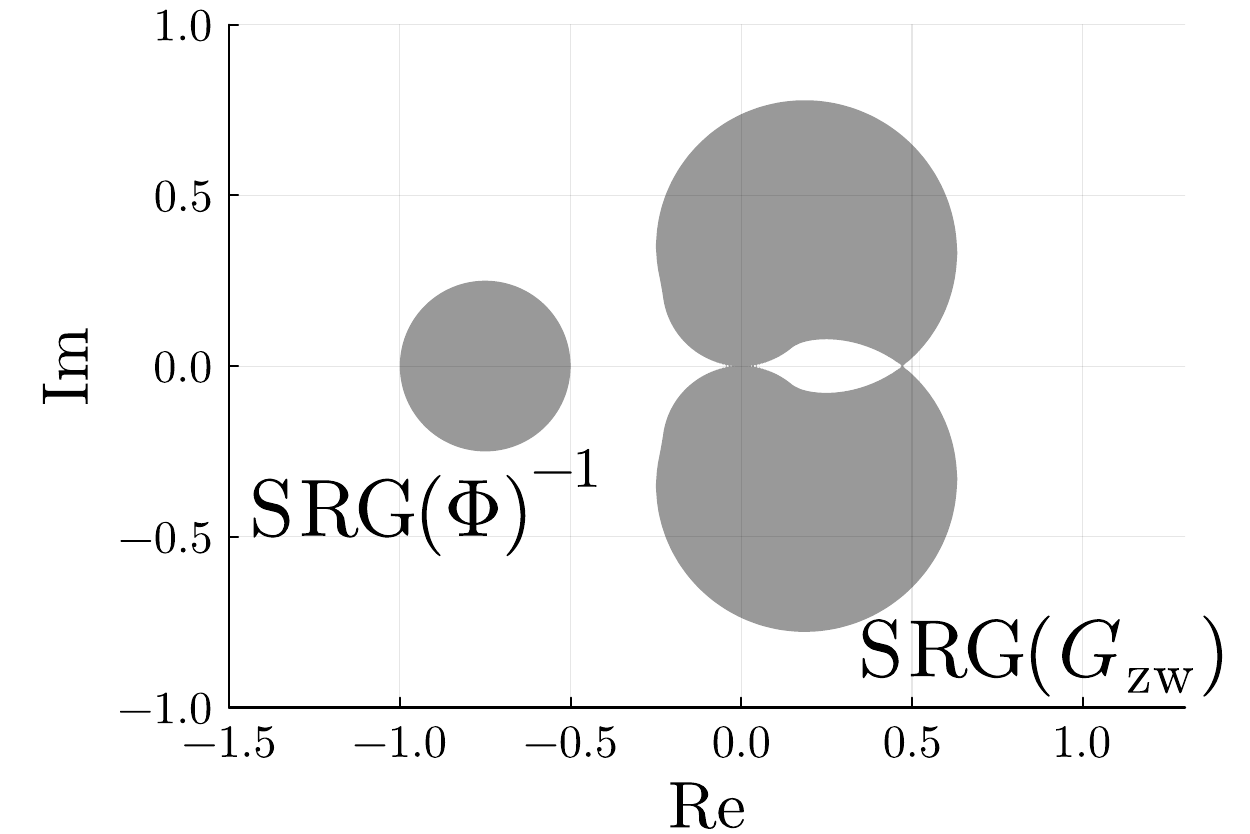}
        \caption{SRGs of $\Phi$ and $G_\mathrm{zw}$.}
        \label{fig:example_1_denom}
    \end{subfigure}
    \hfill
    \begin{subfigure}[b]{0.49\linewidth}
        \centering
        \includegraphics[width=\linewidth]{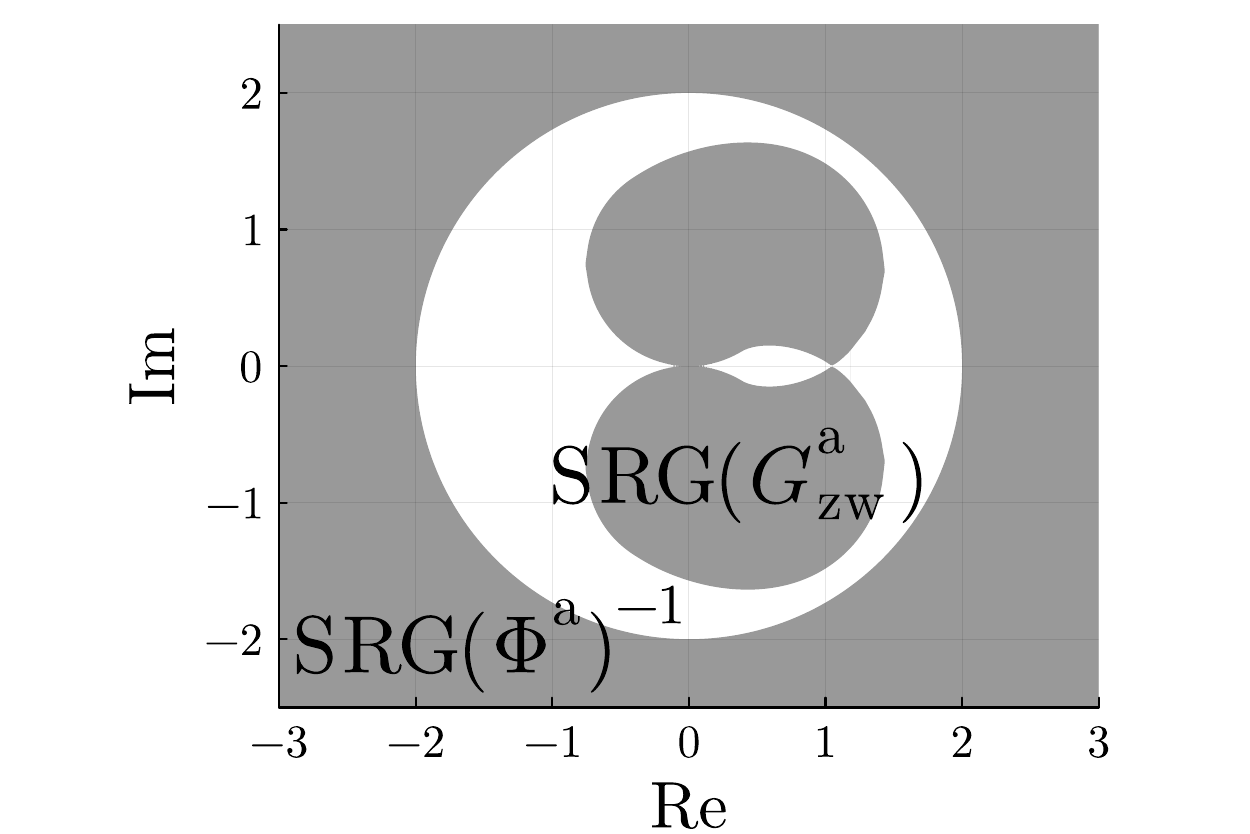}
        \caption{SRGs of $\Phi^a$ and $G_\mathrm{zw}^a$.}
        \label{fig:example_1_denom_a}
    \end{subfigure}
    \vspace*{-0.5em}
    \caption{SRG analysis of the example in Section~\ref{sec:example_1_nl_network}.}
    \label{fig:example_1_denoms}
\end{figure}

Consider the system in Fig.~\ref{fig:controlled_lure_sat}, where $P, K, \phi_1, \phi_2 : \Lte \to \Lte$ are causal SISO systems defined as
\begin{equation*}
\begin{split}
    K(s) = \frac{1}{s+1}, \quad P(s) = \frac{3}{(s-2)(s/10+1)}, \\
    \phi_1(x) = 
    \begin{cases}
        x  \text{ if } |x| \leq 1, \\ 
        x/|x|  \text{ else}, 
    \end{cases} 
    \phi_2(x) = 
    \begin{cases}
        x \text{ if } |x| \leq 1, \\ 
        2x - x/|x|  \text{ else}.
    \end{cases}
\end{split}
\end{equation*}
which is also studied as an example in~\cite{krebbekxScaledRelativeGraph2025}. Define the loop transformations $\varphi_1 := \phi_1 - \kappa_1$ and $\varphi_2 := \phi_2 - \kappa_2$, where $\kappa_1,\kappa_2 \in \R$. To write this system in LFR form $y = R u$ according to~\eqref{eq:lfr_closed_loop}, the nonlinearity becomes $\Phi : \Lte^2 \to \Lte^2$ defined by $(x, y) \mapsto (\varphi_1(x), \varphi_2(y))$. The LTI part in~\eqref{eq:lfr_lti_part} is given by 
\begin{equation}\label{eq:controlled_lure_lfr_lti}
\begin{alignedat}{2}
    G_\mathrm{zw} &= \mat{ -S \tilde{P} K & -S \tilde{P} K \\ S \tilde{P} & S \tilde{P}}, \quad && G_\mathrm{zu} = \mat{ S K \\ S L  } , \\ 
    G_\mathrm{yw} &= \mat{ S \tilde{P} & S \tilde{P}}, && G_\mathrm{yu} = S L ,
\end{alignedat}   
\end{equation}
where $\tilde{P} = \frac{P}{1 + \kappa_2 P}, L = \kappa_1 \tilde{P} K$ and $S = \frac{1}{1+ L}$. Note that $\partial \phi_1 \in [0,1]$ and $\partial \phi_2 \in [1, 2]$, therefore $\partial \varphi_1 \in [-\kappa_1, 1-\kappa_1]$ and $\partial \varphi_2 \in [1-\kappa_2, 2-\kappa_2]$. Then, according to Lemma~\ref{lemma:diagonal-static-nl}, we have the SRG bound
\begin{equation*}
    \SRG(\Phi) \subseteq D_{[\min\{-\kappa_1, 1-\kappa_2 \}, \max\{1-\kappa_1, 2-\kappa_2 \}]}.
\end{equation*}
Before we can apply Theorem~\ref{thm:srg-lfr-system}, we must make sure that $G$, defined by~\eqref{eq:lfr_lti_part} and the transfer functions in~\eqref{eq:controlled_lure_lfr_lti}, is stable. This is achieved by picking loop transformation variables $\kappa_1, \kappa_2$ such that $G$ is stable. 

We fix $\kappa_1 = 2, \kappa_2 = 3$, for which all poles $p$ of $G$ satisfy $\mathrm{Re}(p) <0$. As an alternative choice, denoted by a superscript \enquote{$\mathrm{a}$}, we consider $\kappa_1^\mathrm{a} = 0.5, \kappa_2^\mathrm{a} = 1.5$, for which $G^\mathrm{a}$ is stable as well. Note that these choices result in $\partial \varphi_1, \partial \varphi_2 \in [-2, -1]$ and $\partial \varphi_1^\mathrm{a}, \partial \varphi_2^\mathrm{a} \in [-0.5, 0.5]$ for the loop-transformed nonlinearities. We will now apply Theorem~\ref{thm:srg-lfr-system} for both choices.

The first step is to compute $\SRG(G_\mathrm{zw})$ using Theorem~\ref{thm:LTI_SRG_bound}. Since $\partial \varphi_1, \partial \varphi_2 \in [-2, -1]$, we can conclude that $\SRG(\Phi) \subseteq D_{[-2, -1]}$. The stability of $R$ depends on $[\Phi, -G_\mathrm{zw}]$, is equivalent to the requirement that, for all $\tau \in [0,1]$, $\SRG(\Phi)^{-1}$ and $\tau \SRG(G_\mathrm{zw})$ do not overlap. These graphs are plotted in Fig.~\ref{fig:example_1_denom} and are indeed separated for all $\tau \in [0,1]$. Analogously, for the alternative choice of loop transformation variables, the graphs $\SRG(\Phi^\mathrm{a})^{-1}$ and $\tau \SRG(G_\mathrm{zw}^\mathrm{a})$ are visualized in Fig.~\ref{fig:example_1_denom_a}. From Fig.~\ref{fig:example_1_denoms} it is clear that the smallest separation is attained at $\tau=1$. \extver{
    For the plot of the SRGs of $G_\mathrm{yw}, G_\mathrm{zu}$ and $G_\mathrm{yu}$, computed using Theorem~\ref{thm:LTI_SRG_bound}, see the extended version~\cite{krebbekxMIMOPaperExtended2026}. The final step is to evaluate $\mathcal{G}_R$ in~\eqref{eq:lfr_srg_bound} using the SRG interconnection rules in Proposition~\ref{prop:srg_calculus}. In each step, we applied the improved chord and arc completions from Lemma~\ref{lemma:improved_chord_arc_completions}, yielding the set $\mathcal{G}_R$ with $\rmin(\mathcal{G}_R) \leq 2.33$, therefore we can conclude that $R : \Lte \to \Lte$ is a well-posed and causal system which satisfies $\Gamma(R) \leq 2.33$. 
}{
    The SRGs of $G_\mathrm{yw}, G_\mathrm{zu}$ and $G_\mathrm{yu}$ are also computed using Theorem~\ref{thm:LTI_SRG_bound}, and visualized in Figs.~\ref{fig:example_1_k23_srgs_Gyw}, \ref{fig:example_1_k23_srgs_Gzu} and \ref{fig:example_1_k23_srgs_Gyu}, respectively. The final step is to evaluate $\mathcal{G}_R$ in~\eqref{eq:lfr_srg_bound} using the SRG interconnection rules in Proposition~\ref{prop:srg_calculus}. In each step, we apply the improved chord and arc completions from Lemma~\ref{lemma:improved_chord_arc_completions}, yielding the set $\mathcal{G}_R$ in Fig.~\ref{fig:example_1_k23_srgs_full}. Since $\rmin(\mathcal{G}_R) \leq 2.33$, we can conclude that $R : \Lte \to \Lte$ is a well-posed and causal system which satisfies $\Gamma(R) \leq 2.33$. 
}

A similar computation for the alternative loop transformation variables yields $\rmin(\mathcal{G}_R^\mathrm{a}) \leq 6.13$. This shows that the outcome of Theorem~\ref{thm:srg-lfr-system} can be optimized over all choices of loop transformations that stabilize $G$. 

This example is also studied in~\cite{krebbekxScaledRelativeGraph2025}, where a bound \mbox{$\Gamma(R) \leq 4.81$} is obtained by using SISO SRG tools only. Hence, we see that our MIMO approach yields a tighter incremental $L_2$-gain bound. Moreover, we obtain causality and well-posedness of $R$ via Theorem~\ref{thm:srg-lfr-system}, which are both properties that had to be \emph{assumed} in~\cite{krebbekxScaledRelativeGraph2025}. These improvements with respect to~\cite{krebbekxScaledRelativeGraph2025} underline the strengths and value of the proposed approach.

\extver{}{

\subsection{Two Mass-Spring-Damper System}\label{sec:example_2_msd}

We consider a system of two masses $m_1$ and $m_2$ that are connected to the solid ground and each other with a linear spring and damper, and a nonlinear spring, as depicted in Fig.~\ref{fig:nl_msd_setup}. We take external forces $u_1,u_2$ on the masses $m_1,m_2$ as inputs and as outputs the positions $x_1,x_2$ of the masses. The system is governed by
\begin{equation}\label{eq:msd_ode}
\begin{aligned}
    m_1 \ddot{x}_1 &= u_1 -k_1 x_1 - d_1 \dot{x}_1 - k_{12}(x_1-x_2) - d_{12}(\dot{x}_1 - \dot{x}_2) \\ & + \phi_1(x_1) + \phi_{12}(x_1-x_2), \\ 
    m_2 \ddot{x}_2 &= u_2 -k_2 x_2 - d_2 \dot{x}_2 + k_{12}(x_1-x_2) + d_{12}(\dot{x}_1 - \dot{x}_2) \\ & + \phi_2(x_2) - \phi_{12}(x_1-x_2),
\end{aligned}
\end{equation}
where $k_1,k_2$ and $d_1,d_2$ are the linear spring and damper coefficients, respectively. We choose parameters $m_1=0.5, m_2=3, k_1=1, k_2=2, d_1=0.3, d_2=1, d_{12}=1, k_{12}=0.5$ and nonlinear springs $\phi_1(x)=\phi_2(x)=\phi(x) := -\tanh(x)$ and $\phi_{12}(x) = 2 \tanh(x)-x$. Hence, $\phi$ is a saturating spring and $\phi_{12}$ is a negative spring for small deflection, and a regular spring for large deflection.

\begin{figure}[tb]
    \centering
    \includegraphics[width=0.65\linewidth]{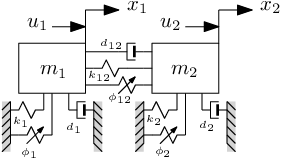}
    \caption{The nonlinear mass-spring-damper setup.}
    \label{fig:nl_msd_setup}
\end{figure}

To write~\eqref{eq:msd_ode} in LFR form~\eqref{eq:lfr_closed_loop}, we take the nonlinear function $\Phi(x, y, z) =(\phi(x),\phi(y), \phi_{12}(z))^\top$, and signals $u = (u_1,u_2)^\top, w = (\phi(x_1), \phi(x_2), \phi_{12}(x_1-x_2))^\top, y = (x_1, x_2)^\top$ and $z = (x_1,x_2,x_1-x_2)^\top$. The transfer function $G$ in~\eqref{eq:lfr_lti_part} is obtained from the state-space representation
\begin{multline*}
    G(s) = C(s I  - A)^{-1}B+D = \scalebox{0.77}{$\left[\begin{array}{c|c} A & B  \\ \hline C & D\end{array}\right]$} = \\ 
    \scalebox{0.77}{$
    \left[\begin{array}{cccc|ccccc}
        0&1&0&0 & 0&0&0&0&0 \\ 
        \frac{-k_1-k_{12}}{m_1} & \frac{-d_1-d_{12}}{m_1} & \frac{k_{12}}{m_1} & \frac{d_{12}}{m_1} & \frac{1}{m_1}&0&\frac{1}{m_1}&\frac{1}{m_1}&0 \\ 
        0&0&0&1 & 0&0&0&0&0 \\ 
        \frac{k_{12}}{m_2} & \frac{d_{12}}{m_2} & \frac{-k_2-k_{12}}{m_2} & \frac{-d_2-d_{12}}{m_2} &0&\frac{1}{m_2}&\frac{-1}{m_2}&0&\frac{1}{m_2} \\ \hline
        1&0&0&0 & 0&0&0&0&0\\ 
        0&0&1&0 & 0&0&0&0&0\\ 
        1&0&-1&0 & 0&0&0&0&0\\ 
        1&0&0&0 & 0&0&0&0&0\\ 
        0&0&1&0 & 0&0&0&0&0 \end{array}\right].
        $}
\end{multline*}

Since $\partial \phi \in [-1,0]$ and $\partial \phi_{12} \in [-1,1]$, which are not the same intervals, we want to shift and scale $\phi$ and $\phi_{12}$ such that they satisfy the same sector bound in order to tighten the bound in Lemma~\ref{lemma:diagonal-static-nl}. To that end, we define $\varphi(x) = \phi(x) + \frac{1}{2}x$ and $\varphi_{12}(x) = \frac{1}{2} \phi_{12}(x)$ such that $\partial \varphi \in [-\frac{1}{2},\frac{1}{2}]$ and \mbox{$\partial \varphi_{12} \in [-\frac{1}{2},\frac{1}{2}]$}. To accommodate for this loop transformation in the LFR, we define $\tilde{G}(s)$ by $\tilde{k}_1 = k_1+\frac{1}{2}, \tilde{k}_2 = k_2 + \frac{1}{2}$ and $\tilde{G}_{\mathrm{x} w_3} = 2 G_{\mathrm{x}  w_3}$ where $\mathrm{x}  \in \{ y_1,y_2,z_1,z_2,z_3 \}$, and all other values are the same as for $G(s)$. The nonlinear function in~\eqref{eq:lfr_closed_loop} becomes $\tilde{\Phi}(x, y, z) = (\varphi(x),\varphi(y), \varphi_{12}(z))^\top$, hence $\SRG(\tilde{\Phi}) \subseteq D_{[-\frac{1}{2}, \frac{1}{2}]}$ by Lemma~\ref{lemma:diagonal-static-nl}. 

\begin{figure}[tb]
    \centering
    \begin{subfigure}[b]{0.49\linewidth}
        \centering
        \includegraphics[width=\linewidth]{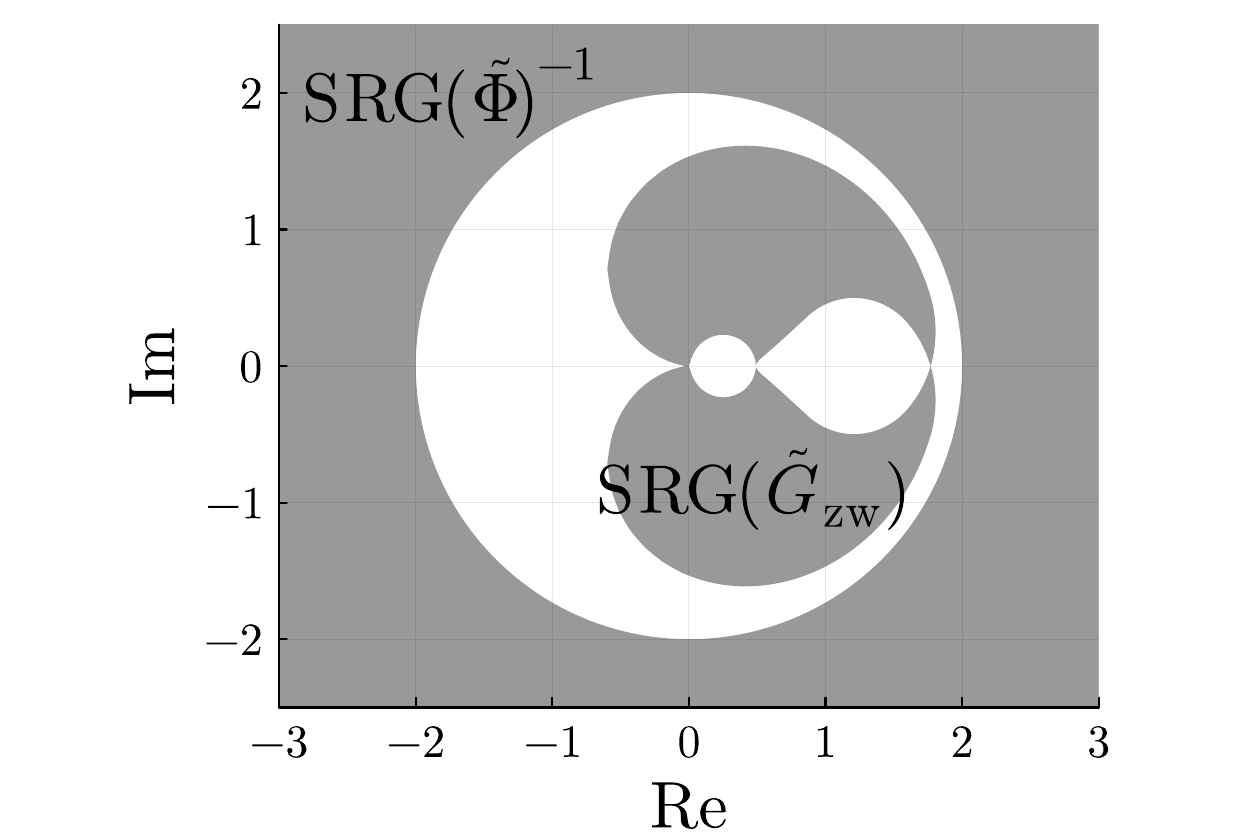}
        \caption{SRGs of $\tilde{\Phi}$ and $\tilde{G}_\mathrm{zw}$.}
        \label{fig:example_2_denom}
    \end{subfigure}
    \hfill
    \begin{subfigure}[b]{0.49\linewidth}
        \centering
        \includegraphics[width=\linewidth]{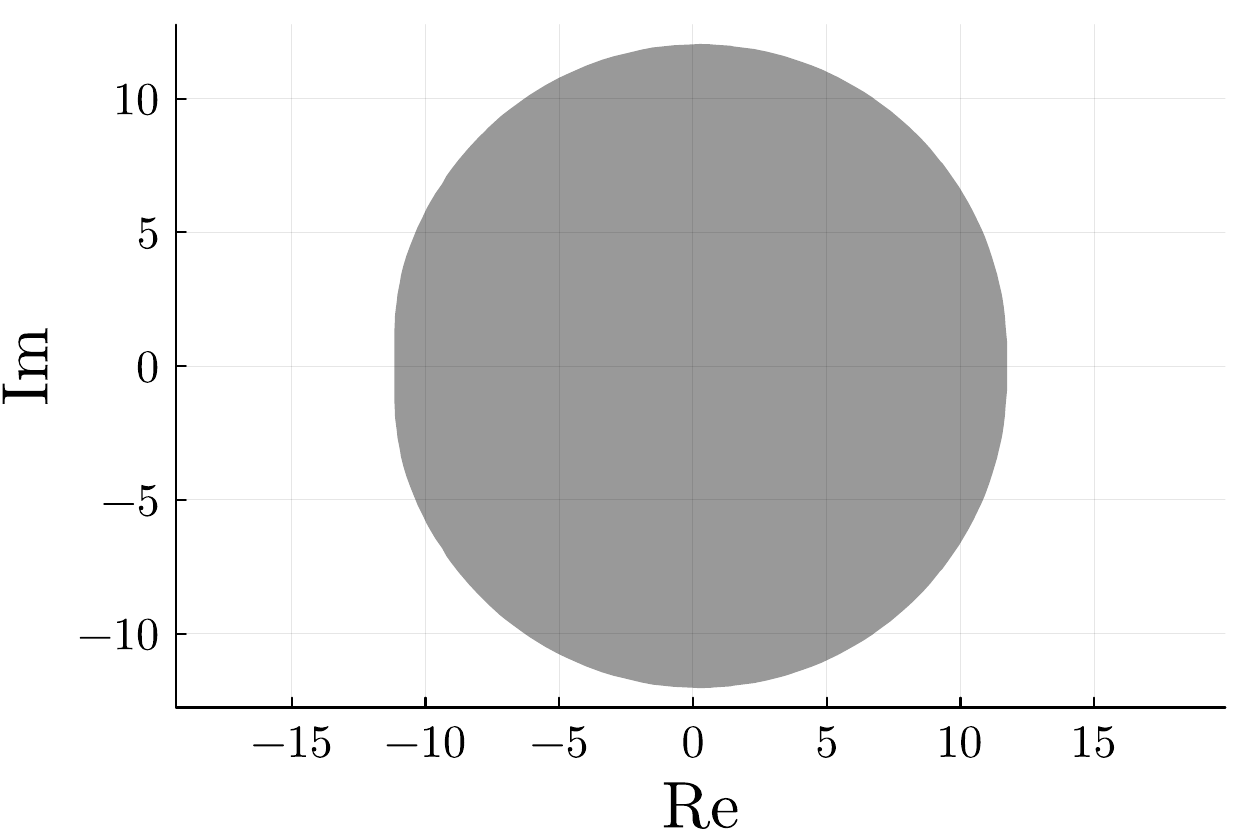}
        \caption{SRG bound $\mathcal{G}_R$ from~\eqref{eq:lfr_srg_bound}.}
        \label{fig:example_2_all}
    \end{subfigure}
\caption{SRG analysis of the example in Section~\ref{sec:example_2_msd}.}
\label{fig:example_2}
\end{figure}

To apply Theorem~\ref{thm:srg-lfr-system} we must first apply the loop transformation to obtain $\tilde{G}(s)$ and check if $\tilde{G}(s) \in RH_\infty^{5 \times 5}$, which is the case. Second, we check stability of $[\tilde{\Phi}, -\tilde{G}_\mathrm{zw}]$ by plotting their SRGs, see Fig~\ref{fig:example_2_denom}, where is clear that $\tau \SRG(\tilde{G}_\mathrm{zw})$ and $\SRG(\tilde{\Phi})^{-1}$ do not touch for all $\tau \in [0,1]$. Finally, we compute the SRG bound in~\eqref{eq:lfr_srg_bound} which yields $\rmin(\mathcal{G}_R) = 12.09$, see Fig.~\ref{fig:example_2_all}. From Theorem~\ref{thm:srg-lfr-system}, we can conclude that the system is causal and well-posed with incremental $L_2$-gain bound $\Gamma(R) \leq 12.09$. The large gain bound can be understood by the fact that the spring between $m_1$ and $m_2$ is negative, i.e., active, for small deflections. Therefore, for small inputs $u_1,u_2$, relatively large outputs $x_1,x_2$ can be expected. Also, we did not yet optimize the bound $\hat{\Gamma}$ over all possible loop transformations.

Note that in this example, we have used a loop transformation that consists of both a shift in elements of $\Phi$, and a scaling between the output of $\Phi$ and input of $G(s)$.

\subsection{Comparison with IQC Based SRG Results}\label{sec:example_3_iqc}

\begin{figure}[t]
    \centering
    \includegraphics[width=0.6\linewidth]{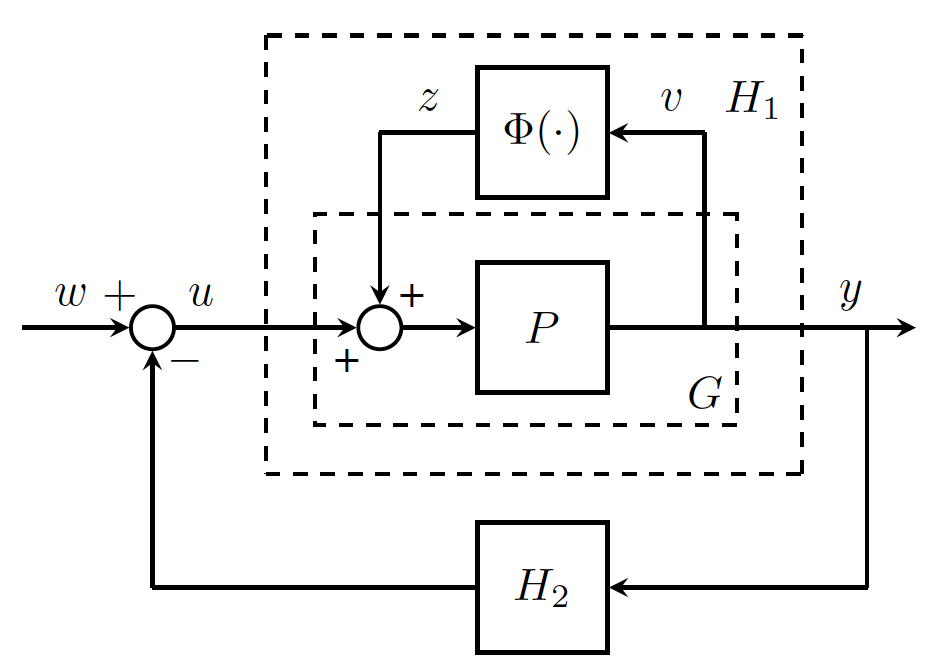}
    \caption{Feedback diagram of the example in Section~\ref{sec:example_3_iqc} (image taken from~\cite{grootExploitingStructureMIMO2025}).}
    \label{fig:example_iqc_diagram}
\end{figure}

We will now treat an example where non-incremental stability is analyzed. Consider the feedback diagram in Fig.~\ref{fig:example_iqc_diagram} where 
\begin{align*}
    P(s) &= \mat{\frac{0.1}{s+1} & \frac{1}{s^3+5s^2+2s+1} \\ \frac{0.1}{s^3+5s^2+2s+1} & \frac{0.2}{s+5}}, \\ H_2(s) &= \mat{\frac{1.7}{s^2+2s+1} & 0 \\ 0 & \frac{1.7}{s^2+3s+3}}, \quad \SG_0(\Phi) \subseteq D_{\sqrt{0.1}}(0),
\end{align*}
i.e., the nonlinearity $\Phi$ is any operator that satisfies $\gamma(\Phi) \leq \sqrt{0.1}$. This system is studied in~\cite{grootExploitingStructureMIMO2025}, where a bound for the SG of $H_1:= [P, -\Phi]$ is obtained using an IQC-based method. 

The SRG of $P$, obtained using Theorem~\ref{thm:LTI_SRG_bound}, is plotted in Fig.~\ref{fig:example_3_H1_P}. To analyze $H_1=[P, -\Phi]$ using Theorem~\ref{thm:non-incremental-mimo-srg}, we plot $SRG(P)^{-1}$ and $\SG_0(\Phi)$ in Fig.~\ref{fig:example_3_H1_denom}. Since $\SG_0(\Phi)$ is a disk and hence satisfies the chord property, we can compute the bound $(\SRG(P)^{-1}-\SG_0(\Phi))^{-1}$ for $\SG_0(H_1)$ using Proposition~\ref{prop:sg_calculus}, which is shown in Fig.~\ref{fig:example_3_H1_closedloop}. We note that the SG bound in Fig.~\ref{fig:example_3_H1_closedloop} is \emph{identical} to the result obtained in~\cite{grootExploitingStructureMIMO2025}, which uses an IQC-based analysis approach. 

\begin{figure}[t]
    \centering
    \begin{subfigure}[b]{0.27\linewidth}
        \centering
        \includegraphics[width=\linewidth]{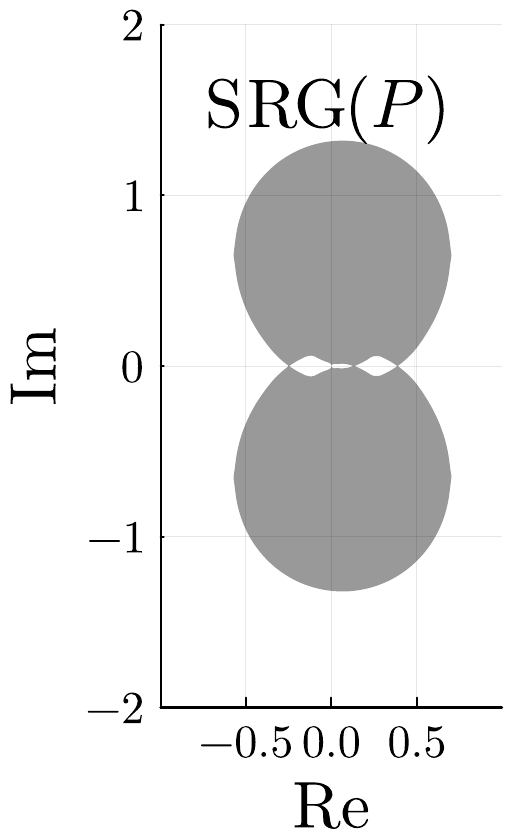}
        \caption{$\SRG(P)$}
        \label{fig:example_3_H1_P}
    \end{subfigure}
    \hfill
    \begin{subfigure}[b]{0.6\linewidth}
        \centering
        \includegraphics[width=\linewidth]{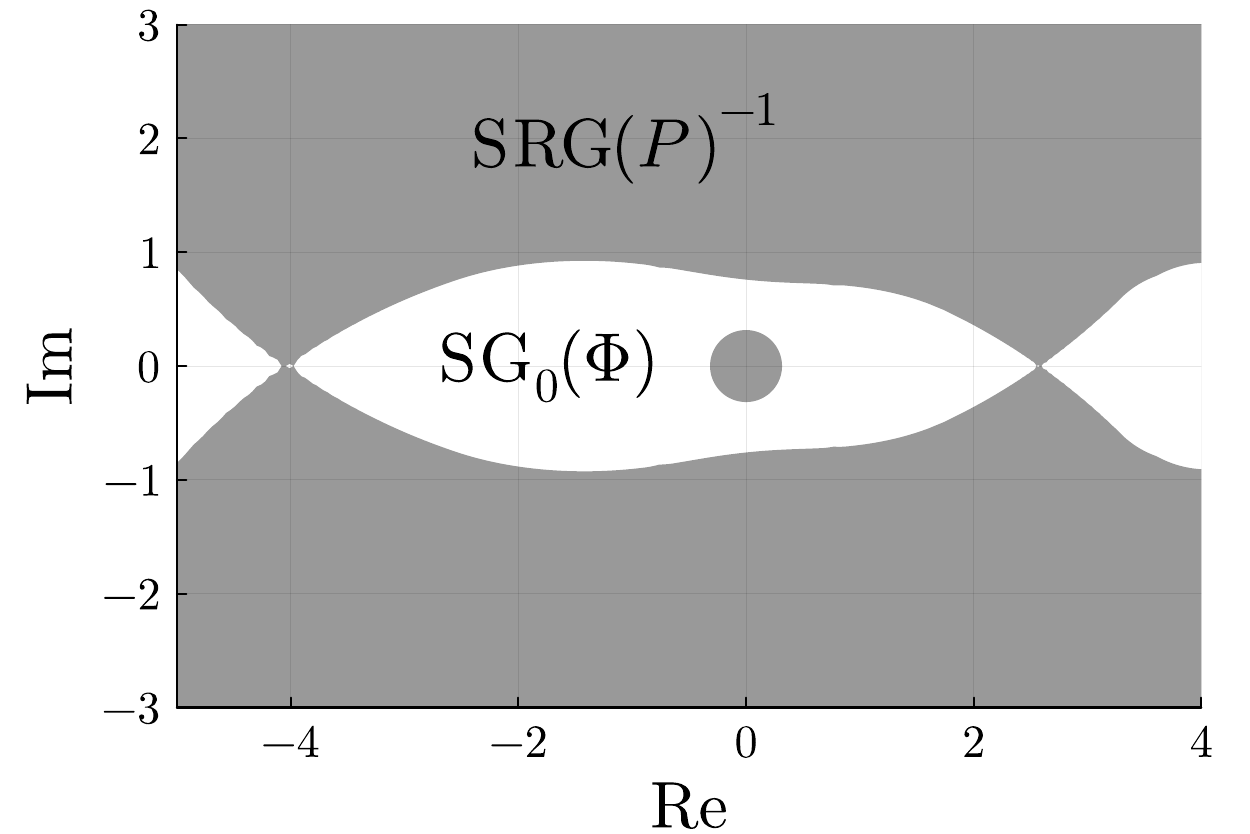}
        \caption{$\SRG(P)^{-1}$ and $\SG_0(\Phi)$}
        \label{fig:example_3_H1_denom}
    \end{subfigure}
    \hfill
    \begin{subfigure}[b]{0.37\linewidth}
        \centering
        \includegraphics[width=\linewidth]{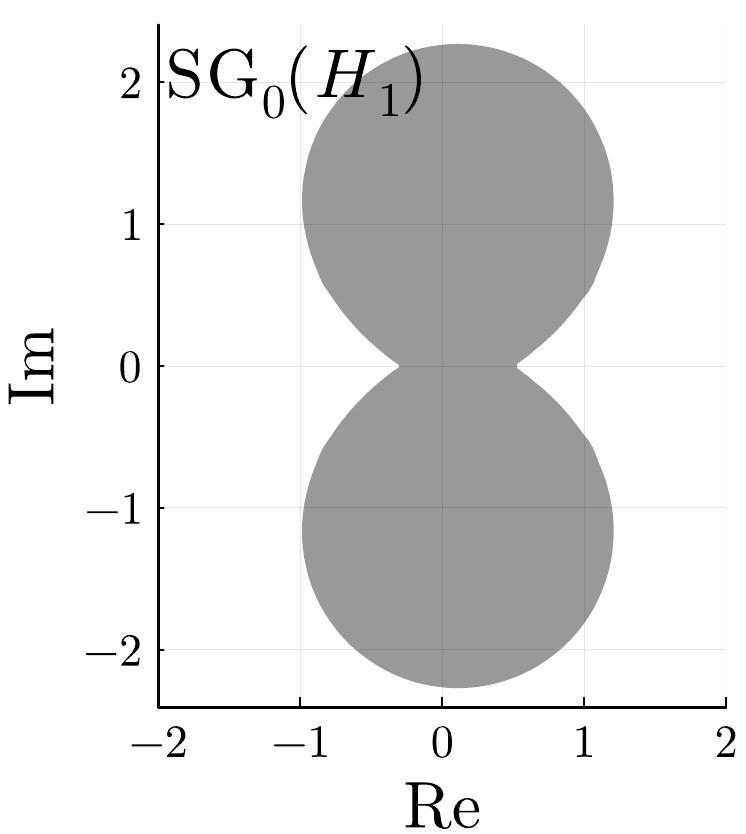}
        \caption{$\SG_0(H_1)$}
        \label{fig:example_3_H1_closedloop}
    \end{subfigure}
    \hfill
    \begin{subfigure}[b]{0.55\linewidth}
        \centering
        \includegraphics[width=\linewidth]{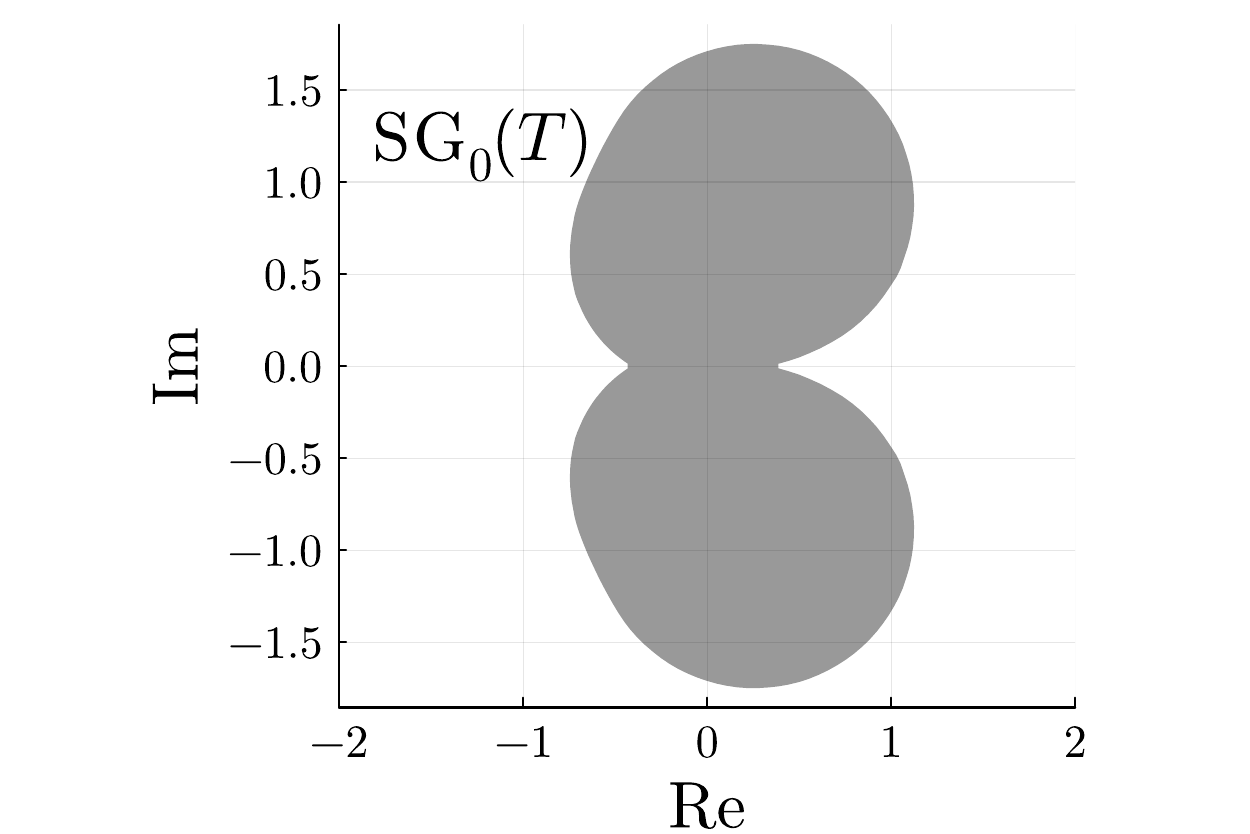}
        \caption{$\SG_0(T)$}
        \label{fig:example_3_T}
    \end{subfigure}
\caption{SG analysis of example in Section~\ref{sec:example_3_iqc}.}
\label{fig:example_3_H1}
\end{figure}

We can also compute an SG bound for $T:=[H_1, H_2]$ by noting that $[H_1, H_2] = [G,-\Phi]$ where $G=[P,H_2]$. The SRG of $G$ is computed with Theorem~\ref{thm:LTI_SRG_bound}, and the stability analysis is done using Theorem~\ref{thm:non-incremental-mimo-srg}, entirely analogous to the analysis of $H_1$. The resulting SG bound for $T$ is shown in Fig.~\ref{fig:example_3_T}, from which we conclude that $\gamma(T) \leq \rmin(\SG_0(T)) \leq 1.79 $. In~\cite{grootExploitingStructureMIMO2025}, an explicit value of the gain bound is not computed, but from their results it can be deduced that they obtain $\gamma(T) \leq \sqrt{16.38} \approx 4.05$ in their approach.   

Since we work non-incrementally, we must \emph{assume} that the systems $H_1$ and $T$ are well-posed.%

} %

\section{Conclusion}\label{sec:conclusion}

In this paper we have developed SRG analysis of nonlinear multivariable feedback systems that are interconnections of non-square MIMO operators. We started by constructing an embedding for MIMO operators into a space of square maps, while restricting the inputs to a relevant subspace when computing the SRG. Especially for tall systems, we demonstrated that this restriction can greatly reduce conservatism in the analysis of feedback systems. Then, focusing on MIMO systems in LFR form, we provided practical stability and (incremental) $L_2$-gain performance results. We derived explicit formulas for the MIMO SRG for the operators in the LFR form, giving a computable toolchain for the SRG of the entire LFR.
\extver{
    Finally, we applied our results to two examples, showing several advantages of our embedding approach over existing squaring up approaches.
}{
    Finally, we applied our results to four examples; a Lur'e system with non-square components which demonstrated several advantages of our embedding approach over existing squaring up approaches, a SISO system with multiple nonlinearities from~\cite{krebbekxScaledRelativeGraph2025}, a nonlinear MIMO mass-spring-damper system, and a MIMO feedback system from~\cite{grootExploitingStructureMIMO2025}. The advantage of our framework is that if the input/output dimensions are compatible, the user can perform computations with the MIMO SRG without carrying out the embedding explicitly. Even though this paper focused on systems in LFR form, core results of our framework can be applied to any interconnection of MIMO systems. Throughout the paper, we mentioned that there are certain degrees of freedom in MIMO SRG calculations; loop transformations that shift, scale and possibly permute the inputs and outputs of MIMO systems, These are not treated in detail here, and are the topic of possible further research.
}

\appendix

\extver{}{

\section{Interconnecting MIMO Operators}\label{sec:mimo_interconnections}

Using the MIMO SRG from Definition~\ref{def:mimo_srg}, we have the tools to define the SRG of each operator in~\eqref{eq:lfr_closed_loop}. The next step is to derive formulas to study \emph{interconnections} of these operators, i.e., operator inverses, sums and compositions. Formulas that bound the SRG of operator inverses, sums and compositions are derived in~\cite{ryuScaledRelativeGraphs2022}. However, little attention is given to the \emph{domain} and \emph{range} of the operators under consideration, which can lead to errors as pointed out in~\cite{krebbekxScaledRelativeGraph2025}. Additionally, the domain of the embedded operator plays a crucial role in the description of the original operator as explained in Section~\ref{sec:mimo_srg_def_subsection}.

Therefore, we will develop the necessary \enquote{calculus} for the SRG and SG, as defined in Section~\ref{sec:srg-definition}, that carefully handles the domain and range of operators. We also discuss the notion of adding chords and arcs to an SRG bound, which are necessary for parallel and series connections, respectively. Then, we show how these general rules for SRGs are used to study interconnections of operators using the MIMO SRG from Definition~\ref{def:mimo_srg}, focusing on systems in LFR form. The main result of this section is that once the input/output dimensions match when interconnecting operators, the user does not have to keep track of the technicalities of the embedding in Section~\ref{sec:mimo_srg_def}, and one can analyze interconnections of MIMO operators using the MIMO SRG from Definition~\ref{def:mimo_srg} and the familiar SRG calculus from~\cite{ryuScaledRelativeGraphs2022}. 

\subsection{Interconnection Rules for the SRG}\label{sec:interconnection_rules_srg}

Before we state and prove our interconnection theorems, we discuss how the domain and range of an operator influence the interconnection rules. The five operations we consider are: 1) multiplication with a nonzero real constant, 2) addition with identity, 3) inversion, 4) parallel interconnection and 5) series interconnection. The mathematical definition and their effect on the domain and range are listed in Table~\ref{tab:operations}.

As Table~\ref{tab:operations} demonstrates, the domain and range of an operator play a nontrivial role during interconnections. For example when connecting two operators $R: \dom(R) \to X, S:\dom(S) \to X$ in parallel, i.e., $R+S$, one may ask the following question: What domain of inputs is contained in $\SRG(R+S)$, and does $\SRG(R)+\SRG(S)$ (analogous to \cite{ryuScaledRelativeGraphs2022}) include all relevant inputs? To address this question, we formulate an SRG interconnection theorem with explicit dependence on the domain and range. 

\begin{table}[t]
    \centering
    \caption{Operations and their effect on the domain and range on relations $R,S$ ($0 \neq \alpha \in \R$). }
    \label{tab:operations}
    \begin{tabular}{l|ll}
       Operation & Domain & Range \\ \hline
       $\alpha R$ / $R \alpha$ & invariant / $(1/a) \dom(R)$ & $\alpha \, \ran(R)$ / invariant \\
       $I+R$ & $\dom(I+R) = \dom(R)$ & $\ran(I+R)\subseteq $ \\
       &  & $\ran(R) + \ran(I)$ \\
       $R^{-1}$ & $\ran(R)$ & $\dom(R)$ \\
       $R+S$ & $\dom(R+S) = $ & $\ran(R+S) \subseteq $ \\
       & $\dom(R) \cap \dom(S)$ & $ \ran(R)+\ran(S)$ \\
       $RS$ & $\dom(RS) \subseteq \dom(S)$ & $\ran(RS) \subseteq \ran(R)$ 
       \end{tabular}
\end{table}

\begin{proof}[Proof of Proposition~\ref{prop:srg_calculus}]
    The proof extends~\cite{ryuScaledRelativeGraphs2022}, where we now take the details of the domain and range into account. We prove each point separately. 

    $a$. As $\mathcal{U}$ is a linear subspace, we have that for any $u \in \mathcal{U} \implies \alpha u \in \mathcal{U}$. By~\eqref{eq:def_srg_angle} we have $\angle(\alpha u, y) = \angle(u, \alpha y) = \angle(u, y)$. We have $\SRG_\mathcal{U}(\alpha R) = \alpha \SRG_\mathcal{U}(R)$ per definition and since $(1/a)u \in \mathcal{U}$, we also have $\SRG_\mathcal{U}(R \alpha ) = \alpha \SRG_\mathcal{U}(R)$.
    
    $b$. From \cite{ryuScaledRelativeGraphs2022} we use $\mathrm{Re} \, z_R(u_1,u_2)= \frac{\inner{Ru_1-Ru_2}{u_1-u_2}}{\norm{u_1-u_2}^2}$ and $\mathrm{Im} \, z_R(u_1,u_2) = \pm \frac{\norm{\pi_{(u_1-u_2)^\perp}(Ru_1-Ru_2)}}{\norm{u_1-u_2}}$, where $\pi_{x^\perp}$ is the projection on the subspace orthogonal to $x$. Since for all $u_1,u_2 \in \mathcal{U}$ it holds that $\inner{(I_\mathcal{U}+R)u_1-(I_\mathcal{U}+R)u_2}{u_1-u_2} = \norm{u_1-u_2}^2 + \inner{Ru_1-Ru_2}{u_1-u_2}$ and $\pi_{(u_1-u_2)^\perp}((I_\mathcal{U}+R)u_1-(I_\mathcal{U}+R)u_2) = \pi_{(u_1-u_2)^\perp}(Ru_1-Ru_2)$, from which it follows that $\SRG_\mathcal{U}(I_\mathcal{U}+R) = 1 + \SRG_\mathcal{U}(R)$. 
    
    $c$. Per definition of the M\"obius inverse $r e^{j \phi} \mapsto (1/r) e^{j \phi}$ one has $(\SRG_\mathcal{U}(R))^{-1} \setminus \{0, \infty \} = \\ \{\frac{\norm{u_1-u_2}}{\norm{y_1-y_2}} e^{\pm j \angle(u_1-u_2,y_1-y_2)} \mid (u_1,y_1), (u_2,y_2) \in R, u_1\neq u_2,y_1\neq y_2, u_1,u_2 \in \mathcal{U} \} = \\ \{\frac{\norm{u_1-u_2}}{\norm{y_1-y_2}} e^{\pm j \angle(u_1-u_2,y_1-y_2)} \mid (y_1,u_1), (y_2,u_2) \in R^{-1}, u_1\neq u_2,y_1\neq y_2, y_1,y_2 \in R(\mathcal{U}) \} = \SRG_{R(\mathcal{U})} \setminus \{0,\infty \}$. By \cite[p. 588]{ryuScaledRelativeGraphs2022}, $0 \in \SRG_\mathcal{U}(R) \iff \infty \in \SRG_{R(\mathcal{U})}(R^{-1})$ and $\infty \in \SRG_\mathcal{U}(R) \iff 0 \in \SRG_{R(\mathcal{U})}(R^{-1})$. 
    
    $d$. The case $\infty \notin \SRG_\mathcal{U}(R) \cup \SRG_\mathcal{U}(S)$ is proven entirely analogously to \cite[Theorem 6]{ryuScaledRelativeGraphs2022} by taking $\dom(R)=\dom(S)=\mathcal{U}$. If $\SRG_\mathcal{U}(R), \SRG_\mathcal{U}(S) \neq \emptyset$, the theorem holds for $\infty \in \SRG_\mathcal{U}(R) \cup \SRG_\mathcal{U}(S)$ by defining $\infty+\infty = \infty$. %
    
    $e$. The case $\infty \notin \SRG_\mathcal{U}(R) \cup \SRG_\mathcal{Y}(T)$ is proven entirely analogously to \cite[Theorem 7]{ryuScaledRelativeGraphs2022} by taking $\dom(R)=\mathcal{U}$ and $R(\mathcal{U})\subseteq \mathcal{Y} = \dom(T)$. If $\SRG_\mathcal{U}(R), \SRG_\mathcal{Y}(T) \notin \{ \emptyset, \{ 0 \}\}$, then the theorem holds for $\infty \in \SRG_\mathcal{U}(R) \cup \SRG_\mathcal{Y}(T)$ by defining $0 \cdot \infty = \infty$ and $\infty \cdot \infty = \infty$. 
\end{proof}

\begin{remark}
    If $\mathcal{U}$ is just a set and not a linear subspace, the only thing that changes is Proposition~\ref{prop:srg_calculus}.\ref{eq:srg_calculus_alpha}, which becomes $\SRG_\mathcal{U}(\alpha R) = \SRG_{(1/a)\mathcal{U}}(R \alpha) = \alpha \SRG_\mathcal{U}(R)$, see Table~\ref{tab:operations}. 
\end{remark}

\subsection{Interconnection Rules for the SG}\label{sec:interconnection_rules_sg}

Upon minor modifications of~\cite{ryuScaledRelativeGraphs2022}, Proposition~\ref{prop:srg_calculus} can be restated for the SG, which is useful for studying non-incremental stability as opposed to incremental stability. The important novel aspect is to keep track of where $u^\star$ in $\SG_{\mathcal{U}, u^\star}(R)$ is mapped to under $R$. 

\begin{proposition}\label{prop:sg_calculus}
    Let $0 \neq \alpha \in \R$, let $R: X \to X$ be an operator and $S,T : X \to X$, be relations on a Hilbert space $X$ and linear subspaces $\mathcal{U}, \mathcal{Y} \subseteq X$ such that $R(\mathcal{U}) \subseteq \mathcal{Y}$ and $u^\star \in X, y^\star = R u^\star$. Then, 
    \begin{enumerate}[label=\alph*.]
        \item\label{eq:sg_calculus_alpha} $\SG_{\mathcal{U},u^\star}(\alpha R) = \alpha \SG_{\mathcal{U},u^\star}(R)$ and if $u^\star=0$ then also $\SG_{\mathcal{U},u^\star}(R \alpha ) = \alpha \SG_{\mathcal{U},u^\star}(R)$,
        \item\label{eq:sg_calculus_plus_one} $\SG_{\mathcal{U},u^\star}(I_\mathcal{U} + R) = 1 + \SRG_\mathcal{U}(R)$, where $I_\mathcal{U}$ obeys $I_\mathcal{U} u=u$ for all $u \in \mathcal{U} \cup \{ u^\star \}$,
        \item\label{eq:sg_calculus_inverse} $\SG_{R(\mathcal{U}), y^\star}(R^{-1}) = (\SG_{\mathcal{U},u^\star}(R))^{-1} =: \SG_{\mathcal{U},u^\star}(R)^{-1}$ (where $0,\infty \in \SG_{\mathcal{U},u^\star}(R)$ are allowed).
        \item\label{eq:sg_calculus_parallel} If at least one of $\SG_{\mathcal{U},u^\star}(R), \SG_{\mathcal{U},u^\star}(S)$ satisfies the chord property, then $\SG_{\mathcal{U},u^\star}(R + S) \subseteq \SG_{\mathcal{U},u^\star}(R) + \SG_{\mathcal{U},u^\star}(S)$.
        \item\label{eq:sg_calculus_series} If at least one of $\SG_{\mathcal{U},u^\star}(R), \SG_{\mathcal{Y},y^\star}(T)$ satisfies an arc property, then $\SG_{\mathcal{U},u^\star}(T R) \subseteq \SG_{\mathcal{Y},y^\star}(T) \SG_{\mathcal{U},u^\star}(R)$.
    \end{enumerate}
    See Definitions~\ref{def:chord_property} and \ref{def:arc_property} for the chord and arc property. The SGs above may contain $0, \infty$. If any of the SRGs above are $\emptyset, \{ 0 \}$ or $\{ \infty \}$, extra care is required, see Ref.~\cite{ryuScaledRelativeGraphs2022}. 
\end{proposition}

\begin{proof}%
    The proof largely mimics~\cite{ryuScaledRelativeGraphs2022}, where now one input is held fixed. We will only discuss the differences with the proof of Proposition~\ref{prop:srg_calculus}.

    $a$. Since $u^\star$ is fixed, we cannot use the trick $\frac{\norm{R \alpha u - R \alpha u^\star}}{\norm{u-u^\star}} = \frac{\norm{R u - R u^\star}}{\norm{(1/\alpha)(u-u^\star)}}$ anymore, since $(1/\alpha) u^\star \neq u^\star$, unless $u^\star = 0$ ($\alpha=1$ is trivial). Therefore, the result from Proposition~\ref{prop:srg_calculus} holds without the $R \alpha$ case. 
    
    $b$. Since we require that $\inner{I_\mathcal{U}(u-u^\star)}{u-u^\star} = \norm{u-u^\star}^2$, it must hold additionally that $I_\mathcal{U} u^\star = u^\star$. 
    
    $c$. Identical to Proposition~\ref{prop:srg_calculus}.
    
    $d$. Identical to Proposition~\ref{prop:srg_calculus}.
    
    $e$. Identical to Proposition~\ref{prop:srg_calculus}, but for the composition of operators one must obey $Ru^\star = y^\star$. 
\end{proof}

\begin{remark}
    In Proposition~\ref{prop:sg_calculus}, we assume that $R$ is single-valued. The theorem can be proven also in the case of relations, but one has to define the SG w.r.t. a set of inputs $u^\star \subseteq X$ instead of $u^\star \in X$, which is not explored here.
\end{remark}

The most frequently used case of Proposition~\ref{prop:sg_calculus} is when $u^\star = 0$ and $R(u^\star)=y^\star=0$. When $\mathcal{U} = L_2$, this situation corresponds to computing the non-incremental gain $\gamma(R)$.

\subsection{Interconnecting MIMO Operators}

Until now, the results in this section are general, and pertain to the SRG and SG as defined in Section~\ref{sec:srg-definition}. We will now discuss how these results can be applied to interconnections of MIMO systems, which may be non-square, using the mathematical tools from Section~\ref{sec:mimo_srg_def}. The SG case, using Proposition~\ref{prop:sg_calculus}, is analogous. 

Consider the systems $R : L_2^{p_R} \to L_2^{q_R}, S : L_2^{p_S} \to L_2^{q_S}$ and $T : L_2^{p_T} \to L_2^{q_T}$, for which we will demonstrate how interconnections are studied. The first step is to compute $n = \max_{i \in \{R,S,T\} } \{p_i, q_i\}$. Then, one computes the MIMO SRG for each operator, as defined in Definition~\ref{def:mimo_srg}. For each operation in Proposition~\ref{prop:srg_calculus}, we discuss below how the input and output dimensions influence the SRG analysis of the interconnection. 
\begin{enumerate}[label=\alph*.]
    \item \textbf{Pre/post multiplication with a real gain:} For $R$, one has $\mathcal{U} = L_2^{p_R}$, which is a linear subspace of $\mathcal{U}_{p_R}^n = \iota_{n \leftarrow p_R}(\mathcal{U})$, hence one can apply \ref{prop:srg_calculus}.\ref{eq:srg_calculus_alpha} without any further conditions. 
    \item \textbf{Addition with identity:} This operation is well-defined for systems if $p_R=q_R$. If $p_R > q_R$, then $I_\mathcal{U}$ will have more output dimensions, effectively giving $R$ an extra $p_R-q_R$ identically zero outputs. Conversely, if $p_R < q_R$, then the identity only feeds through the first $p_R$ inputs and outputs zero for the remaining channels.
    \item \textbf{Inversion:} This operation is always well-defined. One must keep in mind that $R(L_2^{p_R}) \subsetneq L_2^{q_R}$ in general. Therefore, if $\SRG(R)^{-1}$ has finite radius, one can only conclude that $R^{-1}$ has finite incremental gain on $\dom(R^{-1}) = \ran(R)$, see Table~\ref{tab:operations}. 
    \item \textbf{Parallel interconnection:} Proposition~\ref{prop:srg_calculus}.\ref{eq:srg_calculus_parallel} assumes a priori that $p_R = p_S$. For a parallel interconnection to be well-defined, the output dimensions should match as well, i.e., $q_R = q_S$, which is not enforced by Proposition~\ref{prop:srg_calculus}. If (w.l.o.g.) $q_R < q_S$, then the system $R$ has effectively gained $q_S-q_R$ identically zero outputs. 
    \item \textbf{Series interconnection:} Proposition~\ref{prop:srg_calculus}.\ref{eq:srg_calculus_series} assumes a priori that $q_R \leq p_T$ by assuming $\iota_{n \leftarrow q_R} R \pi_{p_R \leftarrow n} (\mathcal{U}_{p_R}^n) \subseteq \mathcal{U}_{p_T}^n$. For a series interconnection to be well-defined, the output dimension of $R$ should match the input dimension of $T$, i.e., $q_R = p_T$. If $q_R < p_T$, then $R$ has effectively gained $p_T-q_R$ identically zero output channels. This makes the SRG calculations conservative, since the SRG of $T$ contains more input dimensions than $R$ provides. Conversely, if $q_R > p_T$, the assumption $R(\mathcal{U}) \subseteq \mathcal{Y}$ is violated. 
\end{enumerate}

The above shows that, as long as the input/output dimensions match when interconnecting operators, the user can simply use SRG calculus with the symbol $\SRG(\cdot)$, which represents the MIMO SRG in Definition~\ref{def:mimo_srg}. In other words, once the dimensions are right when interconnecting, the subscripts, which indicate the input spaces in Propositions~\ref{prop:srg_calculus} and \ref{prop:sg_calculus}, can be dropped. 

Note that if the input/output dimensions do not match, the SRG operations in Proposition~\ref{prop:srg_calculus} are still \emph{mathematically} well-defined. However, the interconnection of operators as a system has a different interpretation.

\section{Stability Theorems}\label{sec:stability-thms}

In this section, we develop S(R)G-based stability criteria for the interconnection in Fig.~\ref{fig:small-gain-setup}. First, the incremental theory, using the SRG, is developed in Section~\ref{sec:incremental-stability-thms}. Subsequently we develop analog results for the non-incremental setting in Section~\ref{sec:non-incremental-stability-thms}, using the SG.

\subsection{Incremental Stability Theorems}\label{sec:incremental-stability-thms}

We state and prove the necessary results to rigorously prove Theorem~\ref{thm:incremental-mimo-srg}. These are 1) the small-gain theorem on arbitrary Banach spaces, and 2) a generalization of~\cite[Thm. 2]{chaffeyHomotopyTheoremIncremental2025} to arbitrary Banach spaces.

Let $n \geq \max\{p,q\}$, and embed $H_1$ and $H_2$ into $\mathcal{N}(L_2^n)$ as $\tilde{H}_1 = \iota_{n \leftarrow q} H_1 \pi_{p \leftarrow n}$ and $\tilde{H}_2 = \iota_{n \leftarrow p} H_1 \pi_{q \leftarrow n}$, respectively. Note that these embeddings obey $\tilde{H}_1 : \mathcal{U}_p^n \to \mathcal{U}_q^n$ and $\tilde{H}_2 : \mathcal{U}_q^n \to \mathcal{U}_p^n$, where $\mathcal{U}_p^n, \mathcal{U}_q^n$ are Banach spaces. This motivates the following statement of the incremental small gain theorem~\cite{desoerFeedbackSystemsInputoutput1975} on Banach spaces.

\begin{lemma}\label{lemma:banach-incremental-small-gain}
    Let $H_1 : \mathcal{U} \to  \mathcal{Y}$, $H_2: \mathcal{Y} \to \mathcal{U}$, where $\mathcal{U},  \mathcal{Y}$ are Banach spaces. If $\Gamma(H_1) \Gamma(H_2) < 1$, then $I + H_2 H_1 : \mathcal{U} \to \mathcal{U}$ is invertible and $\Gamma([H_1,H_2]) < \infty$. 
\end{lemma}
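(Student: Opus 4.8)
The plan is to obtain both well-posedness and the incremental gain bound from a single application of the Banach fixed point theorem (Theorem~\ref{thm:banach-fixed-point}) to the error signal. Recall that $[H_1,H_2] = (H_1^{-1}+H_2)^{-1}$ is the map $u \mapsto y$ determined by $e = u - H_2 y$ and $y = H_1 e$. Fix $u \in \mathcal{U}$ and define $F_u : \mathcal{U} \to \mathcal{U}$ by $F_u(e) := u - H_2 H_1 e$, which is well-defined precisely because $H_1 : \mathcal{U} \to \mathcal{Y}$ and $H_2 : \mathcal{Y} \to \mathcal{U}$. By substitution, and using that $H_1$ is single-valued, a pair $(e,y) \in \mathcal{U} \times \mathcal{Y}$ solves the loop equations if and only if $e = F_u(e)$ and $y = H_1 e$; in particular uniqueness of a fixed point $e$ forces uniqueness of the pair. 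From $\Gamma(H_1)\Gamma(H_2) < 1$ and the definition of the induced incremental norm we get
\[
\norm{F_u(e_1) - F_u(e_2)} = \norm{H_2 H_1 e_1 - H_2 H_1 e_2} \le \Gamma(H_1)\Gamma(H_2)\,\norm{e_1 - e_2},
\]
so $F_u$ is a contraction on the non-empty complete metric space $\mathcal{U}$. Theorem~\ref{thm:banach-fixed-point} yields a unique $e^\star \in \mathcal{U}$ with $e^\star = F_u(e^\star)$; setting $y^\star := H_1 e^\star$ produces the unique solution of the loop equations, which is exactly well-posedness of $[H_1,H_2]$ (Definition~\ref{def:well-posedness}, read in the Banach-space setting) and defines the operator $[H_1,H_2] : \mathcal{U} \to \mathcal{Y}$, $u \mapsto y^\star$.

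For incremental stability I would drive the loop with $u_1, u_2 \in \mathcal{U}$, producing fixed points $e_1^\star, e_2^\star$ and outputs $y_i^\star = H_1 e_i^\star$. Subtracting the identities $e_i^\star = u_i - H_2 H_1 e_i^\star$ and applying the triangle inequality with the incremental norm bounds gives
\[
\norm{e_1^\star - e_2^\star} \le \norm{u_1 - u_2} + \Gamma(H_1)\Gamma(H_2)\,\norm{e_1^\star - e_2^\star},
\]
hence $\norm{e_1^\star - e_2^\star} \le (1 - \Gamma(H_1)\Gamma(H_2))^{-1}\,\norm{u_1 - u_2}$ since $\Gamma(H_1)\Gamma(H_2) < 1$. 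Applying $H_1$ once more,
\[
\norm{y_1^\star - y_2^\star} \le \Gamma(H_1)\,\norm{e_1^\star - e_2^\star} \le \frac{\Gamma(H_1)}{1 - \Gamma(H_1)\Gamma(H_2)}\,\norm{u_1 - u_2},
\]
and by Eq.~\eqref{eq:incremental_induced_norm} this means $\Gamma([H_1,H_2]) \le \Gamma(H_1)/(1 - \Gamma(H_1)\Gamma(H_2)) < \infty$, i.e. $[H_1,H_2]$ is incrementally stable.

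The argument is essentially standard; the only points needing care are verifying the equivalence ``solution of the loop equations $\Leftrightarrow$ fixed point of $F_u$'' so that uniqueness transfers from $e^\star$ to the pair $(e^\star,y^\star)$, and the domain/codomain bookkeeping ensuring $F_u$ maps $\mathcal{U}$ into itself so that the contraction theorem applies on the complete space $\mathcal{U}$ --- this is precisely why the embeddings preceding the lemma in Section~\ref{sec:incremental-stability} are arranged so that $\tilde{H}_1 : \mathcal{U}_p^n \to \mathcal{U}_q^n$ and $\tilde{H}_2 : \mathcal{U}_q^n \to \mathcal{U}_p^n$. I do not anticipate a genuine obstacle beyond these; degenerate cases such as $\Gamma(H_1) = 0$ are covered directly by the same estimates, and the symmetric bound follows by interchanging the roles of $H_1$ and $H_2$ in the last step.
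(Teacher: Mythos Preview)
Your proof is correct and follows essentially the same approach as the paper: define the fixed-point map $F_u(e)=u-H_2H_1e$ on $\mathcal{U}$, apply the Banach fixed point theorem using $\Gamma(H_1)\Gamma(H_2)<1$ to obtain existence and uniqueness, and then derive the incremental bound on $e$ and hence on $y$ by the standard triangle-inequality estimate. Your write-up is in fact slightly more explicit than the paper's, recording the closed-form bound $\Gamma([H_1,H_2])\le \Gamma(H_1)/(1-\Gamma(H_1)\Gamma(H_2))$ and spelling out the equivalence between loop solutions and fixed points.
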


\begin{proof}%
    Fix $u \in \mathcal{U}$ and define $T_u : \mathcal{U} \to \mathcal{U}$ as $T_u x:=  u - H_2 H_1 x$. Note that for all $x,y \in \mathcal{U}$, one has
    \begin{multline*}
        \norm{T_u x - T_u y} = \norm{H_2 H_1 x - H_2 H_1 y} \\ \leq \Gamma(H_2 H_1) \norm{x-y} \leq L \norm{x-y},
    \end{multline*}
    where $L = \Gamma(H_1) \Gamma(H_2) < 1$. By the Banach fixed point theorem~\cite[Thm. 5.1.1]{kreyszigIntroductoryFunctionalAnalysis1989}, there exists a unique solution $e \in \mathcal{U}$ such that $T_u e = e$. Let $y = H_1 e$, then it is clear that $e = u- H_2 y = T_u e$ holds, i.e., $I + H_2 H_1$ is invertible. 
    
    It remains to prove incremental stability. Take $u_1,u_2 \in \mathcal{U}$, then by $e_1-e_2 = u_1-u_2 + H_2 H_1  e_2 - H_2 H_1  e_1$ we obtain $\norm{e_1-e_2} = \norm{u_1-u_2} + \Gamma(H_1) \Gamma(H_2) \norm{e_1 -e_2}$, hence $\norm{e_1-e_2} \leq \frac{1}{1-\Gamma(H_1) \Gamma(H_2)} \norm{u_1-u_2}$, proving continuity of $e$ in $u$. Continuity of $y$ in $u$ follows from $\norm{y_1 - y_2} \leq \Gamma(H_1) \norm{e_1-e_2}$, which proves stability. 
\end{proof}

The important feature of Lemma~\ref{lemma:banach-incremental-small-gain} is that it allows one to analyze a feedback interconnection in terms of $H_1,H_2$, or their embeddings $\tilde{H}_1,\tilde{H}_2$, on equal footing. Since~\cite[Thm. 2]{chaffeyHomotopyTheoremIncremental2025} assumes square operators with full-domain, it does not apply to Fig.~\ref{fig:small-gain-setup} for non-square operators, i.e., mapping from one Banach space to another. The following lemma provides the necessary generalization.

\begin{lemma}\label{lemma:banach-space-incremental-homotopy}
    Let $H_1 : \mathcal{U} \to  \mathcal{Y}$ and $H_2: \mathcal{Y} \to \mathcal{U}$, where $\mathcal{U},  \mathcal{Y}$ are Banach spaces, such that 
    \begin{itemize}
        \item $\Gamma(H_1) < \infty$ and $\Gamma(H_2) < \infty$,
        \item $\exists \, \hat{\Gamma}>0$ such that $\Gamma([H_1, \tau H_2]) \leq \hat{\Gamma}$, for all $\tau \in [0,1]$.
    \end{itemize}
    Then, $I + \tau H_2 H_1$ is invertible for all $\tau \in [0,1]$. 
\end{lemma}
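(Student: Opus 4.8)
The plan is to adapt the homotopy argument of~\cite[Theorem 2]{chaffeyHomotopyTheoremIncremental2024} to the Banach-space setting, using Lemma~\ref{lemma:banach-incremental-small-gain} in place of the Hilbert-space incremental small-gain theorem. Define the set $\mathcal{S} := \{ \tau \in [0,1] \mid [H_1, \tau H_2] \text{ is well-posed with } \Gamma([H_1,\tau H_2]) \leq \hat{\Gamma} \}$. I will show $\mathcal{S}$ is nonempty, open in $[0,1]$, and closed in $[0,1]$; connectedness of $[0,1]$ then forces $\mathcal{S} = [0,1]$, which gives the claim (the bound $\Gamma([H_1,H_2]) \leq \hat{\Gamma}$ being the case $\tau=1$).

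\emph{Nonemptiness.} At $\tau = 0$ we have $[H_1, 0] = H_1$, which is well-posed trivially and satisfies $\Gamma(H_1) \leq \hat{\Gamma}$ by the second hypothesis (applied at $\tau=0$), so $0 \in \mathcal{S}$.

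\emph{Openness.} Suppose $\tau_0 \in \mathcal{S}$. Write, for $\tau$ near $\tau_0$, the interconnection $[H_1, \tau H_2]$ as a perturbation of $[H_1, \tau_0 H_2]$: formally one rewrites the closed loop so that $[H_1,\tau H_2]$ is the feedback interconnection of the stable operator $[H_1, \tau_0 H_2]$ (which has incremental gain $\leq \hat{\Gamma}$) with the operator $(\tau - \tau_0) H_2$, whose incremental gain is $|\tau - \tau_0|\,\Gamma(H_2)$. Concretely, $e = u - \tau H_2 H_1 e = (u - (\tau-\tau_0)H_2 H_1 e) - \tau_0 H_2 H_1 e$, so the map solved at parameter $\tau$ is a small additive feedback perturbation of the one solved at $\tau_0$; the relevant closed-loop operator is $[\,[H_1,\tau_0 H_2]\,,\,(\tau-\tau_0)H_2\,]$ on the Banach spaces $\mathcal{U}, \mathcal{Y}$. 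When $|\tau - \tau_0| < 1/(\hat{\Gamma}\,\Gamma(H_2))$ (and $<1/(\hat\Gamma\Gamma(H_2))^2$-type bookkeeping is handled cleanly because $\Gamma(H_2)<\infty$), Lemma~\ref{lemma:banach-incremental-small-gain} gives well-posedness of this perturbed loop. One then checks that its solutions coincide with those of $[H_1,\tau H_2]$, so $[H_1,\tau H_2]$ is well-posed; its incremental gain is then $\leq \hat{\Gamma}$ by the second hypothesis. Hence a whole neighbourhood of $\tau_0$ lies in $\mathcal{S}$.

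\emph{Closedness.} Let $\tau_k \in \mathcal{S}$ with $\tau_k \to \tau_\infty \in [0,1]$. By the openness argument run backwards: pick $\tau_k$ with $|\tau_k - \tau_\infty| < 1/(\hat\Gamma\,\Gamma(H_2))$; then $[H_1,\tau_\infty H_2]$ is the small additive feedback perturbation $[\,[H_1,\tau_k H_2]\,,\,(\tau_\infty - \tau_k)H_2\,]$ of the stable loop $[H_1,\tau_k H_2]$, and Lemma~\ref{lemma:banach-incremental-small-gain} again yields well-posedness of $[H_1,\tau_\infty H_2]$. The gain bound $\Gamma([H_1,\tau_\infty H_2]) \leq \hat{\Gamma}$ holds by hypothesis. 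So $\tau_\infty \in \mathcal{S}$.

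Since $[0,1]$ is connected and $\mathcal{S}$ is nonempty, open and closed in it, $\mathcal{S} = [0,1]$. In particular $[H_1,\tau H_2]$ is well-posed for all $\tau \in [0,1]$, and taking $\tau=1$ gives incremental stability of $[H_1,H_2]$ with $\Gamma([H_1,H_2]) \leq \hat{\Gamma}$. \qed

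I expect the main obstacle to be the bookkeeping in the openness step: one must verify that the solutions of the ``perturbed'' interconnection $[\,[H_1,\tau_0 H_2]\,,\,(\tau-\tau_0)H_2\,]$ are genuinely in bijection with those of $[H_1,\tau H_2]$ (so that well-posedness transfers), and that the relevant operators indeed map the Banach subspaces $\mathcal{U},\mathcal{Y}$ into themselves so Lemma~\ref{lemma:banach-incremental-small-gain} applies verbatim. The uniformity of the radius $1/(\hat\Gamma\,\Gamma(H_2))$ in $\tau_0$ is what makes both the open and closed parts work and ultimately drives the continuation argument; the second hypothesis is exactly what prevents the gain from blowing up along the homotopy, so one never needs to re-derive a gain bound, only well-posedness.
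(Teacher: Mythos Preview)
Your proposal is correct and takes essentially the same approach as the paper: the paper's proof simply notes that the argument is ``almost the same as \cite[Theorem 2]{chaffeyHomotopyTheoremIncremental2024}, but we use the Banach space incremental small gain theorem Lemma~\ref{lemma:banach-incremental-small-gain},'' and your open/closed continuation argument with uniform step size $1/(\hat\Gamma\,\Gamma(H_2))$ is exactly that adaptation spelled out (the paper's non-incremental analogue, Lemma~\ref{lemma:non-incremental-homotopy-theorem}, is proved by the equivalent inductive stepping version of the same idea). The bookkeeping concern you flag about identifying solutions of $[\,[H_1,\tau_0 H_2],(\tau-\tau_0)H_2\,]$ with those of $[H_1,\tau H_2]$ is real but routine, and the paper does not elaborate on it either.
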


\begin{proof}%
    Similar to \cite[Theorem 2]{chaffeyHomotopyTheoremIncremental2025}, except that we must use Lemma~\ref{lemma:banach-incremental-small-gain}, as~\cite{chaffeyHomotopyTheoremIncremental2025} is limited to $\mathcal{U} = \mathcal{Y} = L_2$. 
\end{proof}

The calculus of SRGs, as developed in Section~\ref{sec:interconnection_rules_srg}, provides us with a bound $\Gamma([\tilde{H}_1, \tilde{H}_2])$, while $\dom([\tilde{H}_1, \tilde{H}_2]) = \ran(\tilde{H}_1^{-1}+\tilde{H}_2) \subseteq \mathcal{U}_p^n$ is unknown. The practical use of Lemma~\ref{lemma:banach-space-incremental-homotopy} is to establish that $\dom([\tilde{H}_1, \tilde{H}_2]) = \mathcal{U}_p^n$. We are now in shape give the proof of Theorem~\ref{thm:incremental-mimo-srg} for the analysis of incrementally stable systems using SRGs.

\begin{proof}[Proof of Theorem~\ref{thm:incremental-mimo-srg}]
    For $n \geq \max \{p, q\}$, it is understood that $\SRG(H_1):=\SRG_{\mathcal{U}_p^n}(\tilde{H}_1)$ and $\SRG(H_2):=\SRG_{\mathcal{U}_q^n}(\tilde{H}_2)$, where $\tilde{H}_1 = \iota_{n \leftarrow q} H_1 \pi_{p \leftarrow n}$ and $\tilde{H}_2 = \iota_{n \leftarrow p} H_1 \pi_{q \leftarrow n}$. Therefore, $\tilde{H}_1 : \mathcal{U}_p^n \to \mathcal{U}_q^n$ and $\tilde{H}_2 : \mathcal{U}_q^n \to \mathcal{U}_p^n$ and~\eqref{eq:finite_srg_radii} implies $\Gamma(\tilde{H}_1)<\infty$, $\Gamma(\tilde{H}_2) < \infty$. 
    
    From~\eqref{eq:finite_srg_separation} and Proposition~\ref{prop:srg_calculus} we know that $\rmin((\SRG(H_1)^{-1} + \tau \SRG(H_2))^{-1}) \leq 1/r$, and so $\Gamma([\tilde{H}_1,\tau \tilde{H}_2]) \leq 1/r_\tau \leq 1/r$. By Lemma~\ref{lemma:banach-space-incremental-homotopy}, we can conclude that $I+ \tau \tilde{H}_2 \tilde{H_1}$ is invertible on $\mathcal{U}_p^n$ for all $\tau \in [0,1]$ and therefore $\tilde{H_1} (I+ \tilde{H}_2 \tilde{H_1})^{-1} = [\tilde{H}_1, \tilde{H_2}] : \mathcal{U}_p^n \to \mathcal{U}_q^n$ with $\Gamma([\tilde{H}_1, \tilde{H_2}]) \leq 1/r_1$. 

    The final step is to transfer the result to $[H_1,H_2]$. We use the fact that $\pi_{p \leftarrow n} : \mathcal{U}_p^n \to L_2^p$ is an isometric isomorphism with inverse $\iota_{n \leftarrow p}$ (by Lemma~\ref{lemma:injection-isomorphism} and \ref{lemma:projection-isomorphism}). For $u, e \in L_2^p, y \in L_2^q$, define $\tilde{u} = \iota_{n \leftarrow p} u \in \mathcal{U}_p^n, \tilde{e} = \iota_{n \leftarrow p} e \in \mathcal{U}_p^n$ and $\tilde{y} = \iota_{n \leftarrow q} y \in \mathcal{U}_q^n$. Since $\iota$ is a bijection, we have 
    \begin{align*}
        u = e + H_2 H_1 e & \iff \tilde{u} = \tilde{e} + \tilde{H}_2 \tilde{H}_1 \tilde{e}, \\ 
        y = H_1 e & \iff \tilde{y} = \tilde{H}_1 \tilde{e},
    \end{align*}
    hence $I + H_2 H_1$ is invertible on $L_2^p$ if and only if $I+\tilde{H}_2 \tilde{H}_2$ is invertible on $\mathcal{U}_p^n$. Moreover, since $\pi$ is an isomorphism, one has $\norm{u}=\norm{\tilde{u}}, \norm{e}=\norm{\tilde{e}}$ and $\norm{y}=\norm{\tilde{y}}$, resulting in 
    \begin{equation*}
        \Gamma([H_1,H_2]) = \Gamma([\tilde{H}_1, \tilde{H}_2]). 
    \end{equation*}

    Note that $H_1$ and $H_2$ are causal if and only if $\tilde{H}_1$ and $\tilde{H}_2$ are causal. Causality of $(I+\tilde{H}_2 \tilde{H}_1)^{-1}$ (and hence $(I + H_2 H_1)^{-1}$) follows from the Banach fixed point theorem on the time axis $[0,T]$, see~\cite{desoerFeedbackSystemsInputoutput1975}. Finally, well-posedness of $[H_1, H_2] = H_1 (I+H_2 H_1)^{-1}$ follows from the causality of $H_1$ and causal invertibility of $I+H_2 H_1$.
\end{proof}

\subsection{Non-Incremental Stability Theorems}\label{sec:non-incremental-stability-thms}

The results of~\cite{chaffeyHomotopyTheoremIncremental2025} and~\cite[Thm. 4]{vandeneijndenScaledGraphsReset2024} are restricted to square operators with full-domain SRGs and are therefore inapplicable to the restricted-domain MIMO SRGs (Definition~\ref{def:mimo_srg}) for analyzing Fig.~\ref{fig:small-gain-setup}. Moreover, the non-incremental theory is not formally developed in~\cite{chaffeyHomotopyTheoremIncremental2025} (even though it is mentioned). To address these gaps, we state the non-incremental small-gain theorem and derive the corresponding restricted-domain generalizations of~\cite[Thm.~2]{chaffeyHomotopyTheoremIncremental2025} and~\cite[Thm. 4]{vandeneijndenScaledGraphsReset2024}, yielding Lemma~\ref{lemma:non-incremental-homotopy-theorem} and Theorem~\ref{thm:non-incremental-mimo-srg}. These results are required for a rigorous treatment of restricted input spaces and follow via direct generalizations of the arguments in~\cite{chaffeyHomotopyTheoremIncremental2025}.

In the incremental setting of Section~\ref{sec:incremental-stability-thms}, causality of $(I+H_2 H_1)^{-1}$ could be obtained causality of $H_1,H_2$ alone, which is not possible in the non-incremental setting~\cite{willemsAnalysisFeedbackSystems1971}. Therefore, we assume throughout that $H_1, H_2$ are causal and allow inputs in extended signal spaces. Also, we must assume well-posedness of $[H_1, H_2]$ if we want to guarantee a causal feedback system.

\begin{lemma}\label{lemma:non-incremental-small-gain-theorem}
    Consider the causal systems $H_1 : L_2^p \to L_2^q$ and $H_2 : L_2^q \to L_2^p$. If $I + H_2 H_1$ has a causal inverse on $\Lte^p$ and $\gamma(H_1) \gamma(H_2) < 1$, then $[H_1, H_2] : L_2^p \to L_2^q$ is well-posed with finite non-incremental gain.
\end{lemma}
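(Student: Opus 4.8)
The plan is to fix $u \in L_2^p$ and use well-posedness to extract the unique $e \in \Lte^p$, $y \in \Lte^q$ solving $e = u - H_2 y$, $y = H_1 e$; the goal is to show $e \in L_2^p$ (equivalently $y \in L_2^q$), since $[H_1,H_2]u = y$. Because we only have inputs in the extended space, the natural tool is a truncation/fading-memory argument: work with the truncated signals $P_T e$, $P_T y$ for each $T \in \R_{\geq 0}$, derive a $T$-uniform bound on $\norm{P_T e}$ and $\norm{P_T y}$, and then let $T \to \infty$ using the monotone convergence theorem to conclude $e \in L_2^p$, $y \in L_2^q$.

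First I would invoke causality. Under the standing assumption that $H_1$ and $H_2$ are causal on $L_2$, extended to $\Lte$ via $P_T H_i = P_T H_i P_T$, we get $P_T y = P_T H_1 P_T e$ and $P_T e = P_T u - P_T H_2 P_T y$. Now I would apply the non-incremental gain bounds on the truncated signals: $\norm{P_T y} = \norm{P_T H_1 P_T e} \le \gamma(H_1)\norm{P_T e}$ and $\norm{P_T H_2 P_T y} \le \gamma(H_2)\norm{P_T y}$, where these gains are finite by hypothesis and carry over to $\Lte$ in the truncated sense as recorded in the preliminaries (the analogue of the $\Gamma$-identity for $\gamma$). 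Combining, $\norm{P_T e} \le \norm{P_T u} + \gamma(H_2)\norm{P_T y} \le \norm{P_T u} + \gamma(H_1)\gamma(H_2)\norm{P_T e}$. Since $\gamma(H_1)\gamma(H_2) < 1$, this rearranges to $\norm{P_T e} \le (1 - \gamma(H_1)\gamma(H_2))^{-1}\norm{P_T u} \le (1 - \gamma(H_1)\gamma(H_2))^{-1}\norm{u}$, a bound independent of $T$. Taking $T \to \infty$ and using $\norm{P_T e} \uparrow \norm{e}$ gives $e \in L_2^p$; then $y = H_1 e \in L_2^q$ since $H_1 : L_2^p \to L_2^q$, so $[H_1,H_2]u = y \in L_2^q$, i.e. $[H_1,H_2] : L_2^p \to L_2^q$. (As a byproduct one also gets $\gamma([H_1,H_2]) \le \gamma(H_1)(1-\gamma(H_1)\gamma(H_2))^{-1}$, though the statement only asks for $L_2$-stability.)

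The main obstacle I anticipate is justifying the use of the truncated non-incremental gain inequalities: the gains $\gamma(H_i)$ are defined via signals in $L_2$, so I must be careful that $P_T e \in L_2^p$ and $P_T y \in L_2^q$ (true since $e \in \Lte^p$, $y \in \Lte^q$ by well-posedness) and that causality genuinely lets me write $P_T H_1 e = P_T H_1 P_T e$ with the right-hand side governed by $\gamma(H_1)$. This is exactly the content of the remark in the preliminaries that for causal $R:L_2^p\to L_2^q$ the gain carries over to $\Lte^p$; I would cite that (and the corresponding statement for $\gamma$) rather than re-derive it. A secondary point to handle cleanly is that well-posedness is only assumed for $u \in L_2^p$, which is precisely the regime we need, so no extension of the hypothesis is required — the argument never touches $u \in \Lte^p \setminus L_2^p$.
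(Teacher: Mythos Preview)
Your argument is correct and is exactly the classical truncation proof of the non-incremental small gain theorem; the paper does not spell this out but simply cites \cite[Theorem III.2.1]{desoerFeedbackSystemsInputoutput1975}, whose proof is precisely the one you wrote. Your care about causality (needed to pass from $P_T H_i$ to $P_T H_i P_T$) and about $P_T e, P_T y \in L_2$ (so that the $\gamma$-bounds apply) is well placed and matches the standing assumptions of the paper.
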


\begin{proof}
    See~\cite[Theorem III.2.1]{desoerFeedbackSystemsInputoutput1975}. 
\end{proof}

\begin{lemma}\label{lemma:non-incremental-homotopy-theorem}
    Consider the causal systems $H_1 : L_2^p \to L_2^q$ and $H_2 : L_2^q \to L_2^p$. If $\gamma(H_1) < \infty$ and $\gamma(H_2) < \infty$ and
    \begin{itemize}
        \item $I+ \tau H_2 H_1$ has a causal inverse on $\Lte^p$ for all $\tau \in [0,1]$,
        \item $\exists \hat{\gamma}>0$ such that $\gamma([H_1, \tau H_2]) \leq \hat{\gamma}$ for all $\tau \in [0,1]$,
    \end{itemize}
    then $[H_1, H_2] : L_2^p \to L_2^q$ is well-posed with non-incremental gain bound $\gamma([H_1, H_2]) \leq \hat{\gamma}$. 
\end{lemma}

\begin{proof}%
    Let $\nu \in [0,1/(\gamma(H_1)\gamma(H_2))$ and write $T_\nu = [H_1, \nu H_2]$. By the causal invertibility assumption one has $T_\nu : L_2^p \to \Lte^q$. By Lemma~\ref{lemma:non-incremental-small-gain-theorem}, one has $T_\nu : L_2^p \to L_2^q$ with $\gamma(T_\nu) \leq \hat{\gamma}$. For all $\tau \in [0,1/(\hat{\gamma} \gamma(H_2))$, one again applies the small gain theorem Lemma~\ref{lemma:non-incremental-small-gain-theorem} to conclude that $T_{\nu+\tau} : L_2 \to L_2$ with $\gamma(T_{\nu+\tau}) \leq \hat{\gamma}$. Proceeding inductively $N$ times until $\nu + N \tau=1$, as in the proof of~\cite[Theorem 2]{chaffeyHomotopyTheoremIncremental2025} proves the result.
\end{proof}

Now we can state and prove the non-incremental analog of Theorem~\ref{thm:incremental-mimo-srg}, which generalizes~\cite[Thm. 4]{vandeneijndenScaledGraphsReset2024} to the non-square case.

\begin{theorem}\label{thm:non-incremental-mimo-srg}
    Consider the causal systems $H_1 : L_2^p \to L_2^q$ and $H_2 : L_2^q \to L_2^p$, where at least one of $\SG_0(H_1), \SG_0(H_2)$ satisfies the chord property. If for all $\tau \in [0,1]$, the map $I + H_2 H_1$ has a causal inverse on $\Lte^p$ and
    \begin{subequations}
    \begin{align}
        &\rmin(\SG_0(H_1)) < \infty \text{ and } \rmin(\SG_0(H_2)) < \infty, \label{eq:non-incr-finite_srg_radii}\\
        &\dist(\SG_0(H_1)^{-1},- \tau \SG_0(H_2)) \geq r_\tau \geq r >0, \label{eq:non-incr-finite_srg_separation}
    \end{align}
    \end{subequations}
    then $[H_1, H_2] : L_2^p \to L_2^q$ is well-posed and non-incrementally stable with $\gamma([H_1,H_2]) \leq 1/r_1$.
\end{theorem}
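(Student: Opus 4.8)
The plan is to mirror the structure of the proof of Theorem~\ref{thm:incremental-mimo-srg}, replacing every incremental object by its non-incremental counterpart and Lemma~\ref{lemma:banach-space-incremental-homotopy} by Lemma~\ref{lemma:non-incremental-homotopy-theorem}. First I would embed $H_1$ and $H_2$ into $\mathcal{N}(L_2^n)$ with $n := \max\{p,q\}$ as $\tilde H_1 = \iota_{n \leftarrow q} H_1 \pi_{p \leftarrow n}$ and $\tilde H_2 = \iota_{n \leftarrow p} H_2 \pi_{q \leftarrow n}$, so that $\tilde H_1 : \mathcal{U}_p^n \to \mathcal{U}_q^n$ and $\tilde H_2 : \mathcal{U}_q^n \to \mathcal{U}_p^n$. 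By Definition~\ref{def:mimo_srg} the SGs at zero are unchanged, $\SG_0(\tilde H_i) = \SG_0(H_i)$, and well-posedness of $[H_1,\tau H_2]$ transfers to $[\tilde H_1,\tau \tilde H_2]$ via the isometric isomorphisms of Lemmas~\ref{lemma:injection-isomorphism} and \ref{lemma:projection-isomorphism}; the radius condition~\eqref{eq:non-incr-finite_srg_radii} gives $\gamma(\tilde H_1) = \rmin(\SG_0(H_1)) < \infty$ and likewise for $\tilde H_2$, which is the first hypothesis of Lemma~\ref{lemma:non-incremental-homotopy-theorem}.

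Next I would produce the uniform gain bound $\hat\gamma$ required by Lemma~\ref{lemma:non-incremental-homotopy-theorem}. For fixed $\tau \in [0,1]$ one writes $[\tilde H_1,\tau \tilde H_2] = (\tilde H_1^{-1} + \tau \tilde H_2)^{-1}$ and bounds its SG at zero using the SG calculus of Theorem~\ref{thm:sg_calculus}: since $\tilde H_1(0) = 0$ and $\tilde H_2(0)=0$, all relevant base points are zero, and Theorem~\ref{thm:sg_calculus}.\ref{eq:sg_calculus_inverse}, \ref{eq:sg_calculus_alpha}, \ref{eq:sg_calculus_parallel} (the chord-property hypothesis on one of $\SG_0(H_1),\SG_0(H_2)$ is exactly what licenses part~\ref{eq:sg_calculus_parallel}) yield
\begin{equation*}
    \SG_0\big([\tilde H_1,\tau \tilde H_2]\big) \subseteq \big( \SG_0(\tilde H_1)^{-1} + \tau\, \SG_0(\tilde H_2) \big)^{-1}.
\end{equation*}
The separation hypothesis~\eqref{eq:non-incr-finite_srg_separation} says $\dist\big(\SG_0(H_1)^{-1}, -\tau\SG_0(H_2)\big) \geq r$, hence every point of $\SG_0(\tilde H_1)^{-1} + \tau\SG_0(\tilde H_2)$ has modulus at least $r$, so its inverse set lies in the closed disk $D_{1/r}(0)$ and therefore $\gamma([\tilde H_1,\tau \tilde H_2]) = \rmin\big(\SG_0([\tilde H_1,\tau \tilde H_2])\big) \leq 1/r =: \hat\gamma$, uniformly in $\tau$. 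This is the second and third hypothesis of Lemma~\ref{lemma:non-incremental-homotopy-theorem}, so the lemma gives that $[\tilde H_1,\tilde H_2] : \mathcal{U}_p^n \to \mathcal{U}_q^n$ is well-posed with $\gamma([\tilde H_1,\tilde H_2]) \leq 1/r$. Pulling back through the isometric isomorphisms gives $[H_1,H_2] : L_2^p \to L_2^q$ with $\gamma([H_1,H_2]) \leq 1/r$.

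I expect the main obstacle to be the careful handling of the SG base points and the degenerate cases flagged in Theorem~\ref{thm:sg_calculus} (when one of the SGs reduces to $\emptyset$, $\{0\}$ or $\{\infty\}$), together with checking that the ``$u^\star = 0 \mapsto y^\star = 0$'' bookkeeping is consistent through the inverse and sum steps — in particular that $\SG_0$ of the inverse is taken around the image of $0$, which is again $0$ because $\tilde H_1(0)+\tau\tilde H_2(0)=0$. A secondary point needing care, exactly as in Theorem~\ref{thm:incremental-mimo-srg}'s proof, is that unlike the incremental case we do \emph{not} obtain well-posedness for free from a contraction; it must be assumed for every $\tau$, which is why it appears explicitly in the hypotheses and is simply threaded through Lemma~\ref{lemma:non-incremental-homotopy-theorem}. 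Causality is not claimed here (in contrast to Theorem~\ref{thm:incremental-mimo-srg}), so no extra argument on the extended space is needed.
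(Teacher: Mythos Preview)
Your proposal is correct and follows essentially the same approach as the paper's proof: embed $H_1,H_2$ into $\mathcal{N}(L_2^n)$, use the SG calculus (Theorem~\ref{thm:sg_calculus}) together with the separation hypothesis to obtain the uniform bound $\gamma([\tilde H_1,\tau\tilde H_2])\leq 1/r$, feed this and the assumed well-posedness into Lemma~\ref{lemma:non-incremental-homotopy-theorem}, and then transfer back via the isometric isomorphisms of Lemmas~\ref{lemma:injection-isomorphism}--\ref{lemma:projection-isomorphism}. Your discussion of the $u^\star=0\mapsto y^\star=0$ bookkeeping, the need to assume well-posedness for every $\tau$, and the absence of a causality claim matches the paper's treatment exactly.
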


\begin{proof}%
    The proof mimics the proof of Theorem~\ref{thm:incremental-mimo-srg}. Note that we assume $H_1(0)=0, H_2(0)=0$. It is understood that $\SG_0(H_1):=\SG_{\mathcal{U}_p^n, 0}(\tilde{H}_1)$ and $\SG_0(H_2):=\SG_{\mathcal{U}_q^n, 0}(\tilde{H}_2)$, where $\tilde{H}_1 = \iota_{n \leftarrow q} H_1 \pi_{p \leftarrow n}$ and $\tilde{H}_2 = \iota_{n \leftarrow p} H_1 \pi_{q \leftarrow n}$. Therefore, $\tilde{H}_1 : \mathcal{U}_p^n \to \mathcal{U}_q^n$ and $\tilde{H}_2 : \mathcal{U}_q^n \to \mathcal{U}_p^n$ and~\eqref{eq:non-incr-finite_srg_radii} implies $\gamma(\tilde{H}_1)<\infty$, $\gamma(\tilde{H}_2) < \infty$. 
    
    From~\eqref{eq:non-incr-finite_srg_separation} we know that $\rmin((\SG_0(H_1)^{-1} + \tau \SG_0(H_2))^{-1}) \leq 1/r$, hence $\gamma([\tilde{H}_1,\tau \tilde{H}_2]) \leq 1/r_\tau \leq 1/r$. By Lemma~\ref{lemma:non-incremental-homotopy-theorem} and the causal invertibility assumption, we can conclude that $[\tilde{H}_1, \tilde{H_2}] : \mathcal{U}_p^n \to \mathcal{U}_q^n$ is well-posed with $\gamma([\tilde{H}_1, \tilde{H_2}]) \leq 1/r_1$. 

    Since $\mathcal{U}_p^n$ and $L_2^p$ ($\mathcal{U}_q^n$ and $L_2^q$) are isometrically isomorphic via $\pi_{p \leftarrow n}$ and its inverse $\iota_{n \leftarrow p}$ ($\pi_{q \leftarrow n}$ and $\iota_{n \leftarrow q}$), we can conclude that $[H_1, H_2] : L_2^p \to L_2^q$ is well-posed and $ \gamma([H_1,H_2]) = \gamma([\tilde{H}_1, \tilde{H}_2])$, proving the claim.
\end{proof}

} %

\section{Adding Chords and Arcs in an Improved Way}\label{sec:improved_chord_arc_completions}

\extver{}{
For parallel and series interconnections, we require the definitions of chords and arcs~\cite{ryuScaledRelativeGraphs2022}. Denote the line segment between $z_1, z_2 \in \C$ as $[z_1, z_2] := \{ \alpha z_1 + (1-\alpha) z_2 \mid \alpha \in [0, 1] \}$. Let the right-hand arc, denoted by $\operatorname{Arc}^+(z, \bar{z})$, be the circle segment of the circle that is centered at the origin and intersects $z,\bar{z}$, with real part greater than $\mathrm{Re} (z)$. The left-hand arc, denoted by $\operatorname{Arc}^-(z, \bar{z})$, is similarly defined, but with real part smaller than $\mathrm{Re} (z)$. More precisely
\begin{align*}
    \operatorname{Arc}^+(z, \bar{z}) =& \{ r e^{j(1-2 \alpha)\phi} \\ & \, \mid z = r e^{j \phi}, \phi \in (-\pi, \pi ], \alpha \in [0, 1] \}, \\
    \operatorname{Arc}^-(z, \bar{z}) =& -\operatorname{Arc}^+(-z, -\bar{z}).
\end{align*}

\begin{definition}\label{def:chord_property}
    A set $\mathcal{C} \subseteq \C$ is said to satisfy the chord property if for all $z \in \C$, it holds that $[z, \bar{z}] \subseteq \mathcal{C}$. 
\end{definition}

\vspace{1mm}

\begin{definition}\label{def:arc_property}
    A set $\mathcal{C} \subseteq \C$ is said to satisfy the left-arc (right-arc) property if for all $z \in \C$, it holds that $\operatorname{Arc}^-(z, \bar{z}) \subseteq \mathcal{C}$ ($\operatorname{Arc}^+(z, \bar{z}) \subseteq \mathcal{C}$). If $\mathcal{C}$ satisfies the left-arc and/or right-arc property, it is said to satisfy an arc property.
\end{definition}

When using Proposition~\ref{prop:srg_calculus}\extver{}{ (or Proposition~\ref{prop:sg_calculus})} to analyze \mbox{parallel/series} interconnections of operators, one must make sure that at least one of the SRG\extver{}{ (SG)} bounds involved satisfies the chord/arc property, respectively. If this is not the case, one must add chords or arcs to the relevant SRG\extver{}{ (SG)} bound.} In this appendix, we discuss how to efficiently add \emph{chords} and \emph{arcs} to complex sets that bound the SRG\extver{}{ (or SG)}. 

\thmspace
\begin{definition}
    For $\mathcal{C} \subseteq \C$, define the chord, left-arc ($-$) and right-arc ($+$) completions, respectively, as
    \begin{equation*}
        \maketextstyle \mathcal{C}^\mathrm{c} := \bigcup_{z \in \mathcal{C}} [z, \bar{z}], \quad 
        \mathcal{C}^{\mp} := \bigcup_{z \in \mathcal{C}} \operatorname{Arc}^\mp(z, \bar{z}).
    \end{equation*}
    Note that $\mathcal{C} \subseteq \mathcal{C}^{\mathfrak{s}}$, where $\mathfrak{s} \in \{\mathrm{c}, -, +\}$\extver{, where the arcs $\operatorname{Arc}^\mp$ are defined in~\cite{ryuScaledRelativeGraphs2022}.}{.}  
\end{definition}
\thmspace

\noindent We define improved chord/arc completions as follows.

\thmspace
\begin{definition}\label{def:improved_chord_arc_completions}
    For $\mathcal{C}_1, \mathcal{C}_2 \subseteq \C$, the improved chord completion of the sum is defined as 
    \begin{equation}\label{eq:improved_chord_def}
        \overline{\mathcal{C}_1 + \mathcal{C}_2} := (\mathcal{C}_1^\mathrm{c} + \mathcal{C}_2) \cap (\mathcal{C}_1 + \mathcal{C}_2^\mathrm{c}).
    \end{equation}
    For products, the improved arc completion is defined as
    \begin{equation}\label{eq:improved_arc_def}
        \overline{\mathcal{C}_1 \mathcal{C}_2} := (\mathcal{C}_1^+ \mathcal{C}_2) \cap (\mathcal{C}_1 \mathcal{C}_2^+)  \cap (\mathcal{C}_1^- \mathcal{C}_2) \cap (\mathcal{C}_1 \mathcal{C}_2^-).
    \end{equation}
\end{definition}
\thmspace

The following lemma is useful for bounding the SRG of a sum or product of operators, while adding the least amount of chords/arcs as possible.  

\thmspace
\begin{lemma}\label{lemma:improved_chord_arc_completions}
    Let $R,S : X \to Y$, $T:Y \to Z$ be relations on Hilbert spaces $X,Y,Z$ and linear subspaces $\mathcal{U}\subseteq X, \mathcal{Y}\subseteq Y$ such that $R(\mathcal{U}) \subseteq \mathcal{Y}$ and $u^\star \in X, y^\star = R u^\star$, then
    \begin{align*}
        & \SRG_\mathcal{U}(R+S) \subseteq \overline{\SRG_\mathcal{U}(R) + \SRG_\mathcal{U}(S)}, \\
        & \SRG_\mathcal{U}(T R) \subseteq \overline{\SRG_\mathcal{Y}(T) \SRG_\mathcal{U}(R)}\extver{.}{,}
    \end{align*}
    \extver{}{which holds also for $\SRG_\mathcal{U} \to \SG_{\mathcal{U}, u^\star}$, $\SRG_\mathcal{Y} \to \SG_{\mathcal{Y}, y^\star}$.}
\end{lemma}

\begin{proof}%
    \extver{}{First we prove the sum rule. }By Proposition~\ref{prop:srg_calculus}.\ref{eq:srg_calculus_parallel}, it holds that $\SRG_\mathcal{U}(R+S)$ is contained in both sets in the right hand side of~\eqref{eq:improved_chord_def}, and therefore also in their intersection.\extver{}{ Using Proposition~\ref{prop:sg_calculus}.\ref{eq:sg_calculus_parallel}, the same result holds for $\SG_{\mathcal{U}, u^\star}(R+S)$.} \extver{}{The product rule follows analogously. }By Proposition~\ref{prop:srg_calculus}.\ref{eq:srg_calculus_series}, it holds that $\SRG_\mathcal{U}(T R )$ is contained each of the four sets in the right hand side of~\eqref{eq:improved_arc_def}, and therefore also in their intersection.\extver{}{ Using Proposition~\ref{prop:sg_calculus}.\ref{eq:sg_calculus_series}, the same result holds for $\SG_{\mathcal{U}, u^\star}(T R)$.}
\end{proof}

\extver{}{We note that algorithms for performing sums and products with improved chord/arc completions are available at \href{https://github.com/Krebbekx/SrgTools.jl}{\texttt{github.com/Krebbekx/SrgTools.jl}}.}

\extver{}{

\onecolumn

\section{On Relations and Operators and the Proof of~\eqref{eqs:lfr_closed_loop}}\label{app:proof_lfr_relational_equivalence}

In order to prove relational equivalence in~\eqref{eqs:lfr_closed_loop}, we only need to prove $(\Phi^{-1} - G_\mathrm{zw})^{-1} = \Phi (1- G_\mathrm{zw} \Phi)^{-1}$ as relations, which is the ``problematic'' feedback part, and the only difference between~\eqref{eq:lfr_closed_loop_operator} and~\eqref{eq:lfr_closed_loop}. Then,~\eqref{eqs:lfr_closed_loop} follows trivially by pre- and post-multipying $(\Phi^{-1} - G_\mathrm{zw})^{-1} $ and $ \Phi (1- G_\mathrm{zw} \Phi)^{-1}$ with $G_\mathrm{zu}$ and $G_\mathrm{yw}$, respectively, and adding $G_\mathrm{yu}$.

Let $\mathcal{H}$ be a Hilbert space\footnote{For proving~\eqref{eq:relational-equality}, we only need $\Sigma_1 : \mathcal{X} \to \mathcal{Y}$ and $\Sigma_2: \mathcal{Y} \to \mathcal{X}$, where $\mathcal{X}, \mathcal{Y}$ are vector spaces. The Hilbert space structure is required for the SRG later.}. A \emph{relation} is a set-valued map $\Sigma : \mathcal{H} \supseteq \dom(\Sigma) \to 2^\mathcal{H}$, where $\dom(\Sigma) = \{ u \in \mathcal{H} \mid \emptyset \neq \Sigma u \subseteq \mathcal{H} \}$ is the domain. Equivalently, relations are also uniquely characterized by their \emph{graph}, defined as $\Sigma = \{ (u, y) \mid u \in \dom(\Sigma), \, y \in \Sigma u \} \subseteq \mathcal{H} \times \mathcal{H}$. For operators $\Sigma_1, \Sigma_2$ on $\mathcal{H}$, we want to prove (in a relational sense)
\begin{equation}\label{eq:relational-equality}
    \Sigma_1 (I+ \Sigma_2 \Sigma_1)^{-1} = (\Sigma_1^{-1} + \Sigma_2)^{-1}.
\end{equation}
To prove~\eqref{eq:relational-equality}, we will use the relational identities in~\eqref{eq:relations}.
Now, consider
\begin{align}
    (\Sigma_1^{-1} + \Sigma_2)^{-1} 
    &= \{ (u, y) \mid u \in (\Sigma_1^{-1} + \Sigma_2)y \} \nonumber \\
    &= \{ (u_1 + u_2, y) \mid (y, u_1) \in \Sigma_1^{-1} , \, (y, u_2) \in \Sigma_2 \} \nonumber \\
    &= \{ (u_1 + u_2, y) \mid (u_1, y) \in \Sigma_1 , \, (y, u_2) \in \Sigma_2  \} \nonumber \\
    \textcolor{gray}{(\text{If }\Sigma_1, \Sigma_2 \text{ are operators })} &= \{ (u_1 + u_2, y) \mid y = \Sigma_1 u_1 , \, u_2 = \Sigma_2 y \}  \label{eq:relation-srg-expression}
\end{align}
and
\begin{align}
    \Sigma_1 (I + \Sigma_2 \Sigma_1)^{-1} 
    &= \{ (u, y) \mid y \in \Sigma_1 (I + \Sigma_2 \Sigma_1)^{-1} u \} \nonumber \\
    &= \{ (u, y) \mid \exists e \text{ s.t. } (e,y) \in \Sigma_1, \, (u, e) \in (I+\Sigma_2 \Sigma_1)^{-1} \} \nonumber \\
    &= \{ (u, y) \mid \exists e \text{ s.t. } (e,y) \in \Sigma_1, \, (e, u) \in (I+\Sigma_2 \Sigma_1) \} \nonumber \\
    &= \{ (u_1+u_2, y) \mid \exists e \text{ s.t. }  (e,y) \in \Sigma_1, \, \textcolor{gray}{\underbrace{\textcolor{black}{(e, u_1) \in I}}_{\implies e=u_1}}, \, (e, u_2) \in \Sigma_2 \Sigma_1 \} \nonumber \\
    &= \{ (u_1 + u_2, y) \mid (u_1,y) \in \Sigma_1, \, (u_1, u_2) \in \Sigma_2 \Sigma_1 \} \nonumber \\
    \textcolor{gray}{(\text{If }\Sigma_1, \Sigma_2 \text{ are operators })} &=  \{ (u_1 + u_2, y) \mid y = \Sigma_1 u_1, \, u_2 = \Sigma_2 \Sigma_1 u_1 = \Sigma_2 y \} \label{eq:relation-operator-expression}
\end{align}
which proves~\eqref{eq:relational-equality} by comparing~\eqref{eq:relation-srg-expression} and~\eqref{eq:relation-operator-expression}.

When modeling system interconnections as relations, we are often interested in single-valued relations with full domain, i.e., for all $u \in \dom(\Sigma) = \mathcal{H}$, one has $\Sigma u = \{ y \}$ for some $y \in \mathcal{H}$. Single-valued relations are identified uniquely as \emph{operators} via the understanding that $y = \Sigma u$, i.e., the output is an element in $\mathcal{H}$ instead of a subset of $\mathcal{H}$. 

We can use SRG analysis to investigate if~\eqref{eq:relational-equality} is a \emph{single-valued} relation, i.e., an \emph{operator}. If $\Sigma$ is a relation on a Hilbert space $\mathcal{H}$, then by definition of the SRG~\cite{ryuScaledRelativeGraphs2022}
\begin{equation*}
    \Sigma \text{ single-valued } \iff \infty \notin \SRG(\Sigma).
\end{equation*}

Now, consider $\Sigma_1, \Sigma_2 : \mathcal{H} \to \mathcal{H}$, which implies that their SRGs do not contain $\infty$ as they are operators. We have the following equivalent SRG statements
\begin{subequations}
\begin{align}
    & \SRG(\Sigma_1)^{-1} \cap - \SRG(\Sigma_2) = \emptyset \iff \label{eq:srg_sep_condition1} \\ 
    & 0 \notin \SRG(\Sigma_1)^{-1} + \SRG(\Sigma_2) \iff \label{eq:srg_sep_condition2} \\ 
    & 0 \notin 1 + \SRG(\Sigma_2) \SRG(\Sigma_1), \label{eq:srg_sep_condition3}
\end{align}
\end{subequations}
which are easily proven by contradiction: 
\begin{itemize}
    \item \eqref{eq:srg_sep_condition1} $\iff$  \eqref{eq:srg_sep_condition2}: note that $\SRG(\Sigma_1)^{-1} \cap - \SRG(\Sigma_2)  \neq \emptyset \iff $ $ \exists z\in \C: z \in -\SRG(\Sigma_2) \text{ and } z \in \SRG(\Sigma_1)^{-1} \iff 0 \in \SRG(\Sigma_1)^{-1} + \SRG(\Sigma_2)$, which proves the claim by contradiction.
    \item \eqref{eq:srg_sep_condition2} $\iff$ \eqref{eq:srg_sep_condition3}: using the fact that $\infty \notin \SRG(\Sigma_1),\SRG(\Sigma_2)$ we have $0 \in \SRG(\Sigma_1)^{-1} + \SRG(\Sigma_2) \iff \exists 0 \neq z \in \C: -z^{-1} \in \SRG(\Sigma_1) \text{ and } z \in \SRG(\Sigma_2) \iff 0 \in 1+ \SRG(\Sigma_2) \SRG(\Sigma_1)$, proving the claim by contradiction.
\end{itemize}

Note that we have shown an interesting feature of the feedback interconnection $y = \Sigma_1 e, e = u-\Sigma_2 y$: if the SRG separation condition above holds, then this implies that 1) the relation $(u, y)$ in~\eqref{eq:relational-equality} is single-valued and 2) the relation $(u, e)$ defined by $(I + \Sigma_2 \Sigma_1)^{-1}$ is single-valued. 

When working non-incrementally, i.e. considering the SG, then the SRG cannot be used anymore as a tool to check single-valuedness of~\eqref{eq:relational-equality}. Instead, one assumes \emph{well-posedness} for the feedback system~\cite{vandeneijndenScaledGraphsReset2024,megretskiSystemAnalysisIntegral1997}, which includes single-valuedness of~\eqref{eq:relational-equality}.

\twocolumn

} %

\bibliographystyle{plain} 
\bibliography{bibliography}

\scriptsize

\vspace{\baselineskip}

\begin{wrapfigure}{l}{20mm} 
    \includegraphics[width=1in,clip,keepaspectratio]{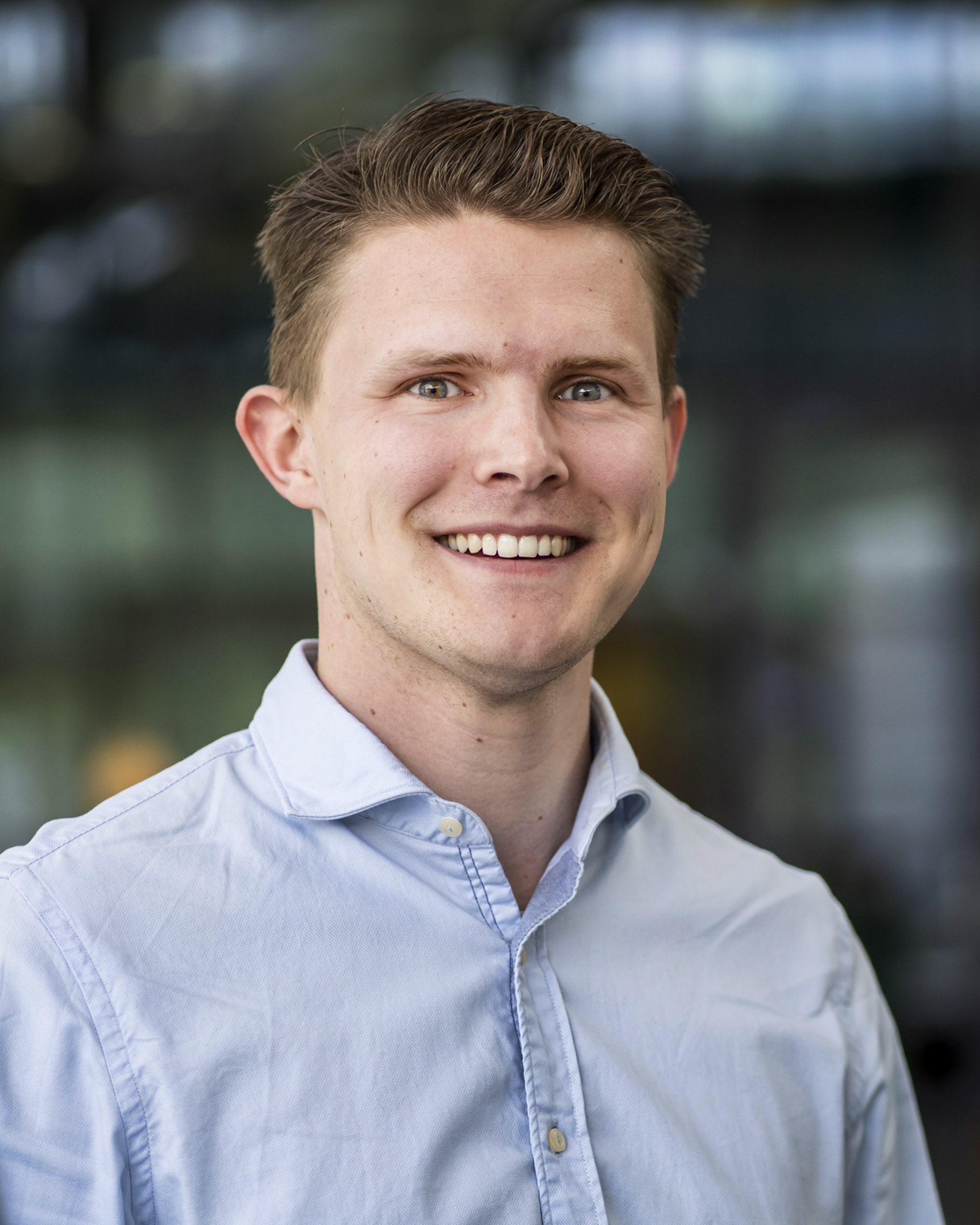}
\end{wrapfigure}
\noindent \textbf{Julius P.~J. Krebbekx} received his B.Sc. degree (Cum Laude) in Applied Physics from Eindhoven University of Technology in 2020 and double M.Sc. degree (both Cum Laude) in Theoretical Physics and Mathematical Sciences from Utrecht University in 2023. He is currently pursuing a Ph.D. degree at the Control Systems Group, Department of Electrical Engineering, Eindhoven University of Technology. His main research interests include the analysis of stability and performance of nonlinear systems and performance shaping for nonlinear systems. He has been awarded an Outstanding Student Paper Award at the 2025 European Control Conference.

\vspace{\baselineskip}

\begin{wrapfigure}{l}{20mm} 
    \includegraphics[width=1in,clip,keepaspectratio]{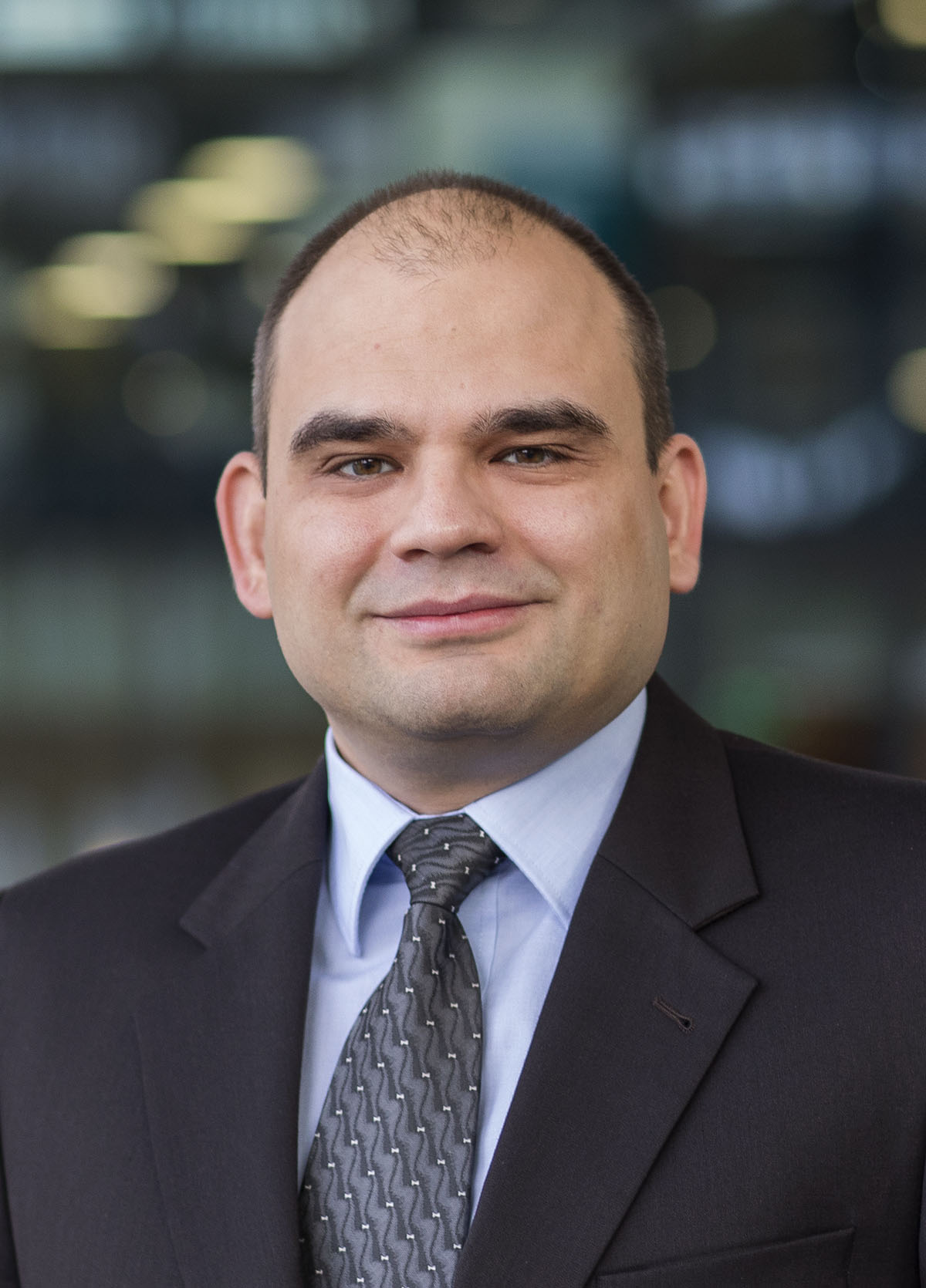}
\end{wrapfigure}
\noindent \textbf{Roland T\'oth} received his Ph.D. degree with Cum Laude distinction at the Delft University of Technology (TUDelft) in 2008.  He was a post-doctoral researcher at TUDelft in 2009 and at the Berkeley Center for Control and Identification, University of California in 2010. He held a position at TUDelft in 2011-12, then he joined to the Control Systems (CS) Group at the Eindhoven University of Technology (TU/e). Currently, he is a Full Professor at the CS Group, TU/e and a Senior Researcher at the Systems and Control Laboratory, HUN-REN Institute for Computer Science and Control (SZTAKI) in Budapest, Hungary. He is Senior Editor of the IEEE Transactions on Control Systems Technology and Assoicate Editor of Automatica. His research interests are in identification and control of linear parameter-varying (LPV) and nonlinear systems, developing data-driven and machine learning methods with performance and stability guarantees for modeling and control, model predictive control and behavioral system theory. On the application side, his research focuses on advancing reliability and performance of precision mechatronics and autonomous robots/vehicles with nonlinear, LPV and learning-based motion control. He has received the TUDelft Young Researcher Fellowship Award in 2010, the VENI award of The Netherlands Organization for Scientific Research in 2011, the Starting Grant of the European Research Council in 2016 and the DCRG Fellowship of Mathworks in 2022.

\vspace{\baselineskip}

\begin{wrapfigure}{l}{20mm} 
    \includegraphics[width=1in,clip,keepaspectratio]{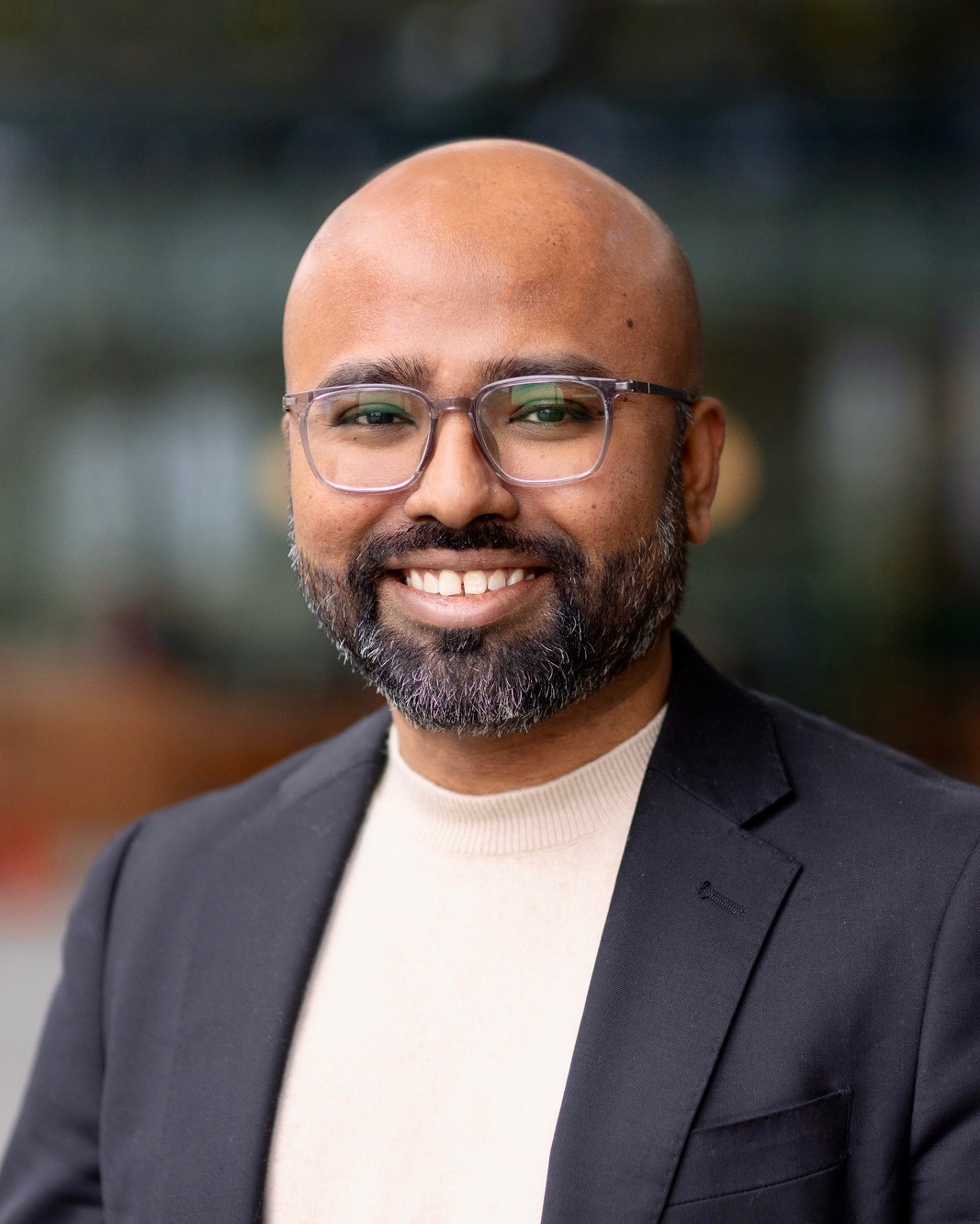}
\end{wrapfigure}
\noindent \textbf{Amritam Das} received MSc. degree (2016) in Systems and Control and Ph.D. degree (2020) in Electrical Engineering from Eindhoven University of Technology. He held the position of research associate at the the University of Cambridge (2020-2021) where he was affiliated with Sideny Sussex College as a college research associate. During 2021-2023, he was a post-doctoral researcher at KTH Royal Institute of Technology. Since February 2023, He is an assistant professor at the Control Systems group of Eindhoven University of Technology. He currently serves as an associate editor of the European Journal of Control and a member of IEEE CSS conference editorial board. He is also a member of IEEE/IFAC Technical Committee on Distributed Parameter Systems. His research interests are nonlinear control, physics-informed learning, control of PDEs, and model reduction.

\end{document}